\newcommand{\mrash}{{\rm Mr.ASH}\xspace}
\newcommand{\lasso}{{\rm Lasso}\xspace} 
\newcommand{\scad}{{\rm SCAD}\xspace} 
\newcommand{\mcp}{{\rm MCP}\xspace}
\newcommand{\lzero}{{\rm L0Learn}\xspace} 
\newcommand{\bayesb}{{\rm BayesB}\xspace}
\newcommand{\susie}{{\rm SuSiE}\xspace} 
\newcommand{\varbvs}{{\rm varbvs}\xspace}
\DeclareMathOperator*{\argmin}{argmin}
\DeclareMathOperator*{\argmax}{argmax}
\newcommand{\norm}[1]{\lVert #1\rVert}
\newcommand{\ba}{\[
	\begin{aligned}}
\newcommand{\ea}{
	\end{aligned}
	\]}
\newcommand{\baa}{\begin{equation}
	\begin{aligned}}
\newcommand{\eaa}{
	\end{aligned}
	\end{equation}}
\newcommand{\iid}{\overset{i.i.d.}{\sim}}
\newcommand{\tr}[1]{{\rm tr} (#1)}
\DeclareFontFamily{U}{mathx}{\hyphenchar\font45}
\DeclareFontShape{U}{mathx}{m}{n}{<-> mathx10}{}
\DeclareSymbolFont{mathx}{U}{mathx}{m}{n}
\DeclareMathAccent{\widebar}{0}{mathx}{"73}
\newcolumntype{M}[1]{>{\raggedright\arraybackslash}m{#1}}
\def\q{q}
\def\FNM{F^{\rm NM}}
\def\DKL{D_{\,{\rm KL}}}
\def\G{\mathcal{G}}
\def\Gfm{\G(\sigma_1^2,\dots,\sigma_K^2)}
\def\pk{\pi_k}
\def\Q{\mathcal{Q}}
\def\N{N}
\def\Gsmn{\G_{\mathrm{SMN}}}
\def\bhatbold{\hat{\b}}
\def\qhat{\hat{q}}
\def\ghat{\hat{g}}
\def\shat{\hat{\sigma}^2}
\def\rbar{\bar{\mathbf{r}}}
\def\pii{\bm{\pi}}
\def\b{\mathbf{b}}
\def\X{\mathbf{X}}
\def\y{\mathbf{y}}
\def\bj{b_j}
\def\bbar{\bar{\mathbf{b}}}
\def\shrinkf{S_{f,\sigma}}
\def\shrinkg{S_{g_\sigma,\sigma}}
\def\xj{\mathbf{x}_j}
\def\pprior{p_{\mathrm{prior}}}
\begin{document}
	
\title{A Flexible Empirical Bayes Approach to Multiple Linear
  Regression, and Connections with Penalized Regression}
		
\author{\name Youngseok Kim \email youngseok@uchicago.edu \\
	\addr{Department of Statistics\\
	University of Chicago\\
	Chicago, IL 60637, USA}
	\AND
	\name Wei Wang \email weiwang@galton.uchicago.edu \\
	\addr{Department of Statistics\\
	University of Chicago\\
	Chicago, IL 60637, USA}
	\AND
	\name Peter Carbonetto \email pcarbo@uchicago.edu \\
	\addr{Research Computing Center and Department of Human Genetics\\
	University of Chicago\\
	Chicago, IL 60637, USA}
	\AND
	\name Matthew Stephens \email mstephens@uchicago.edu \\
	\addr{Department of Statistics and Department of Human Genetics\\
	University of Chicago\\
	Chicago, IL 60637, USA}}
\editor{Boaz Nadler}
\date{}

\maketitle
	
\begin{abstract}%
We introduce a new empirical Bayes approach for large-scale
multiple linear regression. Our approach combines two key ideas: (i)
the use of flexible ``adaptive shrinkage'' priors, which approximate
the nonparametric family of scale mixture of normal distributions by a
finite mixture of normal distributions; and (ii) the use of
variational approximations to efficiently estimate prior
hyperparameters and compute approximate posteriors. Combining
  these two ideas results in fast and flexible methods, with
  computational speed comparable to fast penalized regression methods
  such as the Lasso, and with competitive prediction accuracy across a
  wide range of scenarios.  Further, we provide new results that
  establish 
%
%
conceptual connections between our empirical Bayes methods and
penalized methods. Specifically, we show that the posterior mean
from our method solves a penalized regression problem, with the form
of the penalty function being learned from the data by directly
solving an optimization problem (rather than being tuned by
cross-validation). Our methods are implemented in an R package, {\tt
  mr.ash.alpha}, available from
\url{https://github.com/stephenslab/mr.ash.alpha}.
\end{abstract}

\begin{keywords}
Empirical Bayes, variational inference, normal means, penalized linear
regression, nonconvex optimization
\end{keywords}

\section{Introduction}

Multiple linear regression is one of the oldest statistical methods
for relating an outcome variable to predictor variables, dating back
at least to the eighteenth century \citep{stigler1984studies}. In
recent decades, data sets have grown rapidly in size, with the number
of predictor variables often exceeding the number of
observations. Fitting even simple models such as multiple linear
regression to large data sets raises interesting research
questions. These include questions about regularization and
  prediction (e.g., how to estimate the parameters to optimize
  out-of-sample prediction accuracy), and questions about variable
  selection and inference (e.g., how to choose the coefficients in the
  regression that are non-zero). In this paper, we focus on the
  former.

Many different approaches for regularization and prediction have been
proposed. Most fall into one of two types: penalized linear regression
(PLR) methods based on different penalized least-squares criteria
\citep[e.g.,][]{hoerl1970ridge, tibshirani1996regression,
  fan2001variable, miller2002subset, zou2005regularization,
  zhang2010nearly, hazimeh2018fast,trimmedlasso} and Bayesian
approaches based on different priors and computational methods
\citep[e.g.,][]{mitchell1988bayesian, george1993variable,
  meuwissen2001prediction, park2008bayesian, hans2009,
  carvalho2010horseshoe, li2010bayesian, griffin2010inference,
  guan-2011, habier2011extension, carbonetto2012scalable,
  zhou2013polygenic, drugowitsch2013variational, wang2018simple,
  rovckova2018spike, ray2019variational, zabad2023fast}.
  
These different approaches have different strengths and
weaknesses. For example, ridge regression \citep[an $L_2$-penalty on
  the coefficients;][]{hoerl1970ridge, tikhonov-1963} is simple,
involving a convex optimization problem and a single tuning parameter,
and performs well in ``dense'' settings (many predictors with non-zero
effects). However, it does not do well in ``sparse'' settings where
a small number of non-zero coefficients dominate. The Lasso \citep[an
  $L_1$-penalty on the coefficients;][]{tibshirani1996regression} is
similarly computationally convenient, and behaves better than ridge
regression in sparse settings. However, prediction accuracy of the
Lasso is limited by its tendency to ``overshrink'' large effects
\cite[e.g.,][]{su2017false}. The Elastic Net
\citep{zou2005regularization} combines some of the advantages of ridge
regression and the Lasso, and in its most general form includes both
as special cases; however, the Elastic Net also introduces an
additional tuning parameter that results in a non-trivial additional
computation expense.

Nonconvex penalties---examples include the $L_0$-penalty
\citep{miller2002subset, hazimeh2018fast}, the smoothly clipped
absolute deviation (SCAD) penalty \citep{fan2001variable} and the
minimax concave penalty (MCP) \citep{zhang2010nearly}---can also give
better prediction performance in sparse settings, but this comes with
the challenge of solving a nonconvex optimization problem.

Bayesian methods, by using flexible priors, have the potential to
achieve excellent prediction accuracy in both sparse and dense
settings
\citep[e.g.,][]{park2008bayesian,hans2009,griffin2010inference,
  li2010bayesian,guan-2011,zhou2013polygenic,zeng-2018}, but have some
practical drawbacks; notably, model fitting typically involves
performing Markov chain Monte Carlo (MCMC) with a potentially high
computational burden. Further, convergence of the Markov chain can be
difficult to diagnose, particularly for non-expert users. In summary,
when choosing among existing methods, one must confront tradeoffs
between prediction accuracy, flexibility and computational
convenience.

In this paper, we develop an approach to multiple linear regression
that aims to combine the best features of existing methods: it is
fast, comparable in speed to the cross-validated Lasso; it is
flexible, capable of adapting to sparse and dense settings; it is
self-tuning, with no need for user-specified hyperparameters; and, in
our numerical studies, its prediction accuracy was competitive with
the best methods against which we compared, in a wide range of
regression settings, including both dense and sparse
effects.

This consistent competitive performance across a wide range of
dense/sparse settings is particularly valuable in practice, because in
practical applications one does not know whether signals are dense or
sparse (or perhaps somewhere in between). Thus, in practice, users
will be reluctant to trust a method that does not perform consistently
well, even if it performs excellently (even optimally) in certain
settings. We believe that the consistently strong performance of our
method across settings, combined with its computational speed, make it
attractive for practitioners looking to apply large-scale multiple
linear regression to real problems.


Our method takes an {\em empirical Bayes} (EB) approach
\citep{robbins1964, efron2019oracle, hartleyrao1967,
carlin2000empirical, stephens2016false, casella2001empirical} to
multiple regression; that is, it assigns a prior to the coefficients
in the regression method, and this prior is learned from the
data. The EB approach is, in many ways, a natural approach for
attempting to attain the benefits of Bayesian methods while
addressing some of their computational challenges. Indeed, EB for
the multiple regression problem is far from new; for example, the
Bayesian Lasso \citep{park2008bayesian, blasso} takes an EB approach
with a Laplace prior in which the $\lambda$ parameter in the Laplace
prior is estimated by maximizing the (marginal) likelihood. Other EB
approaches use a normal prior \citep{nebebe1986bayes}, a
point-normal (``spike-and-slab'') prior
\citep{george2000calibration}, and a point-double-exponential prior
\citep{yuan2005efficient}. (See also \citealt{vandewiel2019learning}
for a review of other EB approaches.) However, previous EB
approaches to multiple regression have either focussed on relatively
inflexible priors, or have been met with considerable computational
challenges.


Here, we propose a different EB approach that is both more flexible
than these previous EB approaches and more computationally
scalable. This new EB approach has two key components. First, to
increase flexibility, we borrow the ``adaptive shrinkage'' priors used
in \cite{stephens2016false}; specifically, we use the ``scale mixture
of normals'' priors. This prior family includes most of the popular
priors that have been used in Bayesian regression, including normal,
Laplace, point-normal, point-Laplace, point-$t$, normal-inverse-gamma,
Dirichlet-Laplace and horseshoe priors \citep{hoerl1970ridge,
  george1997approaches, meuwissen2001prediction, meuwissen2009fast,
  habier2011extension, dirichlet-laplace}. Increasing model
flexibility typically means greater computational expense, but in this
case the use of the adaptive shrinkage priors actually simplifies many
computations, essentially because the scale mixture family is a convex
family. Second, to make computations tractable, we adapt the
variational approximation methods for multiple regression from
\cite{carbonetto2012scalable}.
%
%
The main limitation of the variational approximation approach is that,
in sparse settings with very highly correlated predictors, it will
give only one of the correlated predictors a non-negligible
coefficient \citep{carbonetto2012scalable}. This limitation, which is
shared by several other existing methods, including the Lasso and
$L_0$-penalized regression, does not greatly affect prediction
accuracy. However, it does limit the conclusions that can be drawn
about the selected variables. Consequently, other methods
\cite[e.g.,][]{wang2018simple} may be preferred when the main goal is
variable selection for scientific interpretation rather than
prediction. Since our approach combines EB ideas with variational
approximations, we refer to it as a ``variational EB'' (VEB) approach.
 
While variational methods have previously been used to fit Bayesian
linear regression models \citep{girolami-2001, logsdon-2010,
  carbonetto2012scalable, wang2018simple, You2014, Ren2011}, they have
not been used to implement an EB method, and not with the flexible
class of priors we consider here. (Independently,
\citealt{zabad2023fast} recently used a VEB approach, but with less
flexible priors.) Our work arises from the combination of two earlier
ideas: (1) the use of variational approximation techniques for fast
posterior computation in large-scale linear regression
\citep{carbonetto2012scalable}; and (2) the use of a flexible class of
priors \eqref{def:finitemixture}, which was originally proposed in
\cite{stephens2016false} for performing EB inference in a simpler, but
related, problem: the ``normal means problem'' \citep{efron-1973,
  johnstone2004needles, sun2018solving, castillo2012needles,
  bhadra2019lasso}. The combination of these two ideas results in
methods that are simpler, faster, more flexible, and often more
accurate than those in \cite{carbonetto2012scalable}.


Finally, another key contribution of our paper is to provide
a conceptual bridge between PLR methods and Bayesian
methods. Specifically, we show that our VEB approach is actually a
PLR method, where the penalty function is learned from the data by
{\em directly solving an optimization problem} rather than being
tuned by cross-validation (CV). This result is not only conceptually
interesting, but opens the door to other potential algorithms (e.g.,
gradient descent) for the VEB problem. Tuning multiple parameters by
solving an optimization problem is also more practical than CV; for
example, our VEB approach has a similar computational cost to
methods such as the Lasso that tune a single parameter by CV, and is
substantially faster than methods such as the Elastic Net that tune
two or more parameters by CV.



\subsection{Organization of the Paper}

The remainder of the paper is organized as follows.
Section~\ref{sec:background} gives preliminary background on the
normal means model and introduces some notation.
Section~\ref{sec:mrashmodel} describes our VEB methods and
optimization algorithms in detail. Section~\ref{sec:penalty} makes
connections between our VEB approach and penalized
approaches. Section~\ref{sec:experiment} gives results from numerical
studies comparing prediction performance of different methods for
multiple linear regression, including our VEB approach.
Section~\ref{sec:conclusion} summarizes the contributions of this work
and discusses future directions.

\subsection{Notations and Conventions}

We write vectors in bold, lowercase letters (e.g., $\mathbf{b}$) and
matrices are written in bold, uppercase letters (e.g., $\X$) We use
$\mathbb{R}^n$ to denote the set of real-valued vectors of length $n$,
$\mathbb{R}_{+}^n$ for the set of real non-negative vectors of length
$n$, $\mathbb{R}^{m \times n}$ for the set of real $m \times n$
matrices, and $\mathbb{S}^n = \{ \mathbf{x} \in \mathbb{R}_{+}^{n} :
\sum_{i=1}^n x_i = 1 \}$ denotes the $n$-dimensional simplex. We use
$\mathbf{x}_j$ to denote the $j$th column of matrix $\X$. We write
sets and families in calligraphic font, e.g., $\mathcal{G}$. We use
$N({\bf x}; {\bm\mu}, {\bm\Sigma})$ to denote the probability density
of the multivariate normal distribution at ${\bf x} \in \mathbb{R}^n$
with mean ${\bm\mu} \in \mathbb{R}^n$ and $n \times n$ covariance
matrix ${\bm\Sigma}$. We use ${\bf I}_n$ to denote the $n \times n$
identity matrix. We use $\norm{\mathbf{x}} = \sqrt{\mathbf{x}^T
  \mathbf{x}}$ to denote the $L_2$-norm of vector $\mathbf{x} \in
\mathbb{R}^n$, and $\|\mathbf{x}\|_{\infty} =
\max\{|x_1|, \ldots, |x_n|\}$ denotes the $L$-infinity norm of
$\mathbf{x}$. We use $\triangleq$ to indicate definitions.

\section{Preliminaries: the Empirical Bayes Normal Means Model}
\label{sec:background} 
\label{subsec:ebnm}

The computations for our approach are closely related to those for a
simpler model known as the ``normal means'' (NM) model. This section
reviews this model and introduces notation that will be used later.

\subsection{The Normal Means Model}

The normal means model is a model for a sequence $y_1, \dots, y_p$ of
observations in which each observation $y_j$ is normally distributed
with unknown mean $\bj$ and known variance $\sigma^2$:
\begin{equation} 
\label{model:normal_means}
y_j \mid \bj, \sigma^2 \sim N(\bj, \sigma^2), 
\quad j=1, \dots, p.
\end{equation}
This can be viewed as a special case of multiple linear regression in
which the covariates are orthogonal and the residual variance is known.
(Specifically, it is equivalent to \eqref{model:bvs} below with $\X =
{\bf I}_n$ and $\sigma^2$ known.)

\subsection{The Normal Means Model with Adaptive Shrinkage Priors}

\cite{stephens2016false} considers an EB version of the NM model that
assumes the $b_j$ are {\em i.i.d.} from some prior distribution $g$
that is to be estimated from the observed data. Specifically,
\cite{stephens2016false} considers priors that are scale mixtures of
normals; that is, $g \in \Gfm$, where
\begin{equation}
\label{def:finitemixture}
\Gfm \triangleq 
\bigg\{g = \sum_{k=1}^K \pi_k N(0, \sigma_k^2) :
\mathbf{\pi} \in \mathbb{S}^K \bigg\},
\end{equation}
and where $0 \leq \sigma_1^2 < \cdots < \sigma_K^2 < \infty$ is a
pre-specified grid of component variances, and $\pi_1, \dots, \pi_K$
are unknown mixture proportions. Typically, $\sigma_1^2=0$ so that
$\Gfm$ includes sparse prior distributions. (We define $N(0, 0)$ to be
the Dirac ``delta'' mass at zero, commonly denoted as $\delta_0$.) By
making the grid of variances sufficiently wide and dense, the prior
family $\Gfm$ can approximate, with arbitrary accuracy, the
nonparametric family of all the scale mixtures of zero-mean normal
distributions. This nonparametric family, which we denote by $\Gsmn$,
is very flexible, and includes most popular distributions used as
priors in Bayesian regression models, including normal (``ridge
regression'') \citep{hoerl1970ridge}, point-normal (``spike and
slab'') \citep{chipman2001practical, george1993variable,
  george1997approaches, mitchell1988bayesian}, double-exponential or
Laplace \citep{Figueiredo2003, park2008bayesian, hans2009,
  tibshirani1996regression, li2010bayesian}, horseshoe
\citep{carvalho2010horseshoe}, normal-gamma prior
\citep{griffin2010inference}, normal-inverse-gamma prior
\citep{meuwissen2009fast, habier2011extension}, mixture of two normals
(BSLMM) \citep{zhou2013polygenic}, and the mixture of four
zero-centered normals with different variances suggested by
\cite{moser2015simultaneous}.

\cite{stephens2016false} refers to the priors
\eqref{def:finitemixture} as ``adaptive shrinkage'' priors. Here, we use
these adaptive shrinkage priors, but assume a prior distribution in
which the {\em scaled} coefficients, $\bj/\sigma$, are {\em i.i.d.}
from $g$,
\begin{equation}
\label{model:prior}
\bj \mid g, \sigma^2 \, \iid \, g_{\sigma},
\end{equation}
where $g_{\sigma}(\bj) \triangleq g(\bj/\sigma)/\sigma$. We use this
prior on the scaled coefficients because in the regression setting it
may help to reduce issues with multi-modality; see
\citealt{park2008bayesian} for example. The scaled prior
\eqref{model:prior} also provides computational benefits in the
``fully Bayesian'' regression setting; see, for example,
\citealt{chipman2001practical, george1997approaches, liang-2008} for
arguments in favour of the scaled prior. All our methods can also be
applied, with minor modifications, to work with the unscaled prior
$\bj \iid g$.

\subsubsection{Augmented-variable Representation}

It is helpful to think of $g \in \G(\sigma_1^2, \dots, \sigma_K^2)$ as
determining a set of mixture proportions ${\bm \pi} = (\pi_1, \dots,
\pi_K)$ and component variances $\sigma_1^2, \dots, \sigma_K^2$ of a
normal mixture. When $g \in \G(\sigma_1^2, \dots, \sigma_K^2)$, the
prior \eqref{model:prior} can also be written as
\begin{equation} 
\label{eqn:normalmixture2}
b_j \mid g,\sigma^2 \iid\, 
\sum_{k=1}^K \pi_k N(0, \sigma^2\sigma_k^2).
\end{equation}
It is also convenient for some of the derivations to introduce the standard
augmented-variable representation of this mixture:
\begin{equation}
\label{eqn:latent}
\begin{aligned}
p(\gamma_j = k \mid g) &= \pi_k \\
b_j \mid g, \sigma^2, \gamma_j = k &\sim N(0, \sigma^2\sigma_k^2),
\end{aligned}
\end{equation}
where the latent variable $\gamma_j \in \{1, \dots, K\}$ indicates
which mixture component gave rise to $b_j$.

\subsection{Empirical Bayes for the Normal Means Model}

\cite{stephens2016false} provides EB methods to fit the normal means
model. These methods proceed in two steps: estimate $g$ (Step 1);
compute the posterior distribution for $\b$ given the estimated $g$
(Step 2). Step 1 is simplified by the use of a fixed grid of variances
in \eqref{eqn:normalmixture2}, which means that only the mixture
proportions ${\bm\pi}$ need to be estimated. This is done by
maximizing the marginal log-likelihood:
\begin{align}
\hat{\bm\pi} &= \argmax_{\bm\pi \,\in\, \mathbb{S}^K}
\, \log p(\y \mid g, \sigma^2) \nonumber \\
&= 
\argmax_{\bm\pi \,\in\, \mathbb{S}^K}
\sum_{j=1}^p \log \sum_{k=1}^K \pi_k L_{jk},
\label{eqn:pihat}
\end{align}
where 
\begin{align} 
L_{jk} &\triangleq
p(y_j \mid g, \sigma^2, \gamma_j = k) \nonumber \\
&= N(y_j; 0, \sigma^2 + \sigma^2\sigma_k^2),
\label{eqn:L}
\end{align}
and $\y \triangleq (y_1, \ldots, y_p)$. This is a convex optimization
problem, and can be solved efficiently using convex optimization
techniques \citep{koenker2014convex, kim2018fast}, or simply by
iterating the following Expectation Maximization (EM) updates
\citep{dempster1977maximum}:
\begin{align}
\mbox{E-step} &\quad \phi_{jk} \leftarrow
\phi_{k}(y_j; g, \sigma^2) \triangleq
p(\gamma_j = k \mid y_j, g, \sigma^2) =
\frac{\pi_k L_{jk}}{\sum_{k'=1}^K \pi_{k'} L_{jk'}},
\label{eqn:NM-E} \\
\mbox{M-step} &\quad \pi_k \leftarrow
\frac{1}{p} \sum_{j=1}^p \phi_{jk},\quad k = 1, \dots, K.
\label{eqn:NM-M}
\end{align}
The posterior mixture assignment probabilities $\phi_{jk}$ are
sometimes referred to as the ``responsibilities''.

Step 2, computing the posterior distribution, is also straightforward,
again due to the independence of the observations and the conjugacy of
the normal (mixture) prior with the normal likelihood:
\begin{align} 
p_{\rm post}^{\rm NM}(\bj, \gamma_j=k \mid y_j, g, \sigma^2) &=
p(b_j \mid y_j, g, \sigma^2, \gamma_j=k) \,
p(\gamma_j = k \mid y_j, g, \sigma^2) 
\nonumber \\
&= \phi_{jk} \, N(b_j; \mu_{jk}, s_{jk}^2),
\label{def:posterior_mixture}
\end{align}
where
\begin{align} 
\mu_{jk} \triangleq \mu_k(y_j;g,\sigma^2) &=
{\textstyle \frac{\sigma_k^2}{1+\sigma_k^2}} \times y_j, \label{def:muk} \\
s_{jk}^2 \triangleq s_k^2(y_j;g,\sigma^2) &=
{\textstyle \frac{\sigma_k^2}{1+\sigma_k^2}} \times \sigma^2. \label{def:sk}
\end{align}
(Although the posterior variances do not depend on $y_j$, we write
them as $s_k^2(y_j; g, \sigma^2)$ for notational consistency.)  Summing
the component posterior \eqref{def:posterior_mixture} over $k$ then
yields an analytic expression for the posterior of $\bj$,
\begin{equation} \label{eqn:post_ebnm}
p_{\rm post}^{\rm NM}(\bj \mid y_j, g, \sigma^2) =
\sum_{k=1}^K \phi_{jk} \, N(b_j; \mu_{jk}, s_{jk}^2).
\end{equation}

\section{Variational Empirical Bayes Linear Regression}
\label{sec:mrashmodel}

\subsection{Empirical Bayes Linear Regression}
\label{subsec:eblr}

We consider the multiple linear regression model,
\begin{equation}
\label{model:bvs}
\y \mid \X, \b, \sigma^2 \sim N(\X\b, \sigma^2 {\bf I}_n),
\end{equation}
where $\y \in \mathbb{R}^n$ is a vector of responses, ${\bf X} \in
\mathbb{R}^{n \times p}$ is a design matrix whose columns contain
predictors ${\bf x}_1, \ldots, \mathbf{x}_p \in \mathbb{R}^n$, $\b \in
\mathbb{R}^p$ is a vector of regression coefficients, and $\sigma^2
\geq 0$ is the variance of the residual errors. While an intercept is
not explicitly included in \eqref{model:bvs}, it is easily accounted
for by centering $\y$ and the columns of $\X$ prior to model fitting
\citep{chipman2001practical}; see also
Section~\ref{sec:practicalconsiderations}. To simplify presentation,
we will assume throughout the main text of the paper that the columns
of $\X$ are rescaled so that $\norm{\xj} = 1$, for $j = 1, \dots, p$.
However, all our methods and results can be extended to the unscaled
case; Appendix~\ref{appendix:derivation} includes these
extensions.

Taking an EB approach, as in the NM model above, we assume the scaled
regression coefficients, $b_j/\sigma$, are {\em i.i.d} from some prior
$g$, where $g$ is to be estimated from the observed data. Although our
methods apply more generally, we focus on the adaptive shrinkage
priors \eqref{eqn:normalmixture2} because they are flexible and
computationally convenient.

A standard EB approach to fitting the regression model \eqref{model:bvs}
with priors \eqref{model:prior} would, similar to above, involve
the following two steps:
\begin{enumerate}

\item Estimate $g, \sigma^2$ by maximizing the marginal likelihood:
\begin{align}
(\hat{g}, \hat{\sigma}^2) &=
\argmax_{g \,\in\, \mathcal{G}, \,\sigma^2 \,\in\, \mathbb{R}_{+}}
p(\y \mid \X, g, \sigma^2) \nonumber \\
&= \argmax_{g \,\in\, \mathcal{G},\, \sigma^2 \,\in\, \mathbb{R}_{+}}
\log {\textstyle \int p(\y \mid \X, \b, \sigma^2) \,
     p(\b \mid g, \sigma^2) \, d\b.}
\label{eqn:prior_estimation}
\end{align}

\item Infer $\b$ based on the posterior distribution, 
\begin{equation}
\label{eqn:posterior_calculation}
\hat{p}_{\rm post}(\b) \triangleq 
p(\b \mid \X, \y, \hat{g},\hat{\sigma}^2)
  \propto p(\y \mid \X, \b, \hat{\sigma}^2) \,
  p(\b \mid \hat{g}, \hat{\sigma}^2).	
\end{equation}
 \end{enumerate}
Unfortunately, in contrast to the NM model, both steps are
computationally impractical due to intractable integrals or very large
sums, or both, except in special cases.


\subsection{Variational Approximation}
\label{subsec:variational_background}

To circumvent the intractability of the EB approach, we use a
mean-field variational approximation \citep{blei2017variational,
  jordan1999introduction, wainwright2008graphical,logsdon-2010,
  carbonetto2012scalable} to derive a ``variational empirical Bayes''
(VEB) approach. The idea of VEB inference is mentioned explicitly in
\cite{lda}, although earlier work implemented similar ideas
(e.g., \citealt{saul-1996, ghahramani-2000}; see also
\citealt{vandewiel2019learning}).
%
%
To describe the VEB approach, it is convenient to rewrite the two
steps of EB as solving a single optimization problem (see
also the Supplementary Materials from \citealt{wang2018simple}):
\begin{equation} 
\label{eq:EBsingle}
(\hat{p}_{\rm post}, \hat{g}, \hat\sigma^2) = 
\argmax_{q,\, g \,\in\, \mathcal{G}, \,\sigma^2 \,\in\, \mathbb{R}_{+}}
F(q, g, \sigma^2),
\end{equation}
where the optimization over $q$ is over all possible distributions on
$(\b, {\bm\gamma})$, and
\begin{equation}
F(q,g,\sigma^2) \triangleq \log p(\y \mid \X, g, \sigma^2) -
\DKL(q(\b, {\bm\gamma}) \,\|\, p(\b, {\bm\gamma} \mid \X,\y,g,\sigma^2)).
\label{eq:elbo}
\end{equation}
Here, $\DKL(q\, \|\, p)$ denotes the Kullback-Leibler (K-L) divergence
from a distribution $q$ to a distribution $p$
\citep{kullback1951information}.
To aid in deriving the closed-form updates below, we reuse the
augmented-variable representation from the NM model, ${\bm \gamma}
\triangleq (\gamma_1, \ldots, \gamma_p)$, in which $\gamma_j$ was
defined in \eqref{eqn:latent}. The function $F$ is often called the
``evidence lower bound'' (ELBO) because it is a lower-bound for the
``evidence'', $\log p(\y \mid \X, g, \sigma^2)$.


In \eqref{eq:EBsingle}, the optimization over $q$ is generally
intractable. The VEB approach addresses this by restricting the family
of distributions to be optimized. Specifically, our mean-field
VEB approach solves
\begin{equation} \label{prob:veb}
(\hat{q}, \hat{g}, \hat\sigma^2) = 
\argmax_{q \,\in\, \Q, \, 
g \,\in\, \mathcal{G}, 
\,\sigma^2 \,\in\, \mathbb{R}_{+}}
F(q,g,\sigma^2)
\end{equation}
where
\begin{equation} \label{def:mean-field}
\Q = \bigg\{q: q(\b,\bm\gamma) =
\prod_{j=1}^p q_j(b_j,\gamma_j) \bigg\},
\end{equation}
is the family of fully-factorized distributions.  The resulting
$\hat{q}$ factorizes into a product over individual factors
$\hat{q}_j(b_j, \gamma_j)$, $j = 1, \dots, p$, and serves as an
approximation to the EB posterior, $\hat{p}_\text{post}$. With this
constraint, the optimization becomes tractable (see
Section~\ref{subsec:coord} for details).

\subsubsection{Contrast with \cite{carbonetto2012scalable}}

While \citet{carbonetto2012scalable} also use a mean-field
approximation for linear regression, their approach to estimating
$g,\sigma^2$ is quite different---and substantially more
complex---than the VEB approach we describe here. In brief, they treat
$F(\qhat(g,\sigma^2),g,\sigma^2)$ as a direct approximation to the
evidence,
\begin{equation*}
p(\y \mid \X, g, \sigma^2) \approx
\exp\{F(\qhat(g,\sigma^2), g,\sigma^2)\},
\end{equation*}
and combine this with a prior distribution on $g, \sigma^2$ to arrive
at an approximate posterior distribution for $g, \sigma^2$.  This
approach is computationally burdensome because it requires finding a
separate approximation $\qhat$ for each $g, \sigma^2$. It also
requires specifying a prior on $(g,\sigma^2)$, which introduces an
additional layer of decision-making for the user. Our VEB approach
greatly simplifies this by simultaneously fitting $q, g,\sigma^2$ in a
single optimization \eqref{prob:veb}. Our approach also uses more
flexible prior families than \cite{carbonetto2012scalable}. And, as we
show in numerical experiments below, our approach generally improves
predictive performance.

\subsection{Coordinate Ascent Algorithm}
\label{subsec:coord}

We solve \eqref{prob:veb} for the multiple linear regression model
\eqref{model:bvs} with adaptive shrinkage priors
\eqref{eqn:normalmixture2} using a simple coordinate-wise approach,
outlined in Algorithm \ref{alg:coord}. While theoretical analysis
suggests that coordinate ascent can suffer poor rates of convergence
\citep{beck2013convergence, wright2015coordinate, hazimeh2018fast}, it
has the advantage of being guaranteed to converge to a stationary
point of the objective under mild conditions (see
Proposition~\ref{prop:conv} in Appendix \ref{appendix:proof}). In
practice, coordinate ascent has emerged as a simple, fast and reliable
approach for optimizing large-scale multiple linear regression models,
with both convex and nonconvex penalties
\citep{friedman2007pathwise,wu2008coordinate,friedman2010regularization,
  mazumder2011sparsenet, breheny2011coordinate, hazimeh2018fast}.

\begin{figure}[t]
\centering
\begin{minipage}{0.55\textwidth}
\begin{algorithm}[H]
\caption{Coordinate ascent for fitting \\
  VEB model (outline only).}
\label{alg:coord}
\begin{algorithmic}
\REQUIRE Data ${\bf X} \in \mathbb{R}^{n \times p}, {\bf y} \in
  \mathbb{R}^n$, \\ and initial estimates $q_1, \dots, q_p, g, 
  \sigma^2$.
\REPEAT
\FOR{$j \leftarrow 1 \mbox{ to } p$}
  \STATE $q_j \leftarrow \argmax_{q_j} F(q,g,\sigma^2)$
\ENDFOR
\STATE $g \leftarrow \argmax_{g \,\in\, \mathcal{G}} F(q,g,\sigma^2)$
\STATE $\sigma^2 \leftarrow
  \argmax_{\sigma^2 \,\in\, \mathbb{R}_{+}} F(q,g,\sigma^2)$
\UNTIL{termination criterion is met}
\RETURN
$q_1, \ldots, q_p, g, \sigma^2$.
\end{algorithmic}
\end{algorithm}
\end{minipage}
\end{figure}

In the following, we show that the steps in Algorithm~\ref{alg:coord}
are easy to implement:
\begin{enumerate}

\item[(i)] The update for each $q_j$ involves computing a posterior
  distribution under the NM model.

\item[(ii)] The update for $g$ involves running a single M-step update
  for the NM model, in which the exact posterior probabilities are
  replaced with approximate posterior probabilities.

\item[(iii)] The update for the residual variance, $\sigma^2$, has a
  simple, closed-form solution.

\end{enumerate}
These results are formally stated in the following proposition. (We
assume $\xj^T\xj = 1$ here to simplify the expressions; more general
results and algorithms that do not make this assumption are given
in Appendix~\ref{appendix:derivation}.)

\begin{proposition}[Coordinate ascent updates for VEB]
\label{prop:caisaderivation}
\rm Assume $\xj^T\xj = 1$, for $j = 1, \dots, p$, let $\bar{b}_j =
\mathbb{E}(b_j)$, $\bar{\b} = \mathbb{E}(\b)$ be the expected values
of $b_j, \b$ with respect to $q$, $\bar{\mathbf{r}} = \y - \X
\bar{\mathbf{b}} \in \mathbb{R}^n$ is the vector of expected residuals
with respect to $q$, $\X_{-j}$ denotes the design matrix $\X$
excluding the $j$th column, $q_{-j}$ is shorthand for all factors
$q_{j'}$, $j' \neq j$, and $\bar{\mathbf{r}}_j \in \mathbb{R}^n$ is
the vector of expected residuals accounting for linear effects of all
variables other than $j$,
\begin{equation}
\bar{\mathbf{r}}_j \triangleq \y - \X_{-j}\bar{\mathbf{b}}_{-j} 
= \y - \textstyle \sum_{j' \neq j} \mathbf{x}_{j'} \bar{b}_{j'}. 
\end{equation}
Additionally, we use $\tilde{b}_j \triangleq \xj^T \bar{\mathbf{r}}_j
= \bar{b}_j + \xj^T \bar{\mathbf{r}}$ to denote the ordinary least
squares (OLS) estimate of the coefficient $\bj$ when the residuals
$\rbar_j$ are regressed against $\xj$. Then we have the following
results:
\begin{enumerate}

\item[(i)] The coordinate ascent update $q_j^{\ast} \triangleq
  \argmax_{q_j} F(q, g, \sigma^2)$ is obtained by the posterior
  distribution for $\bj, \gamma_j$ under the normal means model
  (\ref{model:normal_means}, \ref{model:prior}) in which the
  observation $y_j$ is replaced by the OLS estimate of $\bj$; that is,
\begin{equation*}
q_j^{\ast}(b_j, \gamma_j = k) =
p_{\rm post}^{\rm NM}(b_j, \gamma_j = k; \tilde{b}_j, g, \sigma^2).
\end{equation*}
In particular, the posterior distribution at the maximum is
\begin{equation} 
\label{eqn:q_optimal_form}
q_j^{\ast}(b_j, \gamma_j = k) =
\phi_{jk}^{\ast} \, \N(b_j; \mu_{jk}^{\ast}, (s_{jk}^{2})^{\ast}),
\end{equation}
in which
\begin{align}
\mu_{jk}^{\ast} &= \mu_k(\tilde{b}_j; g, \sigma^2) \label{eqn:q_opt_mu} \\
(s_{jk}^{2})^{\ast} &= s_k^2(\tilde{b}_j; g, \sigma^2) \label{eqn:q_opt_s} \\
\phi_{jk}^{\ast} &= \phi_k(\tilde{b}_j; g, \sigma^2). \label{eqn:q_opt_phi}
\end{align}
See (\ref{eqn:NM-E}, \ref{def:posterior_mixture}, \ref{def:muk},
\ref{eqn:post_ebnm}) for the definitions of $\mu_k$, $s_k^2$, $\phi_k$
and $p_{\mathrm{post}}^{\mathrm{NM}}$.

\item[(ii)] The coordinate ascent update
\begin{equation*}
g^{\ast} \triangleq \argmax_{g \,\in\, \Gfm} F(q, g, \sigma^2)
\end{equation*}
is achieved by setting
\begin{equation}
\begin{aligned}
g^{\ast} &= \textstyle \sum_{k=1}^K \pi^{\ast}_k N(0,\sigma_k^2) \\
\pi_k^{\ast} &= \textstyle \frac{1}{p} 
\sum_{j=1}^p q_j(\gamma_j = k), \quad k = 1,\ldots,K.
\end{aligned}
\label{eqn:g_update}
\end{equation}

\item[(iii)] Assuming $\sigma_1^2 = 0$, the coordinate ascent update
\begin{equation*}
(\sigma^{2})^{\ast} \triangleq
\argmax_{\sigma^2 \,\in \,\mathbb{R}_{+}} F(q,g,\sigma^2)
\end{equation*}
is achieved with
\begin{equation}
(\sigma^{2})^{\ast} = 
\frac{\|\rbar\|^2 + \sum_{j=1}^p \mathrm{Var}_q(b_j) +
      \sum_{j=1}^p \sum_{k=2}^K \phi_{jk} 
      \mathbb{E}[b_j \mid \gamma_j = k]/\sigma_k^2}
     {n + p - \sum_{j=1}^p \phi_{j1}}.
\label{eqn:sigma_update}
\end{equation}
Or, assuming $q_1, \ldots, q_p$ and $g$ have just been updated as in
(i) and (ii) above, we have the simpler update formula
\begin{equation}
(\sigma^{2})^{\ast} = \frac{\|\rbar\|^2 +
\bar{\bf b}^T (\tilde{\bf b} - \bar{\bf b}) +
\sigma^2 p(1 - \pi_1^{\ast})}
{n + p(1 - \pi_1^{\ast})}.
\label{eqn:sigma_update_simpler}
\end{equation}
\end{enumerate}
\end{proposition}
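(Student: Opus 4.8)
The plan is to derive each of the three coordinate updates by exploiting the mean-field structure of $F$, which makes the update for any single block (either a factor $q_j$, or $g$, or $\sigma^2$) a self-contained optimization that decouples from the rest. The key preliminary observation is that, because $q$ factorizes as $\prod_j q_j(b_j,\gamma_j)$ and the likelihood $N(\y;\X\b,\sigma^2{\bf I}_n)$ is Gaussian, the ELBO $F(q,g,\sigma^2) = \mathbb{E}_q[\log p(\y,\b,\bm\gamma\mid\X,g,\sigma^2)] + H(q)$ expands into terms whose dependence on a single $q_j$ is: a quadratic ``data-fit'' piece coming from $-\tfrac{1}{2\sigma^2}\mathbb{E}_q\|\y-\X\b\|^2$, a prior piece $\mathbb{E}_q\log p(b_j,\gamma_j\mid g,\sigma^2)$, and the entropy $H(q_j)$. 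I would carry out this expansion first, using $\xj^T\xj=1$ and the independence across $j$ under $q$ to write $\mathbb{E}_q\|\y-\X\b\|^2 = \|\rbar\|^2 + \sum_{j}\mathrm{Var}_q(b_j)$, and more specifically, isolating the $q_j$-dependence, $-\tfrac{1}{2\sigma^2}\big(\mathbb{E}_{q_j}[b_j^2] - 2\tilde b_j\,\mathbb{E}_{q_j}[b_j]\big) + \text{const}$, where $\tilde b_j = \xj^T\rbar_j$ as defined.

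For part (i), I would recognize the resulting functional of $q_j$ as exactly $-\DKL\big(q_j \,\|\, p_{\rm post}^{\rm NM}(\cdot\,;\tilde b_j,g,\sigma^2)\big)$ up to an additive constant independent of $q_j$: the quadratic-in-$b_j$ terms plus the Gaussian-mixture prior log-density combine to give (the log of) the NM posterior at ``observation'' $\tilde b_j$, and the entropy term completes the K-L divergence. Since K-L divergence is minimized (so $-\DKL$ maximized) precisely when $q_j$ equals that NM posterior, the update is $q_j^\ast = p_{\rm post}^{\rm NM}(b_j,\gamma_j;\tilde b_j,g,\sigma^2)$, and the explicit formulas \eqref{eqn:q_opt_mu}--\eqref{eqn:q_opt_phi} follow by substituting $y_j\mapsto\tilde b_j$ into \eqref{def:muk}, \eqref{def:sk}, \eqref{eqn:NM-E}. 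This is the cleanest part. For part (ii), I would collect the only $g$-dependent term in $F$, namely $\sum_j\mathbb{E}_{q_j}[\log p(b_j,\gamma_j\mid g,\sigma^2)]$, and observe that with $g=\sum_k\pi_k N(0,\sigma_k^2)$ and the fixed grid, the only free parameters are $\bm\pi\in\mathbb{S}^K$; the objective becomes $\sum_j\sum_k q_j(\gamma_j=k)\log\pi_k + \text{const}$, a standard concave-in-$\log\pi$ problem whose maximizer over the simplex is $\pi_k^\ast = \tfrac1p\sum_j q_j(\gamma_j=k)$ (this is exactly the NM M-step \eqref{eqn:NM-M} with responsibilities $\phi_{jk}$ replaced by the variational probabilities $q_j(\gamma_j=k)$) — a one-line Lagrange-multiplier / Gibbs-inequality argument.

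For part (iii), I would gather all $\sigma^2$-dependence: the $-\tfrac{n}{2}\log\sigma^2 - \tfrac{1}{2\sigma^2}\mathbb{E}_q\|\y-\X\b\|^2$ from the likelihood, plus the prior contribution $\sum_j\mathbb{E}_{q_j}[\log\gs(b_j)]$, which under the scaled mixture \eqref{eqn:normalmixture2} contributes, for each component $k\geq2$, a term $-\tfrac12\log(\sigma^2\sigma_k^2) - \tfrac{b_j^2}{2\sigma^2\sigma_k^2}$ weighted by $q_j(\gamma_j=k)$ (the $k=1$ delta-mass component contributes nothing to the $\sigma^2$-derivative since $b_j=0$ there), giving an overall $-\tfrac{n+p-\sum_j\phi_{j1}}{2}\log\sigma^2$ prefactor. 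Setting the derivative in $\sigma^2$ to zero yields \eqref{eqn:sigma_update} after identifying $\mathbb{E}_q\|\y-\X\b\|^2 = \|\rbar\|^2 + \sum_j\mathrm{Var}_q(b_j)$. The simpler formula \eqref{eqn:sigma_update_simpler} then follows by substituting the just-computed updates from (i) and (ii): from the NM posterior formulas one gets $\mathbb{E}[b_j^2\mid\gamma_j=k]/\sigma_k^2 = \mu_{jk}^2/\sigma_k^2 + \sigma^2\cdot\tfrac{1}{1+\sigma_k^2}$ and, crucially, the algebraic identities $\sum_k\phi_{jk}\mu_{jk} = \bar b_j$ together with $\sum_k\phi_{jk}\mu_{jk}^2/\sigma_k^2 = \bar b_j\tilde b_j/\sigma^2 - \bar b_j^2/\sigma^2$ (using $\mu_{jk} = \tfrac{\sigma_k^2}{1+\sigma_k^2}\tilde b_j$ so that $\mu_{jk}^2/\sigma_k^2 = \tfrac{\mu_{jk}\tilde b_j}{\sigma^2}\cdot\tfrac{\sigma^2}{1+\sigma_k^2}$, a telescoping-type simplification), plus $\mathrm{Var}_q(b_j)$ collapsing against the $\sigma^2/(1+\sigma_k^2)$ terms. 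The main obstacle I anticipate is exactly this last bookkeeping step — verifying that the sum $\sum_j\mathrm{Var}_q(b_j) + \sum_j\sum_{k\geq2}\phi_{jk}\mathbb{E}[b_j^2\mid\gamma_j=k]/\sigma_k^2$, after the substitution, telescopes to $\bar{\bf b}^T(\tilde{\bf b}-\bar{\bf b}) + \sigma^2 p(1-\pi_1^\ast)$ — since it requires carefully tracking which terms come from the posterior variance versus the prior-precision penalty and using the specific form of $\mu_{jk},s_{jk}^2$; everything else is routine once the ELBO is expanded correctly.
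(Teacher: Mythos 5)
Your proposal is correct and follows essentially the same route as the paper's proof in the appendix: isolating the $q_j$-dependence of the ELBO to recognize a normal-means subproblem (the paper packages this step as a lemma stating that the NM posterior maximizes the NM ELBO), a Gibbs-inequality/Lagrange argument for the mixture weights, and differentiating the parameterized ELBO in $\sigma^2$ followed by substitution of the freshly updated $\mu_{jk}, s_{jk}^2$. The one caution is your intermediate identity in part (iii), $\sum_k\phi_{jk}\mu_{jk}^2/\sigma_k^2 = \bar b_j\tilde b_j/\sigma^2 - \bar b_j^2/\sigma^2$, which is dimensionally off by a factor of $\sigma^2$ because the scaled prior has component variance $\sigma^2\sigma_k^2$ rather than $\sigma_k^2$; you correctly flag this bookkeeping as the step needing verification, and once the $\sigma^2$ factors are tracked it telescopes to $\bar{\bf b}^T(\tilde{\bf b}-\bar{\bf b})+\sigma^2 p(1-\pi_1^{\ast})$ exactly as in the paper.
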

\begin{proof}
See Appendix \ref{appendix:derivation}.
\end{proof}

\begin{algorithm}[t]
\caption{Coordinate ascent for fitting VEB model (in detail).}
\label{alg:caisageneral}
\begin{algorithmic}
\REQUIRE Data ${\bf X} \in \mathbb{R}^{n \times p}, {\bf y} \in
  \mathbb{R}^n$; number of mixture components, $K$; \\
  prior variances,
  $\sigma_1^2 < \cdots < \sigma_K^2$, with $\sigma_1^2 = 0$; initial
  estimates $\bar{\bf b}, {\bm\pi}, \sigma^2$.
\STATE $\bar{\mathbf{r}} = \y - \X\bar{\mathbf{b}}$
  \hfill (compute mean residuals)%
\STATE $t \leftarrow 0$
\REPEAT
\FOR{$j \leftarrow 1 \mbox{ to } p$}
    \STATE $\bar{\mathbf{r}}_j = \bar{\mathbf{r}} + \xj\bar{b}_j$
      \hfill (disregard $j$th effect in residuals)%
    \STATE $\tilde{b}_j \leftarrow \mathbf{x}_j^T\bar{\mathbf{r}}_j$.
      \hfill (compute OLS estimate)%
    \FOR{$k \leftarrow 1 \mbox{ to } K$} 
      \STATE $\phi_{jk} \leftarrow \phi_k(\tilde{b}_j; g, \sigma^2)$
      \STATE $\mu_{jk} \leftarrow \mu_k(\tilde{b}_j; g, \sigma^2)$
    \ENDFOR\hfill (update $q_j$; eqs.~\ref{eqn:q_opt_mu}, \ref{eqn:q_opt_phi})%
    \STATE $\bar{b}_j \leftarrow \sum_{k=1}^K \phi_{jk} \mu_{jk}$.
      \hfill (update posterior mean of $b_j$)%
    \STATE $\bar{\mathbf{r}} \leftarrow \bar{\mathbf{r}}_j - \xj\bar{b}_j$.
      \hfill (update mean residuals)%
  \ENDFOR
  \FOR{$k \leftarrow 1 \mbox{ to } K$}
    \STATE $\pi_k \leftarrow \sum_{j=1}^p \phi_{jk}/p$.
      \hfill (update $g$; eq.~\ref{eqn:g_update})
  \ENDFOR
  \STATE $\displaystyle \sigma^2 \leftarrow
    \frac{\|\bar{\mathbf{r}}\|^2 
      + \bar{\bf b}^T (\tilde{\bf b} - \bar{\bf b})
      + \sigma^2 p (1-\pi_1) }{n + p(1 - \pi_1)}$
    \hfill (update $\sigma^2$; eq.~\ref{eqn:sigma_update_simpler})%
  \STATE $t \leftarrow t + 1$
\UNTIL{termination criterion is met}
\RETURN $\bar{\bf b}, \pii, \sigma^2$
\end{algorithmic}
\end{algorithm}

Inserting these expressions into Algorithm \ref{alg:coord} (and
organizing computations to limit redundant operations and memory
requirements) yields Algorithm~\ref{alg:caisageneral}.  The
computational complexity of this algorithm is $O((n + K)p)$ per
outer-loop iteration, with memory requirements $O(n + p + K)$ (in
addition to storing the data matrix $\X$), making it tractable for
large data sets. The algorithm also exploits the fact that the
approximate posterior $q({\bf b}, {\bm \gamma})$ can be recovered from
$\pii, \sigma^2, \bar{\b}$ by running a single round of the
coordinate ascent updates for $q_1, \ldots, q_p$. Because of this, the
algorithm is initialized simply by providing an initial estimate of
$\bar{\bf b}$; the full $q$ is not needed. See Section
\ref{subsec:methods-initialization} for more on initialization.

\subsection{Accuracy of VEB and Exactness for Orthogonal Predictors}

\cite{carbonetto2012scalable} note that their variational
approximation approach provides the exact posterior distribution when
the columns of $\X$ are orthogonal.  Here we extend this result,
showing that in this special case the VEB method recovers the standard
EB method.
\begin{proposition} \label{prop:orthogonal}
\rm When $\X$ has orthogonal columns, the VEB approach
\eqref{prob:veb} is mathematically equivalent to the exact EB approach
(\ref{eqn:prior_estimation}, \ref{eqn:posterior_calculation}).
\end{proposition}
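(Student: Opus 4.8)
The plan is to show that when $\X$ has orthogonal columns, the mean-field family $\Q$ already contains the exact EB posterior, so that the restriction in \eqref{prob:veb} is vacuous and the two optimization problems coincide. The key observation is that orthogonality decouples the likelihood: writing $z_j \triangleq \xj^T\y$ (and using $\xj^T\xj = 1$), the likelihood \eqref{model:bvs} factorizes, up to a $\b$-independent constant, as $p(\y\mid\X,\b,\sigma^2) \propto \prod_{j=1}^p N(z_j; b_j, \sigma^2)$, because the Gram matrix $\X^T\X$ is diagonal and the component of $\y$ orthogonal to the column space contributes only a constant. Combined with the prior \eqref{model:prior}, which is already a product over $j$, this means the exact posterior $\hat p_{\rm post}(\b) = p(\b\mid\X,\y,g,\sigma^2)$ factorizes as $\prod_{j=1}^p p(b_j\mid z_j, g,\sigma^2)$, and likewise for $(\b,\bm\gamma)$. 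Hence the exact joint posterior over $(\b,\bm\gamma)$ lies in $\Q$.

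First I would make the factorization of the likelihood precise, being careful about the constant coming from the orthogonal complement of the column span (this matters only if $\X$ has fewer than $n$ columns; the residual norm in that subspace is fixed and drops out of all of $F$, $\hat g$, $\hat\sigma^2$). Second, I would invoke the decomposition \eqref{eq:EBsingle}--\eqref{eq:elbo}: for any fixed $g,\sigma^2$, $F(q,g,\sigma^2)$ equals $\log p(\y\mid\X,g,\sigma^2)$ minus $\DKL\big(q \,\|\, p(\b,\bm\gamma\mid\X,\y,g,\sigma^2)\big)$, which is maximized over all $q$ by $q = p(\cdot\mid\X,\y,g,\sigma^2)$ with the KL term equal to zero. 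Since that maximizer is a product over $j$, it already lies in $\Q$, so $\max_{q\in\Q} F = \max_{q} F = \log p(\y\mid\X,g,\sigma^2)$ for every $g,\sigma^2$. Third, taking the max over $g\in\G$ and $\sigma^2\in\mathbb{R}_+$ on both sides shows that \eqref{prob:veb} and the single-problem form \eqref{eq:EBsingle} of exact EB have the same objective value and the same maximizers: the optimal $(\hat g,\hat\sigma^2)$ agree, and at that optimum $\hat q = \hat p_{\rm post}$. This is exactly the claimed mathematical equivalence with the two-step exact EB procedure (\ref{eqn:prior_estimation}, \ref{eqn:posterior_calculation}).

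As a concrete sanity check that dovetails with the coordinate-ascent description, I would note that under orthogonality the OLS quantity $\tilde b_j = \xj^T\rbar_j$ from Proposition~\ref{prop:caisaderivation} reduces to $\tilde b_j = \xj^T\y = z_j$, independently of the other coefficients $\bar b_{j'}$; so the $q_j$-updates in part~(i) become a single pass computing the exact NM posteriors $p_{\rm post}^{\rm NM}(b_j,\gamma_j; z_j, g,\sigma^2)$, there is no cross-coordinate coupling to iterate away, and the $g$- and $\sigma^2$-updates in parts~(ii)--(iii) coincide with the exact NM marginal-likelihood updates applied to the data $z_1,\dots,z_p$. I expect the only real subtlety to be bookkeeping around the residual variance: one must check that the constant from the orthogonal complement of $\mathrm{col}(\X)$ enters $F$ the same way it enters $\log p(\y\mid\X,g,\sigma^2)$ in the exact EB marginal likelihood, so that the $\sigma^2$-optimization also matches. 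Once that is pinned down, the argument is essentially the ``mean-field is exact when the true posterior already factorizes'' principle, specialized to this model.
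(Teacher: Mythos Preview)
Your proposal is correct and follows essentially the same approach as the paper: both arguments show that when $\X$ has orthogonal columns the exact posterior $p(\b,\bm\gamma\mid\X,\y,g,\sigma^2)$ factorizes over $j$, so it already lies in $\Q$ and the mean-field restriction in \eqref{prob:veb} is vacuous. Your write-up is in fact more explicit than the paper's (which stops at the factorization), since you spell out why this forces $\max_{q\in\Q}F=\log p(\y\mid\X,g,\sigma^2)$ for every $(g,\sigma^2)$ and hence why the optimizers over $(g,\sigma^2)$ also coincide.
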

\begin{proof}
See Appendix~\ref{appendix:proof}.
\end{proof}
In brief, this result follows from the fact that, when $\X$ has
orthogonal columns, the (exact) posterior distribution for $\b$
factorizes as \eqref{def:mean-field}, and therefore the mean-field
assumption is not an approximation. By contrast, the ``conditional
maximum likelihood'' (CML) approach to approximating EB inference
\citep{george2000calibration, yuan2005efficient} is not exact even in
the case of orthogonal columns.

Proposition \ref{prop:orthogonal} suggests that our VEB method should
be accurate when the columns of $\X$ are close to orthogonal. It also
suggests that the approximation may be less accurate for very highly
correlated columns. However, \cite{carbonetto2012scalable} note that
even in this setting the estimated hyperparameters (here, $g$) can be
accurate. They also note that, when two predictors are highly
correlated and predictive of $\y$, the fully-factorized variational
approximation tends to give just one of them an appreciable estimated
coefficient. This is similar to the behavior of many PLR methods
including the Lasso (but different from a well-mixed MCMC-based
Bayesian method). While this behavior is undesirable when the main
aim of the analysis is to select variables for scientific
interpretation, it does not necessarily harm prediction
accuracy. Thus, the VEB approach may perform well for prediction even
in settings where the assumptions of the mean-field variational
approximation are clearly violated. Our numerical studies
(Section~\ref{sec:experiment}) confirm this.

\subsection{Practical Issues and Extensions}
\label{sec:practicalconsiderations}

In this subsection, we discuss some practical implementation issues
and potential extensions of this method.

\subsubsection{Intercept}

In multiple regression applications, it is common to include an
intercept term that is not regularized in the same way as other
variables. A common approach, and the approach we take here, is to
center $\y$ and the columns of $\X$ \citep{chipman2001practical,
  friedman2010regularization}; that is, the observed responses $y_i$
are replaced with the centered responses $y_i - \sum_{i'=1}^n y_{i'}/n$,
and the data $x_{ij}$ are replaced with column-centered values $x_{ij}
- \sum_{i'=1}^n x_{i'j}/n$.
 
\subsubsection{Selection of Grid for Prior Variances}
\label{subsec:grid}

Following \cite{stephens2016false}, we choose a grid $\{\sigma_1^2,
\ldots, \sigma_K^2\}$ that is sufficiently broad and dense so that
results do not change much if the grid is made broader and denser; the
aim is to choose a $\Gfm$ that closely approximates the non-parametric
family $\Gsmn$. Specifically, we set the lower end of the grid to be
$\sigma_1^2 = 0$, which is a point mass at zero, and we set the
largest prior variance to be $\sigma_K^2 \approx n$ so that the prior
variance of $\xj \bj$ is close to $\sigma^2$ (recall, we assumed
$\xj^T\xj = 1$, so $\mathrm{Var}(\xj) \approx 1/n$ when $\xj$ is
centered). We have found that 20 grid points spanning this range to be
good enough to achieve reliable prediction performance across many
settings (see Section~\ref{sec:experiment}). Based on this, our
default grid is the sequence $\sigma_k^2 = n(2^{(k-1)/K} - 1)^2$, $k =
1, \ldots, 20$. In rare cases, $\mathrm{Var}(\xj\bj)$ may be larger
than $\sigma^2$ for some $j$. In that case, we may need a larger
$\sigma_K^2$ to avoid underestimating, or ``overshrinking'', the
effect $\bj$. Therefore, we suggest checking that the final estimate
of $\pi_K$ is negligible and, if not, the grid can be made wider.

\subsubsection{Initialization and Update Order}
\label{subsec:methods-initialization}

Except in special cases, maximizing $F$ is a nonconvex optimization
problem, and so although Algorithm \ref{alg:caisageneral} is
guaranteed to converge, the solution obtained may depend on
initialization of $\bbar$, $\pii$, $\sigma^2$, as well as the order in
which the coordinate ascent updates cycle through the coordinates $j
\in \{1, \ldots, p\}$ \citep[e.g.,][]{carbonetto2012scalable,
  ray2019variational}. Therefore, we experimented with different
initialization and update orderings.

The simplest initialization for $\bbar$ is the ``null initialization''
$\bbar=(0, \ldots, 0)^T$.  We also consider initializing $\bbar$ to
the Lasso solution $\hat{\b}^{\mathrm{lasso}}$ in which the Lasso
penalty is chosen via cross-validation.  Given $\bbar$, we initialize
the residual variance as $\sigma^2 = \|\y - \X\bbar\|^2/n$, and we
initialize the mixture weights to $\pii = (1/K, \ldots, 1/K)$. In
numerical experiments (Appendix \ref{subsec:initorder}) we found that
the Lasso initialization often improved predictive performance when
columns of $\X$ were highly correlated. In other cases, the null and
the Lasso initializations performed similarly. With the Lasso
initialization, we did not find any systematic benefit to different
update orderings. Therefore, our default approach is to use the Lasso
initialization with updates performed in the natural order,
$1,2,\dots,p$.

\subsubsection{Termination Criterion}

We stop iterating when estimates of the prior distribution $g$
stabilize; specifically, we stop at iteration $t$ if
$\norm{{\bm\pi}^{(t)} - {\bm\pi}^{(t-1)}}_{\infty} < K \times
10^{-8}$.

\subsubsection{Computing the ELBO}

Although Algorithm \ref{alg:caisageneral} optimizes the ELBO, $F$, it
does not require computation of the ELBO. Still, it can be useful to
compute the ELBO, for example to monitor progress of the coordinate
ascent updates, or to compare fits obtained from different runs of the
algorithm. Appendix \ref{appendix:derivation} includes analytic
expressions for the ELBO that may be useful in such cases.
%
%

\subsubsection{Extension to Other Mixture Prior Families}

We have focussed on the normal mixture prior family $\G = \Gfm$
because it makes computations simple, fast, and numerically
stable. Furthermore, this prior family includes most prior
distributions previously used for multiple regression, and so we
expect it to suffice for many practical applications. However,
Algorithm \ref{alg:caisageneral} could be adapted to accommodate other
prior families of the form $\mathcal{G} = \{g = \sum_{k=1}^K \pk g_k :
\pii \in \mathbb{S}^K\}$, with fixed mixture components $g_1, \ldots,
g_K$. When choosing $\mathcal{G}$, there are two important practical
points to consider: (i) the convolution of $g_k$ with a normal
likelihood should be numerically tractable, ideally with an analytic
expression; (ii) the posterior mean in the normal means model with
prior $\bj \sim g_k$ should be easy to compute. Examples of fixed
mixture components $g_k$ satisfying (i) and (ii) include point masses,
uniform distributions and Laplace distributions.

\subsubsection{Inference and Variable Selection}

We have focussed on developing flexible multiple regression methods
for accurate {\em prediction}, which requires only a point estimate
for $\b$ (e.g., the posterior mean, $\bar\b$).  However, the
approximate posterior distributions from our method could also be used
for inference, that is, to assess uncertainty in the estimated
$\b$. For example, to assess significance of each variable in the
regression, it is easy to compute the local false sign rate,
\textit{lfsr} \citep{stephens2016false}, which quantifies confidence
in the sign of the effect (and which we generally prefer to the
closely related {\em local false discovery rate}; see
\citealt{stephens2016false}). However, caution is warranted in
settings with highly correlated variables: in such settings, the
approximate posterior distribution from the fully-factored variational
approximation will often be inaccurate. While predictive performance
is quite robust to this issue, inference is more sensitive
\citep{carbonetto2012scalable}.  Thus, other methods may be preferred
for inference with highly correlated variables; see
\cite{wang2018simple} for further discussion.

\section{Connecting VEB and Penalized Linear Regression}
\label{sec:penalty}

This section shows that the approximate posterior mean computed by our
VEB approach also solves a PLR with a nonconvex penalty, where the
form of the penalty is flexible and is automatically learned from the
data without need for cross-validation.

\subsection{Penalties and Shrinkage Operators}

\begin{figure}[!t]
\centering
\includegraphics[width=0.95\textwidth]{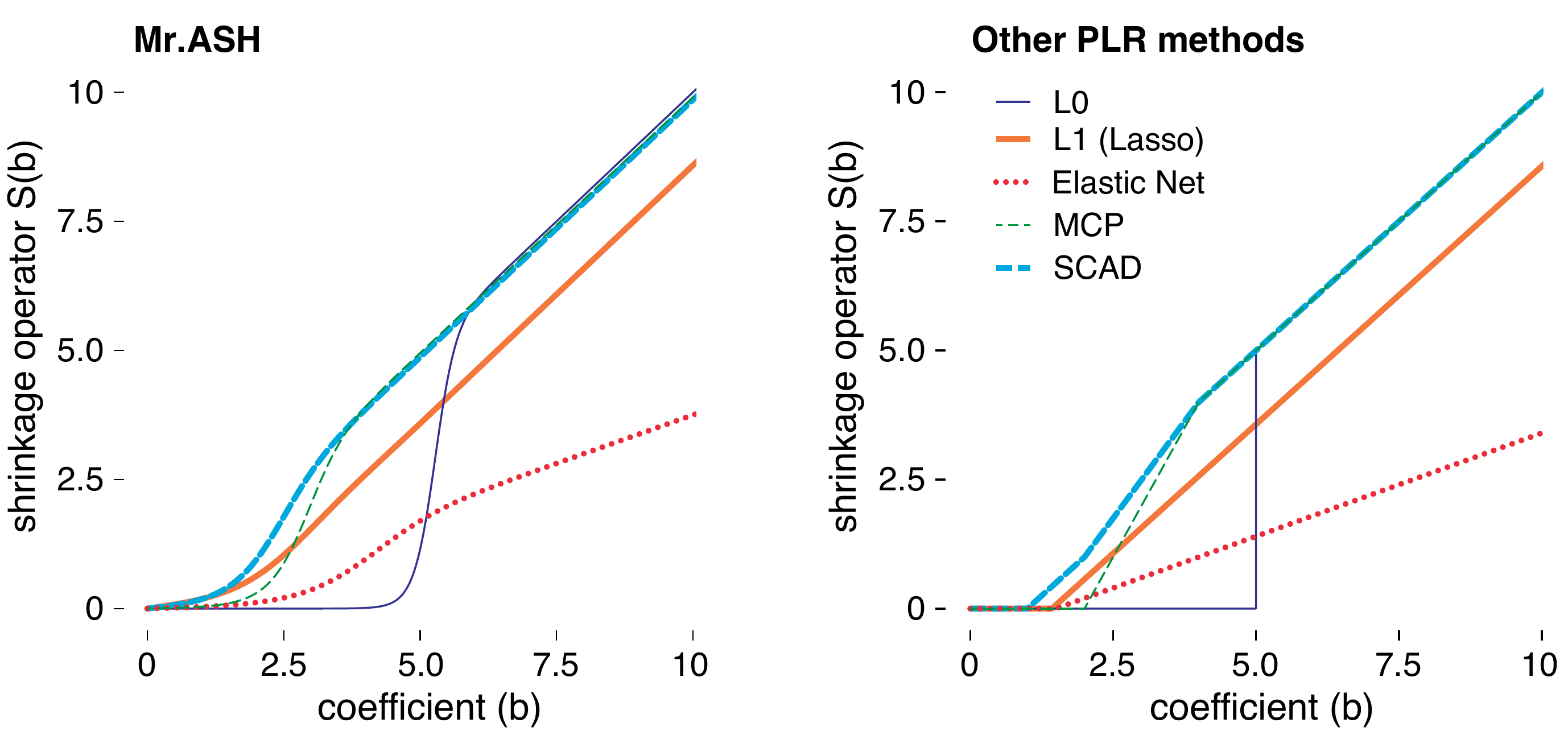}
\caption{Examples of posterior mean shrinkage operators for different
  $g \in \Gfm$ (left-hand panel) and $\sigma^2$ that were chosen to
  mimic the shrinkage operators from some commonly used penalties
  (right-hand panel).}
\label{fig:flexibility}
\end{figure}

Penalized linear regression (PLR) methods estimate the regression
coefficients by minimizing a penalized squared-loss function:
\begin{equation}
\label{eqn:penalized_linear_regression}
\underset{\b \,\in\, \mathbb{R}^p}{\mathrm{minimize}} \;
h_{\rho}(\b) \triangleq
\frac{1}{2}\norm{\y - \X\b}^2 + \sum_{j=1}^p \rho(\bj),
\end{equation}
for some penalty function $\rho : \mathbb{R} \to \mathbb{R}$. As
mentioned above, the PLR problem
\eqref{eqn:penalized_linear_regression} is often tackled using
coordinate descent algorithms;
%
%
that is, by iterating over the coordinates of $b_1, \ldots, b_p$
sequentially, at each iteration solving
\eqref{eqn:penalized_linear_regression} for one coordinate $b_j$ while
keeping the remaining coordinates fixed. Assuming the columns of $\X$
are scaled such that $\xj^T\xj = 1$, the solution for the $j$th
coordinate is obtained by
\begin{equation}
\label{eqn:coord_descent}
\bj \leftarrow S_{\rho}(\bj + \xj^T(\y - \X\b)),
\end{equation}
where 
\begin{equation}
\label{def:shrinkage_thresholding_operator}
S_{\rho}(t) \triangleq
\argmin_{\theta \,\in\, \mathbb{R}}\frac{1}{2}(t - \theta)^2 
+ \rho(\theta)
\end{equation}
is the ``shrinkage operator'' (or ``univariate proximal operator'';
\citealt{parikh2014proximal}) for penalty $\rho$. Studying the
shrinkage operator $S_{\rho}$ is often helpful for understanding the
behaviour of its corresponding penalty, $\rho$. For example, the
shrinkage operators for the $L_0$ and $L_1$ penalties both have a
``thresholding'' property in which coefficient estimates $t$ less than
some value are driven to zero. This thresholding property tends to
produce sparse solutions to \eqref{eqn:penalized_linear_regression};
see Figure~\ref{fig:flexibility} for an illustration. Table
\ref{table:penalty} in Appendix \ref{appendix:penalties} gives some
commonly used penalty functions and their corresponding shrinkage
operators.
%


In the next section, we show that our VEB approach can be interpreted
as solving a PLR problem with a penalty function $\rho_{g_{\sigma},
  \sigma}$ that depends on the prior $g_\sigma$ and the residual
variance $\sigma^2$. This penalty function has a corresponding
shrinkage operator, denoted by $S_{g_{\sigma}, \sigma}$, that has a
particularly simple form and interpretation: {\em it is the posterior
  mean under a normal means model}. It is formally defined as follows.
\begin{definition}[Normal Means Posterior Mean Operator]
\label{def:posterior_mean_shrinkage}
\rm Define the {\em normal means posterior mean operator,}
$\shrinkf: \mathbb{R} \to \mathbb{R}$, as the mapping
\begin{equation}
\label{eqn:shrink}
\shrinkf(y) \triangleq \mathbb{E}_{\rm NM}(b \mid y, f,\sigma^2),
\end{equation}
where $\mathbb{E}_{\mathrm{NM}}$ denotes the posterior expectation
under the following normal means model with prior $f$ and variance
$\sigma^2$,
\begin{equation}
\label{eqn:nmf}
\begin{aligned}
y \mid b, \sigma^2 &\sim N(b, \sigma^2) \\
b &\sim f.
\end{aligned}
\end{equation}
\end{definition}
From \eqref{eqn:post_ebnm}, $\shrinkf$ has a simple analytic form
when $f = g_\sigma$, $g \in \Gfm$:
\begin{equation}
\label{eqn:shrink-gmf}
\shrinkg(y) = \sum_{k=1}^K \phi_k(y; g, \sigma^2) \, \mu_k(y; g, \sigma^2).
\end{equation}
It is easy to show that $\shrinkg$ is an odd function and is monotonic
in $y$. Also, $\shrinkg$ is a shrinkage operator, in that
$|\shrinkg(y)| \leq |y|$; see Lemma~\ref{lem:property_pm_shrinkage} in
Appendix \ref{appendix:proof}. Indeed, given a suitable choice of
prior family, the shrinkage operator $\shrinkg$ can qualitatively
mimic the behavior of many commonly used shrinkage operators (Figure
\ref{fig:flexibility}).

The behavior of $\shrinkg$ naturally depends on the prior. For
example, the more mass the prior places near zero, the stronger the
shrinkage toward zero. Our VEB method estimates the prior from within
a flexible family capable of capturing a wide range of scenarios;
consequently, the corresponding shrinkage operator is also estimated
from a flexible family of shrinkage operators. This process is
analogous to estimating the tuning parameters in regular PLR methods
which is usually done by cross-validation (CV).
%
%
However, our VEB approach dispenses with CV and makes it possible to
efficiently tune across a much wider range of shrinkage
operators.

%

\subsection{VEB as Penalized Linear Regression}

 
The first step to connecting our VEB approach to a PLR is to write it
as an optimization over the regression coefficients, $\b$, rather than
an optimization over the (approximate) posterior distributions,
$q$. To do this, we define an objective function:
\begin{equation}
\label{eqn:define_elbo_as_plr}
h(\bbar, g, \sigma^2) \triangleq
-\bigg\{\max_{q \,\in\, \mathcal{Q}, \, \mathbb{E}_q[\b] \,=\, \bbar}
\; F(q,g,\sigma^2) \bigg\}.
\end{equation}
The constraint $\mathbb{E}_q[\b] \,=\, \bbar$ means that the expected
value of $\b$ with respect to $q$ is $\bbar$. The negative sign is
introduced to align with the convention that PLRs are usually
minimization problems, as in \eqref{eqn:penalized_linear_regression}.

Any algorithm for optimizing $F$ over $q$ (and possibly $g, \sigma^2$)
also provides a way to optimize $h$ over $\bbar$ (and possibly
$g, \sigma^2$), as formalized in the following proposition.
\begin{proposition}[Computing Posterior Mean as an Optimization Problem]
\label{prop:vpm_solves_an_opt}
\rm Let $\qhat$, $\ghat$, $\shat$ be a solution to
\begin{equation*}
\qhat, \ghat, \shat =
\argmax_{q \,\in\, \Q,\, g \,\in\, \G,\, \sigma^2 \,\in\, \mathcal{T}} \,
F(q,g,\sigma^2),
\end{equation*}
where $\Q$ is the variational mean-field family of approximate
posterior distributions \eqref{def:mean-field}, $\G$ is any family of
prior distributions on $b \in \mathbb{R}$, and $\mathcal{T}$ is any
subset of $\mathbb{R}_{+}$. (This general formulation allows, as a
special case, $g,\sigma^2$ to be fixed by taking $\G$ and
$\mathcal{T}$ to be singleton sets.) Let $\hat{\b}$ denote the
expected value of $\b$ with respect to $\qhat$. Then $\hat{\b}, \ghat,
\shat$ also solves the following optimization problem:
\begin{equation*}
\hat{\b}, \ghat, \shat =
\argmin_{\bbar \,\in\, \mathbb{R}^p,\, g \,\in\, \G,
\,\sigma^2 \,\in\, \mathcal{T}}
\; h(\bbar, g, \sigma^2).
\end{equation*}
\end{proposition}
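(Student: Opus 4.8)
\emph{Proof proposal.} The plan is to observe that the definition of $h$ in \eqref{eqn:define_elbo_as_plr} is nothing but a \emph{partial maximization} of the ELBO $F$ over the approximate posterior $q$, holding its mean $\bbar$ fixed, and then to invoke the elementary principle that jointly optimizing a partially optimized objective recovers the same optimizers as optimizing over everything at once. First I would record the value identity. Since every $q\in\Q$ has a well-defined mean $\mathbb{E}_q[\b]\in\mathbb{R}^p$, and conversely---because $\Q$ is the family of \emph{all} fully-factorized distributions \eqref{def:mean-field}---every $\bbar\in\mathbb{R}^p$ is realized as $\mathbb{E}_q[\b]$ for some $q\in\Q$ (e.g.\ a product of point masses), the maximization over $q$ can be split into an outer maximization over the mean $\bbar$ and an inner maximization over all $q$ consistent with that mean:
\[
\max_{q \,\in\, \Q,\ g \,\in\, \G,\ \sigma^2 \,\in\, \mathcal{T}} F(q,g,\sigma^2)
\;=\; \max_{\bbar \,\in\, \mathbb{R}^p,\ g \,\in\, \G,\ \sigma^2 \,\in\, \mathcal{T}}
\Bigl(\ \max_{q \,\in\, \Q,\ \mathbb{E}_q[\b]\,=\,\bbar} F(q,g,\sigma^2)\Bigr)
\;=\; -\min_{\bbar,\ g,\ \sigma^2} h(\bbar,g,\sigma^2).
\]
The surjectivity remark guarantees that the inner feasible set is never empty, so $h$ is genuinely defined on all of $\mathbb{R}^p\times\G\times\mathcal{T}$ and no empty suprema appear.

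For the statement about optimizers (which is all the proposition asserts), I would argue by a two-sided inequality rather than appealing to the value identity abstractly. Let $\qhat,\ghat,\shat$ be a global maximizer of $F$ and put $\hat{\b}=\mathbb{E}_{\qhat}[\b]$. Because $\qhat$ is feasible for the inner maximization at $(\bbar,g,\sigma^2)=(\hat{\b},\ghat,\shat)$,
\[
-h(\hat{\b},\ghat,\shat) \;=\; \max_{q \,\in\, \Q,\ \mathbb{E}_q[\b]\,=\,\hat{\b}} F(q,\ghat,\shat)
\;\ge\; F(\qhat,\ghat,\shat) \;=\; \max_{q,\,g,\,\sigma^2} F(q,g,\sigma^2).
\]
Conversely, for every $(\bbar,g,\sigma^2)$ the inner maximization runs over a subset of the full feasible set, so $-h(\bbar,g,\sigma^2)\le \max_{q,g,\sigma^2}F$; specializing to $(\hat{\b},\ghat,\shat)$ yields the reverse inequality, hence $-h(\hat{\b},\ghat,\shat)=\max_{q,g,\sigma^2}F$. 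Combining the two displays gives $h(\hat{\b},\ghat,\shat)\le h(\bbar,g,\sigma^2)$ for all $(\bbar,g,\sigma^2)\in\mathbb{R}^p\times\G\times\mathcal{T}$, which is precisely the claim that $\hat{\b},\ghat,\shat$ solves $\argmin_{\bbar,g,\sigma^2} h(\bbar,g,\sigma^2)$.

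I do not expect a real obstacle here; the argument is essentially bookkeeping for a ``min of a partially minimized function'' identity. The only steps needing a line of care are: (i) checking that $\{q\in\Q:\mathbb{E}_q[\b]=\bbar\}$ is nonempty for every $\bbar$, so that $h$ is finite-valued and the nested optimization is legitimate; and (ii) noting that I use \emph{attainment} of the inner maximum only at $\bbar=\hat{\b}$---supplied by $\qhat$---whereas for all other $\bbar$ it suffices to work with suprema, so every inequality above is unaffected if $\max$ is read as $\sup$. As a byproduct, the same bookkeeping shows that $\qhat$ itself attains the inner maximum defining $-h(\hat{\b},\ghat,\shat)$, which is the fact underlying the remark after Algorithm~\ref{alg:caisageneral} that the full approximate posterior can be reconstructed from $\bbar,\ghat,\shat$ by one sweep of the coordinate updates.
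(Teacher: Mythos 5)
Your proposal is correct and follows essentially the same route as the paper's proof: identify $-h(\hat{\b},\ghat,\shat)$ with $F(\qhat,\ghat,\shat)$ via feasibility of $\qhat$ in the mean-constrained inner maximization, then chain the inequalities $-h(\bbar,g,\sigma^2) \le \max_{q\in\Q} F(q,g,\sigma^2) \le F(\qhat,\ghat,\shat) = -h(\hat{\b},\ghat,\shat)$. Your added remarks on nonemptiness of the constrained feasible set and on attainment are sensible bookkeeping that the paper leaves implicit, but they do not change the argument.
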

\begin{proof}
See Appendix~\ref{appendix:proof}.
\end{proof}

The final step to connecting VEB with PLR is to show that $h$ has the
form of a penalized squared-loss function.
\begin{theorem}[VEB as a Penalized Log-likelihood]
\label{thm:veb_penloglik}
\rm The objective function $h$ defined in
\eqref{eqn:define_elbo_as_plr} has the form of a PLR,
\begin{equation}
h(\bbar, g, \sigma^2) =
\frac{1}{2\sigma^2} \norm{\y - \X\bbar}^2
+ \frac{1}{\sigma^2} \sum_{j=1}^p \rho_{g_{\sigma}, \sigma}(\bar{b}_j)
+ \frac{n-p}{2} \log(2\pi\sigma^2),
\label{eqn:elbo_as_plr_full_expression}
\end{equation}
in which the penalty function $\rho_{f,\sigma}$ satisfies
\begin{equation}
\label{eqn:penalty_main_text}
\rho_{f,\sigma}(\shrinkf(y)) =
- \sigma^2 \ell_{\mathrm{NM}} (y; f, \sigma^2)
- \textstyle \frac{1}{2}(y - \shrinkf(y))^2,
\end{equation}
and
\begin{equation}
\rho_{f,\sigma}'(\shrinkf(y)) = (y - \shrinkf(y)).
\label{eqn:penderiv}
\end{equation}
Here, $\ell_{\rm NM}(y; f, \sigma^2) \triangleq \log p(y \mid f,
\sigma^2)$ denotes the marginal log-likelihood under the NM model
\eqref{eqn:nmf}, and $\shrinkf$ denotes the shrinkage operator
\eqref{eqn:shrink}.
\end{theorem}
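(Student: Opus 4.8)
The plan is to carry out the maximization over $q$ in \eqref{eqn:define_elbo_as_plr} explicitly: because the mean-field family \eqref{def:mean-field} factorizes and the constraint $\mathbb{E}_q[\b]=\bbar$ decouples coordinate-wise, it reduces to $p$ one-dimensional problems, each solvable in closed form by a change of measure to the normal means posterior, and the penalty then falls out. \textbf{Step 1 (decompose the ELBO).} Using the identity $F(q,g,\sigma^2)=\mathbb{E}_q[\log p(\y\mid\X,\b,\sigma^2)]-\DKL(q(\b,\bm\gamma)\,\|\,p(\b,\bm\gamma\mid g,\sigma^2))$ and $q\in\Q$, I would compute $\mathbb{E}_q[\log p(\y\mid\X,\b,\sigma^2)]=-\tfrac n2\log(2\pi\sigma^2)-\tfrac1{2\sigma^2}\big(\norm{\y-\X\bbar}^2+\sum_j\mathrm{Var}_{q_j}(b_j)\big)$, where the second-moment contribution is just $\sum_j\mathrm{Var}_{q_j}(b_j)$ because $q$ factorizes and $\xj^T\xj=1$; and the K-L term splits as $\sum_j\DKL(q_j(b_j,\gamma_j)\,\|\,p(b_j,\gamma_j\mid g,\sigma^2))$. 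Since for $q\in\Q$ the constraint $\mathbb{E}_q[\b]=\bbar$ is equivalent to $\mathbb{E}_{q_j}[b_j]=\bar b_j$ for all $j$, and the coupling across coordinates is only through the fixed vector $\X\bbar$, the maximization separates:
\[
h(\bbar,g,\sigma^2)=\tfrac1{2\sigma^2}\norm{\y-\X\bbar}^2+\tfrac n2\log(2\pi\sigma^2)+\sum_{j=1}^p V(\bar b_j),
\]
where $V(m)\triangleq\min_{q_j:\,\mathbb{E}_{q_j}[b_j]=m}\big\{\tfrac1{2\sigma^2}\mathrm{Var}_{q_j}(b_j)+\DKL(q_j\,\|\,p(b_j,\gamma_j\mid g,\sigma^2))\big\}$.

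\textbf{Step 2 (solve the one-dimensional problem).} The key step is a change of measure: for any ``pseudo-observation'' $y\in\mathbb{R}$, Bayes' rule for the NM model gives $\log p_{\rm post}^{\rm NM}(b,\gamma;y,g,\sigma^2)=-\tfrac12\log(2\pi\sigma^2)-\tfrac{(y-b)^2}{2\sigma^2}+\log p(b,\gamma\mid g,\sigma^2)-\ell_{\rm NM}(y;g_\sigma,\sigma^2)$. Substituting this into the K-L term and using $\mathrm{Var}_{q_j}(b_j)=\mathbb{E}_{q_j}[b_j^2]-m^2$ on the constraint set, the $\mathbb{E}_{q_j}[b_j^2]$ contributions cancel, giving, for every feasible $q_j$ and every $y$,
\[
\begin{aligned}
\tfrac1{2\sigma^2}\mathrm{Var}_{q_j}(b_j)+\DKL(q_j\,\|\,p(\cdot\mid g,\sigma^2))
={}& \DKL\big(q_j\,\|\,p_{\rm post}^{\rm NM}(\cdot;y,g,\sigma^2)\big)\\
&{}-\tfrac12\log(2\pi\sigma^2)-\tfrac{(y-m)^2}{2\sigma^2}-\ell_{\rm NM}(y;g_\sigma,\sigma^2).
\end{aligned}
\]
Since the additive term is independent of $q_j$ and this holds for \emph{any} $y$, I would choose $y=\shrinkg^{-1}(m)$ — legitimate because $\shrinkg$ (eq.~\eqref{eqn:shrink-gmf}) is continuous, strictly increasing, and surjective onto $\mathbb{R}$ (the largest-variance component forces $\shrinkg(y)\to\pm\infty$). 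With this choice $p_{\rm post}^{\rm NM}(\cdot;y,g,\sigma^2)$ has mean exactly $m$, hence is feasible, and nonnegativity of K-L shows it is the unique minimizer (incidentally re-deriving Proposition~\ref{prop:caisaderivation}(i)), so $V(m)=-\tfrac12\log(2\pi\sigma^2)-\tfrac1{2\sigma^2}\big(\shrinkg^{-1}(m)-m\big)^2-\ell_{\rm NM}\big(\shrinkg^{-1}(m);g_\sigma,\sigma^2\big)$.

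\textbf{Step 3 (identify the penalty).} Plugging $V(\bar b_j)$ into the expression for $h$ and collecting the $\log(2\pi\sigma^2)$ terms yields exactly \eqref{eqn:elbo_as_plr_full_expression} once we \emph{define} $\rho_{g_\sigma,\sigma}(m)\triangleq-\tfrac12\big(\shrinkg^{-1}(m)-m\big)^2-\sigma^2\ell_{\rm NM}\big(\shrinkg^{-1}(m);g_\sigma,\sigma^2\big)$; rewriting with $m=\shrinkg(y)$ is precisely \eqref{eqn:penalty_main_text}. For the derivative identity \eqref{eqn:penderiv}, I would differentiate $\rho_{g_\sigma,\sigma}(\shrinkg(y))=-\tfrac12(y-\shrinkg(y))^2-\sigma^2\ell_{\rm NM}(y;g_\sigma,\sigma^2)$ in $y$, use Tweedie's formula $\sigma^2\ell_{\rm NM}'(y)=\shrinkg(y)-y$ (immediate from differentiating $\ell_{\rm NM}$ under the integral sign), and cancel the common nonzero factor $\shrinkg'(y)$. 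These two identities hold verbatim with a general prior $f$ in place of $g_\sigma$, since they use only that $\shrinkf$ is the NM posterior mean, Tweedie, and invertibility of $\shrinkf$.

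\textbf{Main obstacle.} The crux is the change-of-measure identity in Step 2: seeing that, after replacing the variational prior by $p_{\rm post}^{\rm NM}(\cdot;y,\cdot)$ and invoking the mean constraint, the awkward second-moment term $\mathbb{E}_{q_j}[b_j^2]$ disappears, leaving a K-L term plus a constant, and then realizing that the free parameter $y$ should be tuned so that this NM posterior is feasible, at which point it is automatically optimal. The rest is bookkeeping: well-posedness of the optimization over all distributions $q_j$ on $(b_j,\gamma_j)$ (restrict to $q_j\ll p(b_j,\gamma_j\mid g,\sigma^2)$; when $\sigma_1^2=0$ both measures share an atom at $b_j=0$, so the K-L manipulations go through), differentiability of $\ell_{\rm NM}$ and $\shrinkg$, and the strict monotonicity and surjectivity of $\shrinkg$ needed to invert it and to make $\rho_{g_\sigma,\sigma}$ well-defined on all of $\mathbb{R}$.
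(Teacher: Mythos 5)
Your proposal is correct and follows essentially the same route as the paper: both decompose the ELBO into a quadratic data-fit term plus $p$ one-dimensional constrained minimizations over $q_j$ with fixed mean (the paper's Lemma~\ref{lem:enorm} and Proposition~\ref{prop:elbo_as_plr_appendix}), and both identify each one-dimensional value function with the normal-means marginal log-likelihood evaluated at the pseudo-observation whose NM posterior mean equals $\bar{b}_j$ --- your change-of-measure identity is precisely the content of the paper's Lemma~\ref{lem:subproblem} and Lemma~\ref{lem:nm_as_plr}. The only substantive difference is in how \eqref{eqn:penderiv} is obtained: the paper reads it off from first-order stationarity of the one-dimensional penalized problem at $\bar{b} = \shrinkf(y)$ (using Tweedie's formula only for the auxiliary identity \eqref{eqn:tweedie}), whereas you differentiate the composition and invoke Tweedie plus cancellation of the nonvanishing factor $\shrinkf'(y)$; both arguments are valid.
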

\begin{proof}
See Appendix~\ref{appendix:proof}.
\end{proof}

From this theorem, it follows that the NM posterior mean shrinkage
operator $\shrinkf$ \eqref{eqn:shrink} can be also written in the form
of \eqref{def:shrinkage_thresholding_operator}, a shrinkage operator
for the penalty $\rho_{f,\sigma}$.  Explicit computation of
$\rho_{f,\sigma}(\bar{b})$ for a given $\bar{b}$ in
\eqref{eqn:penalty_main_text} would require computing the inverse
shrinkage operator, $y = S_{f,\sigma}^{-1}(\bar{b})$. This inverse
exists because the shrinkage operator \eqref{eqn:shrink-gmf} is
strictly increasing; however, we do not have an analytic expression
for the inverse, so we do not have an analytic expression for
$\rho_{f,\sigma}(\bar{b})$.

\begin{figure}[t]
\centering
\begin{minipage}{0.5\textwidth}
\begin{algorithm}[H]
\caption{Coordinate Ascent Iterative \\ Shrinkage 
   Algorithm for Variational Posterior \\ Mean (with fixed $g,\sigma^2$)}
\label{alg:caisa-simple}
\begin{algorithmic}
\REQUIRE ${\bf X} \in \mathbb{R}^{n \times p}, {\bf y} \in
  \mathbb{R}^n$, $\sigma^2 > 0$, prior $g$, and initial estimates $\bbar$.
\REPEAT
\FOR{$j \leftarrow 1 \mbox{ to } p$}
\STATE $\bar{b}_j \leftarrow \shrinkg(\bar{b}_j + \xj^T(\y - \X\bbar))$
\ENDFOR
\UNTIL{convergence criteria is met}
\RETURN $\bbar$
\end{algorithmic}
\end{algorithm}
\end{minipage}
\end{figure}

\subsubsection{Special Case When $g$ and $\sigma^2$ Are Fixed}

The special case when $g$ and $\sigma^2$ are fixed is particularly
simple and helpful for intuition. In this case, the VEB approach is
solving a PLR problem with fixed penalty $\rho_{{g_{\sigma}},\sigma}$
and shrinkage operator $S_{g_{\sigma},\sigma}$. This leads to a simple
coordinate ascent algorithm (Algorithm~\ref{alg:caisa-simple}).
Compare this with the inner loop of Algorithm \ref{alg:caisageneral}
which maximizes the ELBO, $F$, over each $q_j$ in turn. The key
computation in the inner loop is the computation of the posterior
mean, $\bar{b}_j$.  (When $g$ and $\sigma^2$ are fixed, computing
$\phi_{jk}$ and $\mu_{jk}$ is needed only to compute $\bar{b}_j$.)
Further, by Proposition \ref{prop:caisaderivation}, this value is
computed as the posterior mean under a simple NM model, which is given
by the shrinkage operator $\shrinkg$.

In summary, for fixed $g, \sigma^2$, Algorithm \ref{alg:caisageneral}
can be reframed as a coordinate ascent algorithm for PLR, which is
Algorithm \ref{alg:caisa-simple}.

\subsubsection{Special Case of a Normal Prior (Ridge Regression)}

When the prior, $g$, is a fixed normal distribution with zero mean,
the NM posterior mean shrinkage operator $\shrinkg$ is the same as the
ridge regression (or $L_2$) shrinkage operator (Table
\ref{table:penalty}) and the penalty function $\rho_{{g_{\sigma}},
  \sigma}$ is the $L_2$-penalty. Thus, in this special case, Algorithm
\ref{alg:caisa-simple} is solving ridge regression ({\em i.e.}, PLR
with $L_2$-penalty), which is a convex optimization
problem. Furthermore, in this special case Algorithm
\ref{alg:caisa-simple} converges to the {\em exact} posterior mean of
$\b$ because the posterior is multivariate normal, and therefore the
posterior mean is equal to the posterior mode. Thus, in this special
case, even though the variational posterior approximation $q$ does not
exactly match the true posterior---the true posterior does not
factorize as in \eqref{def:mean-field}---the variational posterior
mean recovers the true posterior mean.

\subsubsection{Posterior Mean vs. Posterior Mode}

Traditional PLR approaches are sometimes motivated from a Bayesian
perspective as computing a posterior mode estimate for $\b$---{\em
  i.e.}, a {\em maximum a posteriori} (MAP) estimate---in which the
penalty term corresponds to some prior on $\b$
\citep{fu1998penalized}. For example, the Lasso is the MAP with a
Laplace (``double-exponential'') prior \citep{Figueiredo2003,
  park2008bayesian, tibshirani1996regression}. By contrast, the
variational approach seeks the {\em posterior mean}, not the posterior
mode, and likewise the VEB shrinkage \eqref{eqn:shrink} is based on an
averaging (mean) operation instead of the usual maximization (mode).
Our formulation of the (approximate) posterior mean as solving a PLR
is new, at least as far as we are aware, and this formulation may be
useful in other settings.


From a Bayesian decision-theoretic perspective (e.g., Chapter 4 of
\citealt{berger2013statistical}), the posterior mean for $\b$ has
better theoretical support than the posterior mode; not only does it
minimize the expected mean-squared error in $\b$, it also minimizes
the expected mean-squared error in the predicted, or ``fitted,''
responses $\hat{y}_i = (x_{i1}, \ldots, x_{ip})^T \b$. Although the
posterior mode can have attractive properties---for example, the
posterior mode with a Laplace prior is sparse whereas the posterior
mean with a Laplace prior is not---the posterior mode has very little
support as an estimator for $\b$, particularly when predicting $\y$ is
the main goal. To illustrate this, consider a spike-and-slab prior
\citep{mitchell1988bayesian} with non-zero mass at zero: the posterior
mode is always $\b = {\bf 0}$, which will generally provide poor
prediction performance. Also, consider that if
$\hat{\b}^{\mathrm{mode}}$ is the posterior mode of $\b$, then
$(x_{i1}, \ldots, x_{ip})^T \hat{\b}^{\mathrm{mode}}$ is not generally
the posterior mode of the fitted value $\hat{y}_i$.

\section{Numerical Experiments}
\label{sec:experiment}

We used simulations to empirically assess the predictive performance
of our proposed method and compare it with other methods. We call
the proposed method {\em multiple regression with adaptive shrinkage
  priors,} or ``\mrash''. \mrash{} is implemented in the R package {\tt
  mr.ash.alpha}, available at
\url{https://github.com/stephenslab/mr.ash.alpha}.  In the experiments,
 we used {\tt mr.ash.alpha} version 0.1-33 (git commit id
0845778). The source code and analysis steps used to generate the
results of our numerical experiments are included in a separate
repository on GitHub,
\url{https://github.com/stephenslab/mr-ash-workflow}.

\subsection{Methods Compared}
\label{subsec:methods-compared}

We compared \mrash{} with 12 different methods (Table
\ref{table:listofcompetitors}). These include a range of well
established and recently developed methods based on both PLR and
Bayesian ideas. These methods vary in (1) the choice of penalty or
prior (and possibly other modeling assumptions); and (2) the algorithm
used to fit the model (e.g., point estimation of ${\bf b}$
vs. approximate posterior inference of ${\bf b}$ via MCMC or
variational inference). When selecting methods to compare, we tried to
choose methods that were publicly available as open source software
and that were well maintained and documented. Since \mrash{} and many
other popular large-scale linear regression methods are implemented in
R, we sought out methods implemented in R \citep{R}.

Here we give a brief overview of the different methods and point out
some of their expected strengths and weaknesses:
\begin{itemize}

\item Ridge regression (\citealt{hoerl1970ridge}) and the Bayesian
  Lasso (\citealt{park2008bayesian, bglr}) are
  well adapted to dense signals, so should be competitive in such
  settings. On the other hand, they may perform poorly for sparse
  signals.

\item \lzero{} \citep{hazimeh2018fast} and \susie{} \citep{wang2018simple}
  are better adapted to sparse signals, so they should perform well in
  such settings. On the other hand, their assumptions are poorly
  suited to more dense signals.

\item The \lasso{} \citep{tibshirani1996regression} is one of the most
  widely used PLR methods. Computing the \lasso{} estimator is a
  convex optimization problem. One of the well-studied issues is that
  \lasso{} estimates can suffer from bias by overshrinking the
  strongest signals \cite[e.g.,][]{su2017false,
    javanmard2018debiasing}.

\item The Elastic Net (\citealt{zou2005regularization}) is another
  widely used convex PLR method. It has two tuning parameters. The
  Elastic Net penalty is more flexible than the \lasso{} and ridge
  regression and, indeed, it includes both as special cases. The
  Elastic Net may therefore perform well across a wider range of
  settings, at the cost of increased computation in the parameter
  tuning.

\item \scad, \mcp, the Spike-and-Slab Lasso (SSLasso) and the
  Trimmed Lasso are methods based on nonconvex or adaptive penalties
  that were designed, in part, to address limitations of the \lasso{}
  penalty \citep{bai2021spike, bertsimas2017trimmed,
    breheny2011coordinate, rovckova2018spike, trimmedlasso,
    yun2019trimming}. They might therefore outperform the \lasso{} (and
  the Elastic Net), possibly at the cost of some additional
  computation. Since these methods were primarily developed with
  sparse regression in mind, they may not always perform as well for
  dense signals.

\item \bayesb{} is a Bayesian regression method with a
  ``spike-and-slab'' prior, in which the ``slab'' is a $t$
  distribution \citep{meuwissen2001prediction, bglr}. It has the
  potential to perform well for both sparse and dense signals.  One
  concern is that the Markov chain may or may not be simulated long
  enough so as to adequately explore the posterior distribution.

\item \varbvs{} \citep{carbonetto2012scalable, varbvs} and \mrash{} both
  compute approximate posteriors using the same mean field variational
  approximation. Compared to \varbvs, \mrash{} features a more flexible
  prior, and uses a simpler and more efficient empirical Bayes
  approach to estimate the prior. \mrash{} also uses an initialization
  based on the \lasso, which, as we show below, can improve the model
  fit, particularly when the predictors are strongly correlated, or
  correlated in complex ways. We expect \mrash{} to outperform \varbvs{}
  some settings, particularly in ``dense'' settings when many of the
  predictors affect the outcome, or when the predictors are strongly
  correlated.

\end{itemize}

As an additional point of comparison in ``sparse'' data simulations,
we also show results for the ``Oracle OLS'' method, which is the
ordinary least-squares (OLS) estimate of $\b$ conditional on knowledge
of which coefficients $b_j$ are non-zero. This can be considered a
lower bound on the achievable prediction error when the number of
non-zero coefficients is small. (When the number of non-zero
coefficients is large, the Oracle OLS will perform poorly, so we do
not include the Oracle OLS result in settings with many non-zero
coefficients.)
  
A factor that inevitably complicates comparisons is that most methods
have many options and tuning parameters. Even a relatively
straightforward method such as the \lasso{} has multiple tuning
parameters that can affect performance, some of which involve
tradeoffs in computing effort versus prediction accuracy: number of
folds to use in the K-fold cross-validation step; what criterion to
use for selecting the optimal penalty strength parameter; whether to
``relax'' the fit; etc. For each method, we tried to follow the
recommendations given in the software documentation or in published
papers, and in some cases we changed the default settings to improve
performance. However, with so many methods to compare, it was
infeasible to find the best settings for each
method. Appendix~\ref{appendix:methods_compared_details} includes
additional information on how the methods were run in these
experiments.

\begin{table}[t]
\centering
\begin{tabular}{lcM{3.5in}}
method &
R package &
brief description \\
\midrule
\multicolumn{3}{l}{\bf PLR methods} \\
ridge regression & glmnet & PLR with convex $L_2$ penalty \\
Lasso & glmnet & PLR with convex $L_1$ penalty \\
Elastic Net & glmnet & PLR with linear combination of 
$L_1$ and $L_2$ penalties \\ 
SCAD & ncvreg & PLR with nonconvex SCAD penalty \\
MCP & ncvreg & PLR with minimax concave penalty \\ 
L0Learn & L0Learn & PLR with nonconvex $L_0$, $L_0L_1$ or $L_0L_2$ penalty \\
SSLasso & SSLASSO & 
PLR with adaptive penalty based on a Laplace mixture prior \\
Trimmed Lasso &
(none) &
PLR with nonseparable ``trimmed lasso'' penalty
\\[1em]
\multicolumn{3}{l}{\bf Bayesian and empirical Bayes methods} \\
 BayesB & BGLR & MCMC with spike-and-slab prior (the ``slab'' is $t$) \\
Bayesian Lasso & BGLR & MCMC with scaled Laplace prior \\ 
varbvs & varbvs & variational inference with
spike-and-slab prior \\
SuSiE & susieR & variational inference for ``SuSiE'' model
\end{tabular}
\caption{Summary of methods compared in the simulations. All methods
  are implemented in R packages except the Trimmed Lasso which is
  implemented in MATLAB.}
\label{table:listofcompetitors}
\end{table}

\subsection{Design of Simulations}
\label{subsec:settings}

To test the methods in a wide variety of settings, we designed five
sets of simulations, which we refer to as ``Experiment 1'' through
``Experiment 5.'' In each set of simulations, we varied one aspect of
the simulation while keeping the other aspects fixed.
\begin{itemize}
  
\item In Experiment 1, we varied the ``sparsity level''; that is, the
  proportion of variables with non-zero coefficients. We denote the
  sparsity level by $s$, so that $s = 1$ is the sparsest model (with
  only a single non-zero coefficient) and $s = p$ is the densest model
  (all variables affect $\y$).
  
\item In Experiment 2, we varied the ``total signal strength''; 
  specifically, the proportion of variance in the response $\y$ that is
  explained by $\X$. We refer to this parameter as ``PVE'', short for
  ``proportion of variance explained.''
  
\item In Experiment 3, we considered different distributions for the
  non-zero coefficients. We use $h$ to denote the distribution that was
  used to simulate the non-zero coefficients.
  
\item In Experiment 4, we varied the number of predictors, $p$.

\item In Experiment 5, we simulated residual errors (noise) in
  different ways. This was done to assess how departures from the
  assumption of normally-distributed noise---an assumption made by
  most methods---might affect performance.
  
\end{itemize}
These experiments focus on the case $p > n$. However, we
  note that \mrash{} also performs well in the easier case where
  $p \ll n$; see Appendix \ref{subsec:p<n}.

In all the experiments, we took the following steps to simulate each
data set:
\begin{itemize}

\item First, we generated the $n \times p$ design matrix, ${\bf
  X}$. We considered three types of design matrices: (1) {\em
  independent variables}, in which the individual observations
  $x_{ij}$ were simulated {\em i.i.d.} from the standard normal
  distribution; (2) {\em correlated variables}, in which each row of
  ${\bf X}$ was an independent draw from the multivariate normal
  distribution with mean zero and a covariance matrix diagonal entries
  set to 1 and off-diagonal entries set to $\rho \in [0,1]$; and (3)
  {\em real genotype data}, in which $\X$ was a genotype data matrix
  from the Genotype-Tissue Expression (GTEx) project
  \citep{gtex2017genetic}. (Specifically, we used the processed
  genotype data sets generated in \citealt{wang2018simple}.) In these
  data sets, the variables were genetic variants---specifically,
  single nucleotide polymorphisms, or ``SNPs''---and the SNPs
  exhibited complex correlation patterns.  Some SNP pairs had very
  strong correlations, approaching 1 or $-1$. Each genotype matrix
  $\X$ contained the genotypes of all SNPs within 1 Megabase (Mb) of a
  gene's transcription start site after filtering out SNPs with minor
  allele frequencies less than 5\% (see \citealt{wang2018simple} for
  details). Among the thousands of data sets used in
  \citealt{wang2018simple}, we randomly selected 20 data sets for our
  simulations. Unless otherwise stated, we simulated independent
  variables with $n = 500$ and $p = \mbox{1,000}$. For the genotype
  data sets, $n = 287$, and $p$ ranged from $\mbox{4,012}$ to
  $\mbox{8,760}$. Each genotype matrix $\X$ was centered and scaled so
  that the mean of each column was zero and its standard deviation was
  1. The other data matrices were not centered or scaled.

\item We chose $s$, the number of non-zero coefficients. Unless stated
  otherwise we set $s = 20$. We selected the indices $j \in \{1,
  \ldots, p\}$ of the $s$ non-zero coefficients uniformly at random
  among the $p$ variables.

\item We simulated the $s$ non-zero coefficients $b_j$ {\em i.i.d.}
  from some distribution, $h$. We used the following distributions:
  standard normal; uniform on $[-1,1]$; double-exponential (Laplace)
  distribution centered at zero with variance $2\lambda^2$, $\lambda =
  1$ \citep{bda}; $t$-distribution with 1, 2, 4 and 8 degrees of
  freedom; and a point mass (so that all coefficients were the
  same). Unless stated otherwise, we simulated the coefficients from
  the standard normal distribution.

\item Finally, we simulated the responses $y_i = \sum_{j=1}^p x_{ij}
  b_j + e_i$, where $e_i$ was drawn from some noise distribution. We
  used the following noise distributions: normal with mean zero;
  uniform distribution with mean zero; double-exponential (Laplace)
  distribution centered at zero; and $t$-distribution with 1, 2, 4 and
  8 degrees of freedom. In all cases, the variance of the noise
  distribution was adjusted to attain the target PVE; specifically,
  denoting the variance of the noise distribution by $\sigma^2$, we
  set it to $\sigma^2 = \mathrm{Var}(\X\b) \times \frac{1 -
    \mathrm{PVE}}{\mathrm{PVE}}$. Unless stated otherwise, we set
  $\mathrm{PVE} = 0.5$, and simulated $e_i \sim N(0, \sigma^2)$, 
  with $\sigma^2$ adjusted to attain the target PVE of 0.5.

\end{itemize}
We repeated the simulations 20 times for each simulation setting in
Experiments 1--5. The test sets used to evaluate the model fits were
the same size as the training sets and generated using the same
coefficients $\b$.

\subsection{Evaluation}
\label{subsec:evaluation}

Each method returns $\bhatbold$, an estimate of the regression
coefficients. We evaluated this estimate using the (scaled) root
mean squared prediction error in the test data:
\begin{equation}
\mbox{RMSE-scaled}(\y_{\mathrm{test}}, \bhatbold) 
\triangleq
\frac{\textrm{RMSE}(\y_{\mathrm{test}}, \bhatbold)} 
     {\textrm{RMSE}(\bhatbold = {\bm 0})},
\label{eqn:performance_measure}
\end{equation}
where
\begin{equation}
\textrm{RMSE}(\y_{\mathrm{test}}, \bhatbold) 
\textstyle \triangleq \frac{1}{\sqrt{n}}
\norm{\y_{\mathrm{test}} - \X_\mathrm{test}\bhatbold}.
\label{eqn:rmse}
\end{equation}
$\textrm{RMSE}(\hat{\b} = {\bm 0}) \triangleq \sigma/\sqrt{1 -
  \textrm{PVE}}$ denotes the expected RMSE for the ``null predictor'',
$\bhatbold = {\bm 0}$. The value of ``RMSE-scaled'' will vary from
$\sqrt{1 - \mathrm{PVE}}$ (for the ``oracle'' predictor) to
approximately 1 (for the null predictor); however, values greater than
1 are possible if $\bhatbold$ performs worse than the null predictor.

\subsection{Results}
\label{subsec:predperformance}

We present the results of Experiments 1--5 in Sections
\ref{subsec:varying-sparsity}--\ref{subsec:varying-noise-dist}, and we
give a high-level summary in
Section~\ref{subsec:relative_performance}.

\subsubsection{Experiment 1---Varying the Sparsity Level}
\label{subsec:varying-sparsity}

\begin{figure}[t]
\centering
\includegraphics[width=0.975\textwidth]{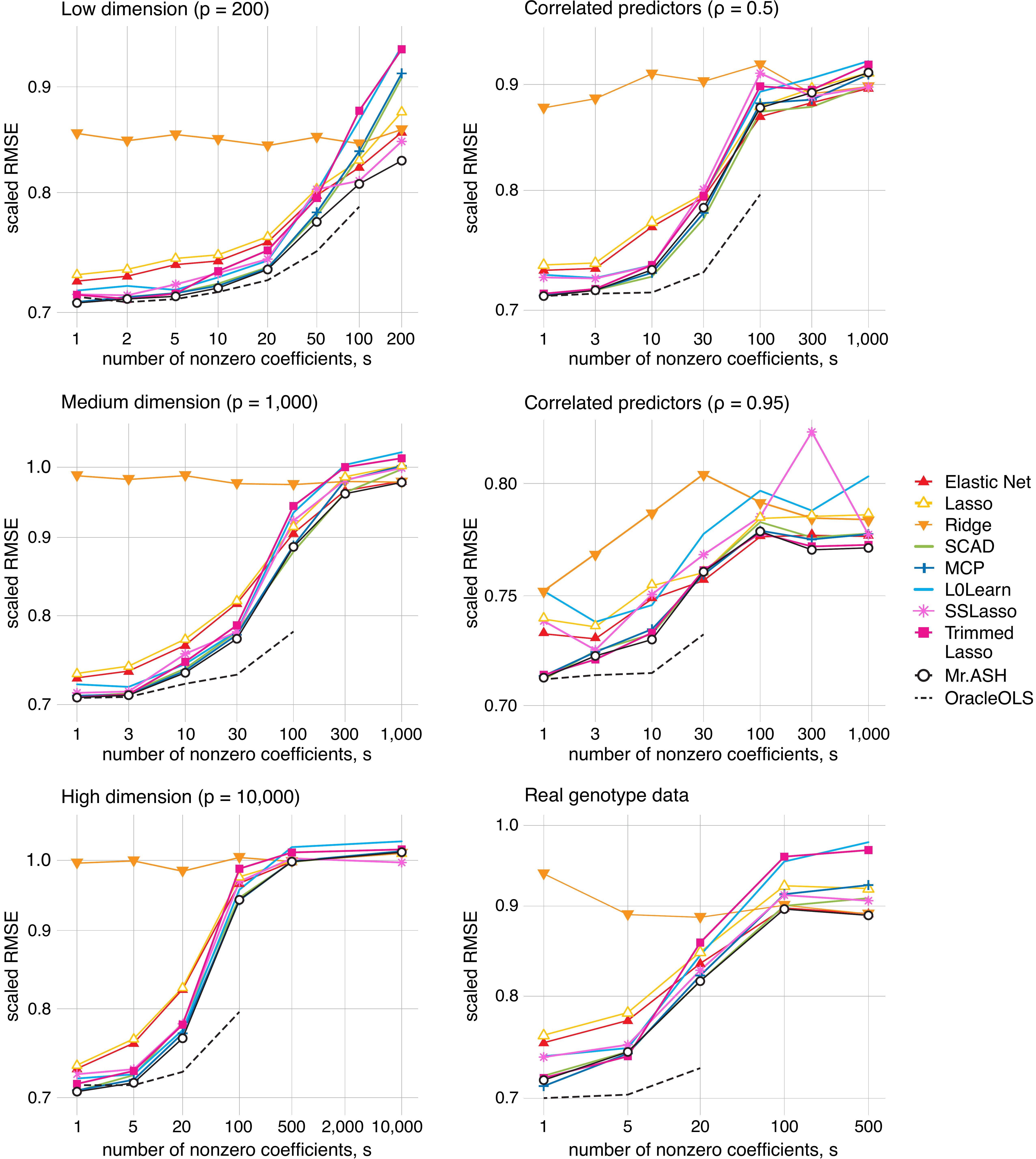}
\caption{Results from Experiment 1 in which the sparsity
    level, $s$, was varied. Each point shows the prediction error
    (scaled RMSE) averaged over 20 simulations.}
\label{fig:plrsparsity}
\end{figure}

\begin{figure}[t]
\centering
\includegraphics[width=\textwidth]{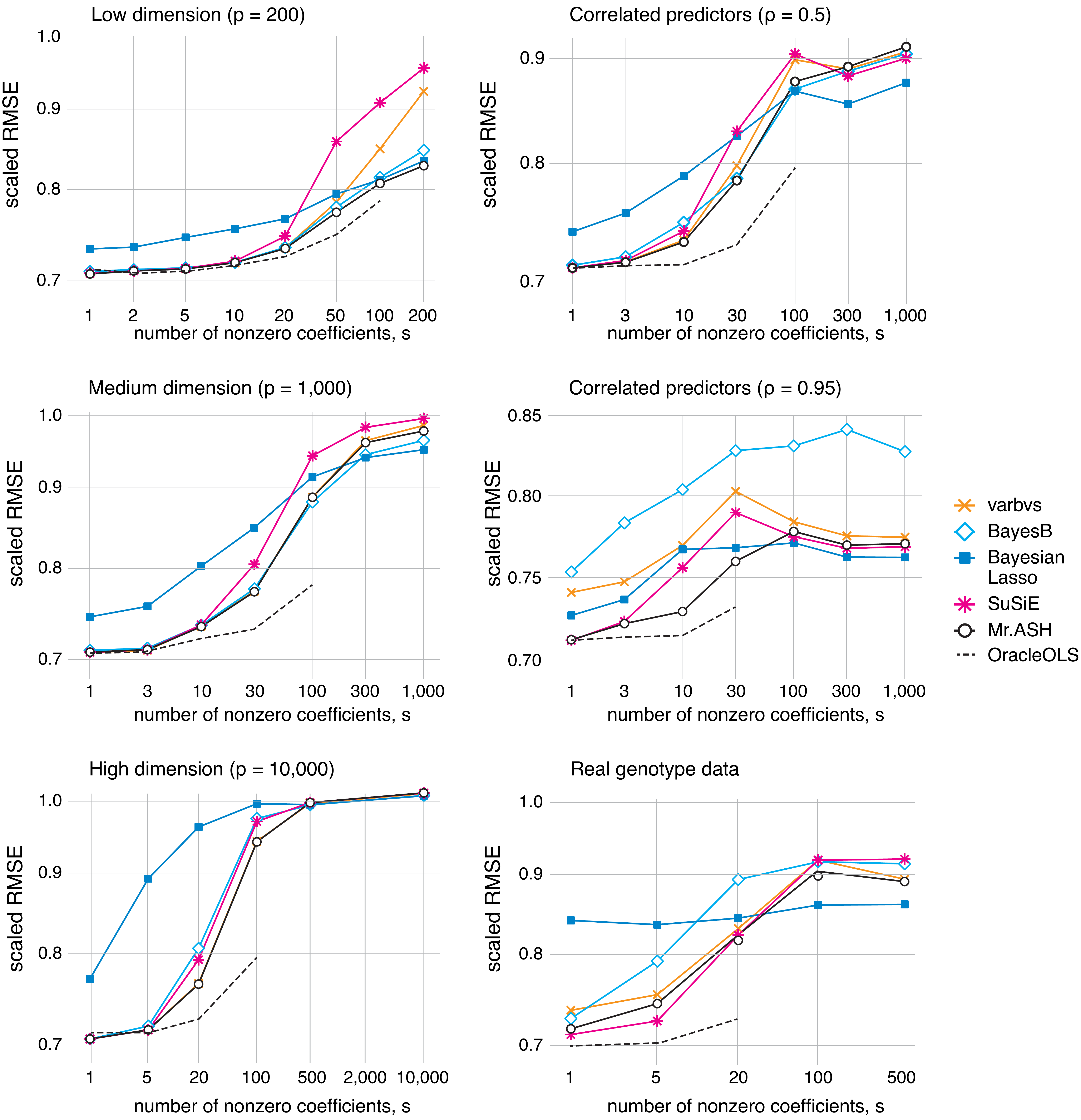}
\caption{Results from Experiment 1 in which the sparsity
    level, $s$, was varied. Each point shows prediction error
    (scaled RMSE) averaged over 20 simulations. 
The \mrash{} results are the same as in
Figure~\protect\ref{fig:plrsparsity} and provide a common point of
reference.}
\label{fig:blrsparsity}
\end{figure}

In the first set of simulations, we varied $s$, the number of non-zero
predictors, which controls the model sparsity when $p$ is fixed.  The
results of these simulations, summarized in Figures
\ref{fig:plrsparsity} and \ref{fig:blrsparsity}, highlight differences
in performance between the methods that were better suited to a
particular level of sparsity versus the methods that better adapted to
different levels of sparsity. For example, \susie, \lzero{} with the
$L_0$ penalty and the Trimmed Lasso better adapted to sparse settings
and often performed poorly in dense settings; by contrast, ridge
regression and the Bayesian Lasso are better adapted to dense
settings, and as expected performed worse in sparse settings.

Other methods performed more consistently across different sparsity
levels. In particular, the \lasso{} and Elastic Net performed somewhat
similarly, with the Elastic Net usually performing slightly better,
but in many settings there was a noticeable gap in performance between
these two methods compared with the best performing method. The
nonconvex, penalty-based methods \mcp{} and \scad{} performed similarly to
one another, and were competitive in many settings, except in
denser-signal data sets in some scenarios (e.g., in the low-dimension
settings, with $p = 200$ non-zero coefficients). The Spike-and-Slab
Lasso (SSLasso) with the adaptive penalty performed competitively
across a range of sparse to dense settings when predictor variables
were independent, but it performed less well (and sometimes very
poorly) in settings with correlated predictors.

\mrash{} was competitive at all sparsity levels, consistently
achieving the best performance or close to the best in almost all
simulation settings. \mrash{} tended to outperform other methods in
data sets with correlated predictors, with a couple of exceptions:
the Bayesian Lasso had better prediction accuracy in settings with
the densest signals, and \susie{} was better in some of the sparse
simulation settings with genotype data. One possible explanation for
this is that the mean-field variational approximation used by
\mrash{} is less appropriate in settings with strongly correlated
predictors, whereas the Bayesian Lasso and \susie{} do not have this
issue (Bayesian Lasso uses MCMC to compute the posterior estimate;
and \susie{} uses a different variational approximation that can deal
with strong correlations).
  
Overall, these results illustrate the versatility of \mrash{} and the
effectiveness of the VEB approach to adapt to different sparsity
levels by adapting the prior---and therefore penalty---to the data.

Two other results stand out. First, \bayesb{} performed poorly in the
simulations with correlated predictors. In principle, correlated
predictors should not cause problems for Bayesian methods such as
\bayesb.  We suspect that this reflects failure of the MCMC to
converge and that the performance of BayesB (and the Bayesian Lasso)
would improve with longer MCMC runs.

Second, although \varbvs{} is based on the same variational
approximation as \mrash, its prediction accuracy was generally worse
than \mrash. This is particularly evident in some settings with dense
signals, probably because the default settings in \varbvs{} are
designed to favor sparse priors.  Also, \varbvs{} performed worse than
\mrash{} in some data sets with correlated predictors, probably
because \varbvs{} does not put as much effort into
initialization. (See Appendix~\ref{subsec:initorder} for
investigations of the impact of initialization on the performance of
\mrash{}.)

\subsubsection{Experiment 2---Varying Total Signal Strength}
\label{subsec:varying-total-signal}

\begin{figure}[t]
\centering
\includegraphics[width=\textwidth]{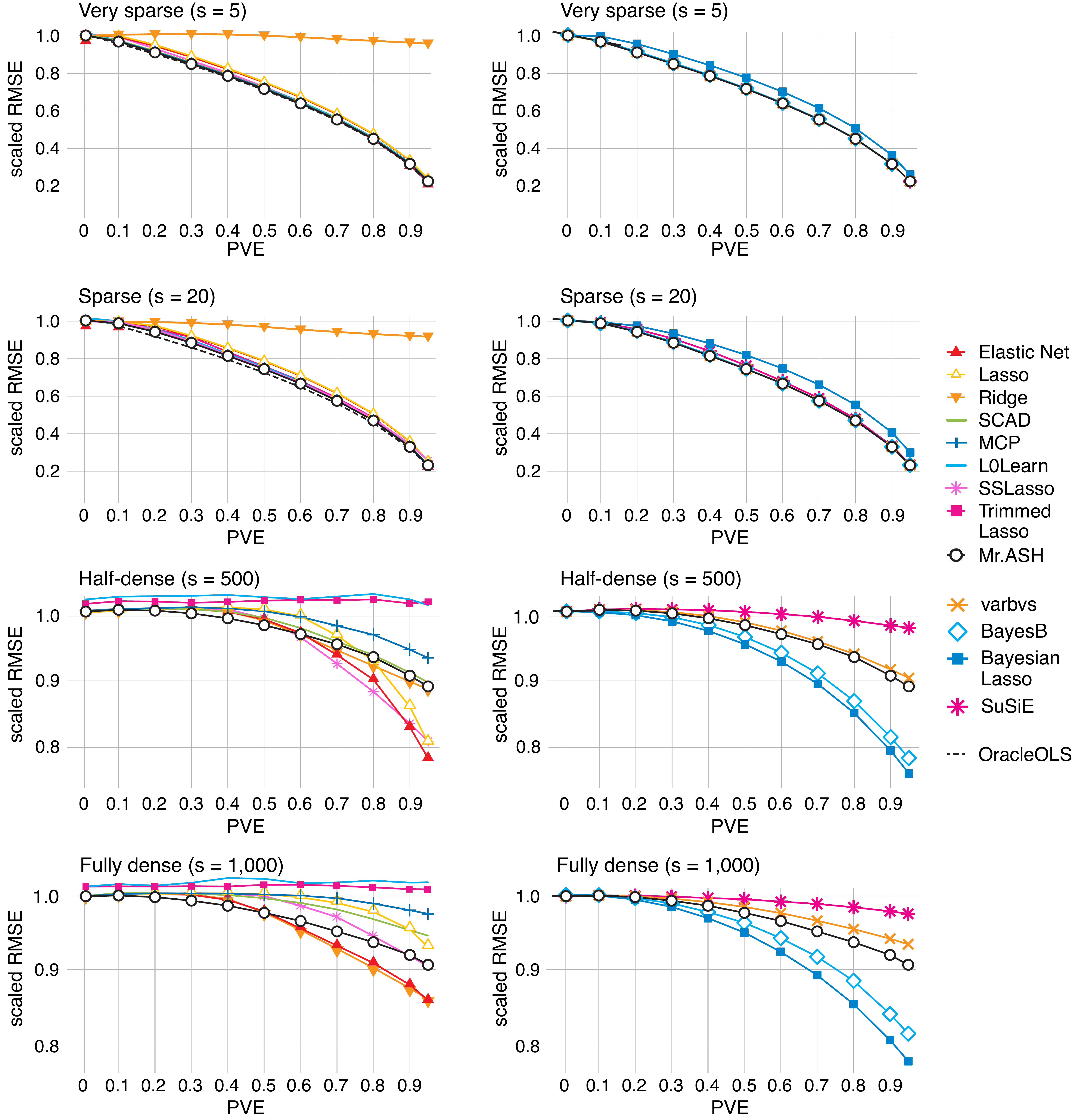}
\caption{Results from Experiment 2 in which total signal
    strength (PVE) was varied. Each point shows prediction error
    (scaled RMSE) averaged over 20 simulations. Left panels show PLR
    methods; right panels show Bayes-based methods. The \mrash{}
    results are included in all plots to provide a common reference
    point.}
\label{fig:plr-npveshape-1}
\end{figure}

We did not expect strong systematic differences in performance as the
total signal strength (PVE) varied. Results for sparse simulations
(top half of Figure~\ref{fig:plr-npveshape-1}) generally matched these
expectations; the performance curves of the different methods
generally did not cross as the PVE varies. And, among the methods
compared, \mrash{} consistently performed among the best. However,
results in dense simulations (bottom half of
Figure~\ref{fig:plr-npveshape-1}), showed a different pattern: some
methods that performed competitively at moderate PVE were no longer
competitive at high PVE. This included \mrash, which was unexpected
because \mrash{} should, in principle, adapt well to both sparse and
dense signals. Further, while \mrash{} includes ridge regression as a
special case, the ridge regression estimates yielded better
predictions than \mrash{} in these dense data sets, suggesting a
failure of the variational EB approach to appropriately adapt the
prior.  We believe that this failure occurred because the fully
factorized (mean-field) variational method has a tendency in some
settings to favor sparse priors over dense priors; under sparse
priors, the true posterior distribution comes close to factorizing so
the gap between the ELBO and the true evidence is small ({\em i.e.},
the ELBO is a tight lower bound), whereas under dense priors with
large $p$ there are stronger dependencies in the posterior, especially
when the PVE is large. Thus, the factorization assumption is more
strongly violated and the ELBO underestimates the evidence more. Since
our EB approach seeks to optimize the ELBO in place of the true
evidence, this creates a bias towards estimating sparse priors rather
than dense priors. In this sense, the VEB approach could be
characterized as leaning towards the ``bet on sparsity'' principle
\cite[][p.~610]{hastie2009elements}, which argues that one should
``use a procedure that does well in sparse problems, since no
procedure does well in dense problems.'' These simulations therefore
suggest a situation (dense signal, high PVE, $p = 2n$) where this
principle fails.
  
The good performance of \bayesb{} in these simulations with dense
signals and high PVE suggests an advantage of MCMC over the
variational approximation in dense scenarios in which many predictors
have a small effect on the regression outcome.  Consistent with
Experiment 1, ridge regression produced comparatively poor predictions
in sparse settings, whereas L0Learn, SuSiE and the Trimmed Lasso had
worse accuracy in dense settings.

\subsubsection{Experiment 3---Varying Signal Distribution} 
\label{subsec:varying-signal-dist}

\begin{figure}[t]
\centering
\includegraphics[width=\textwidth]{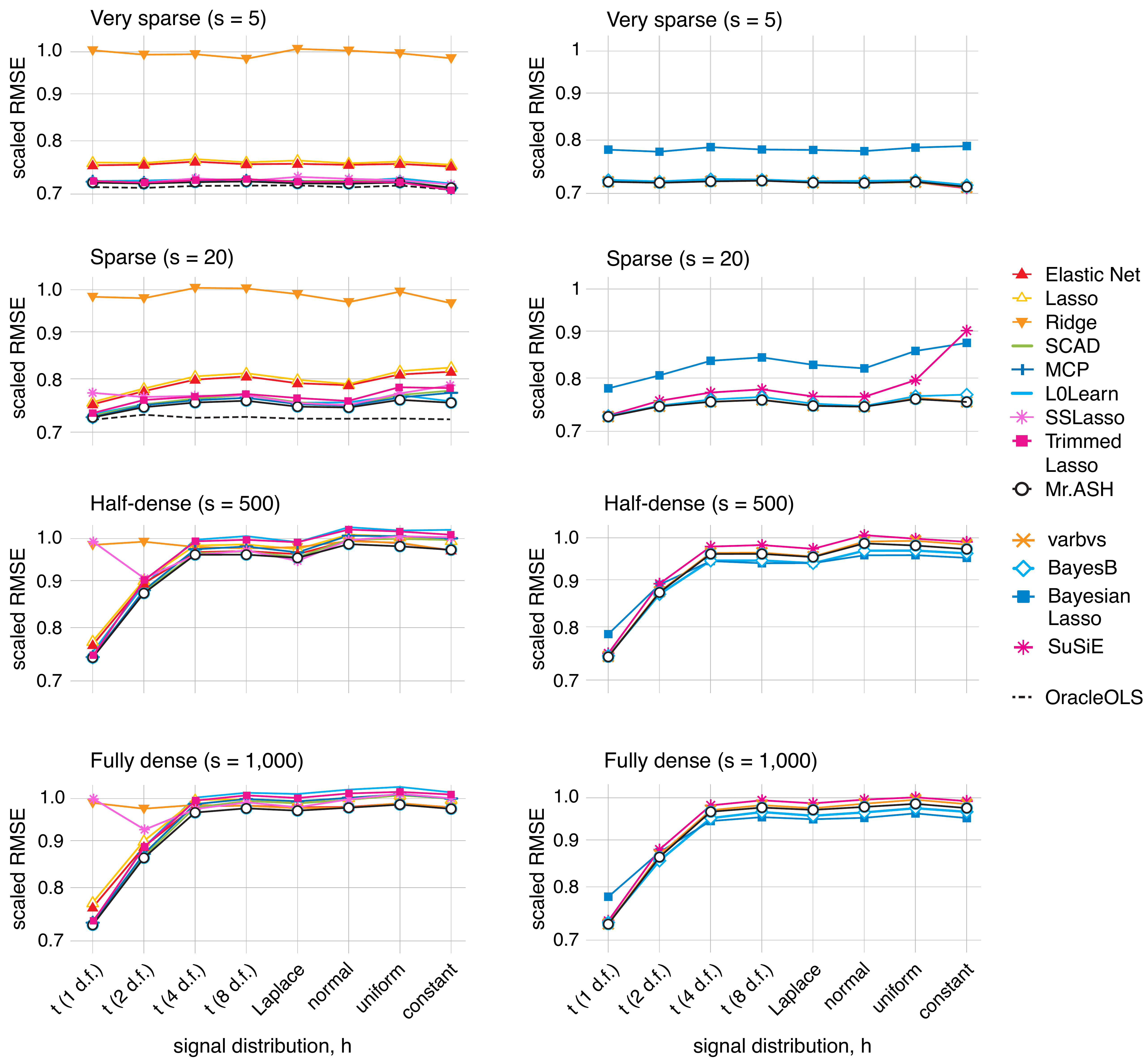}
\caption{Results from Experiment 3 in which the signal
    distribution ($h$) was varied. Each point shows the prediction
    error (scaled RMSE) averaged over the 20 simulations at that
    setting. Left panels show PLR methods; right panels show
    Bayes-based methods. The \mrash{} results are included in all the
    plots to provide a common reference point.}
\label{fig:plr-npveshape-2}
\end{figure}

For most methods, the distribution used to simulate the
coefficients, $h$, had only a small impact on performance
(Figure~\ref{fig:plr-npveshape-2}). There were a few exceptions to
this. For example, in dense settings the ridge regression method
struggled with long-tailed effect-size distributions such as the
$t$-distribution with 1 degree of freedom. Consider that,
when $h$ is long-tailed, often a small number of coefficients will
dominate, so the normal prior in ridge regression is poorly suited
for this case. The SSLasso also performed poorly in this setting
for reasons that remain unclear to us.

\subsubsection{Experiment 4---Varying the Number of Predictors}
\label{subsec:varying-p}

\begin{figure}
\centering
\includegraphics[width=0.95\textwidth]{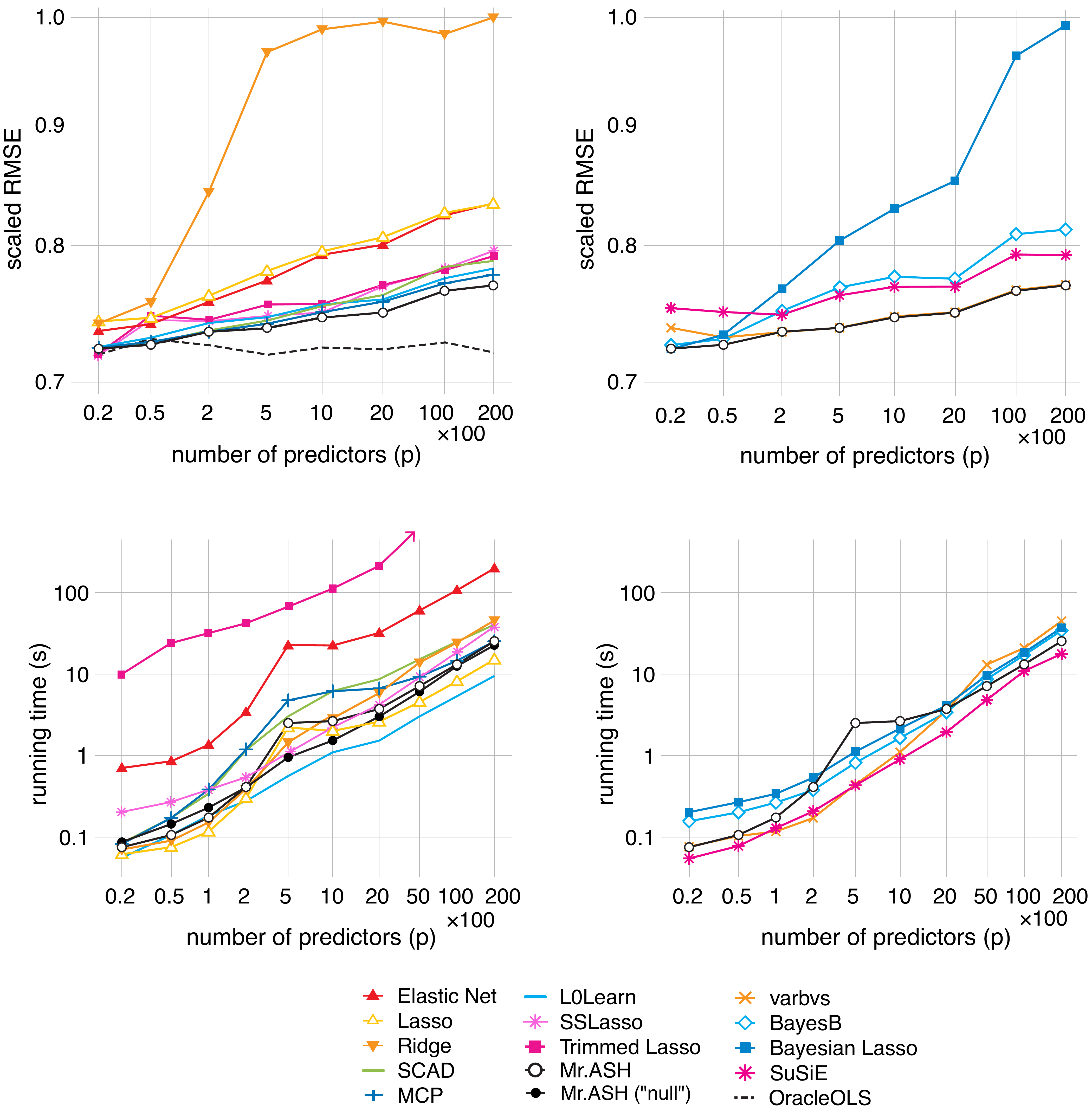}
\caption{Results from Experiment 4 in which the number of
    predictors, $p$, was varied. Each point shows the scaled RMSE (top
    row) and the running time (bottom row) averaged over 20
    simulations. Left panels show PLR methods; right panels show
    Bayes-based methods.  The \mrash{} results are included in all
    plots to provide a common reference point. Running times for all
    methods except Trimmed Lasso are from running the methods in R 3.5.1
    \citep{R} on a machine with a quad-core 2.6 GHZ Intel Core i7
    processor and 16 GB of memory. R was installed from macOS
    binaries and we used the BLAS libraries that were distributed with
    R. The Trimmed Lasso was run using MATLAB 9.13 on machines with 4
    Intel Xeon E5-2680v4 (``Broadwell'') processors and 24 GB of
    memory.  The \mrash{} running times include running the \lasso{} to
    initialize the estimates; the running times for \mrash{} with a
    ``null'' initialization ($\b = {\bm 0}$) are also shown for
    comparison.}
\label{fig:highdim}
\end{figure}

This experiment assessed how prediction accuracy and computational
effort change with the number of predictors, $p$
(Figure~\ref{fig:highdim}). In general, running times for different
methods increased similarly with $p$. For large $p$, SuSiE, L0Learn
and the Lasso were the fastest methods. The Elastic Net was
considerably slower than other methods because we tuned both of the
Elastic Net parameters by CV whereas other methods tuned by CV
involved tuning only one parameter. (The Elastic Net could be run
faster by tuning only one parameter but at the risk of losing accuracy
in some settings.) Fitting the \mrash{} prior involved tuning a large
number of parameters, but because it tuned these parameters via VEB
rather than by CV, \mrash{} ended up being roughly as fast as methods
that tune a single parameter by CV. This is an important benefit of
the VEB approach.

The long running times for the Trimmed Lasso were due to running the
algorithm at several different settings of its ``target sparsity
level'' parameter (`$k$'), and the Trimmed Lasso was slow when $k$
was large (even with the settings that were suggested in the software
documentation for larger data sets). When $k$ was small, say, less
than 10, the Trimmed Lasso running times were comparable to the other
methods.

An important detail is that a large fraction of the effort in running
\mrash{} was due to running the \lasso{} (which was used to initialize
\mrash{}). The \lasso{} initialization often greatly reduced the
number of iterations required for the \mrash{} coordinate ascent
updates to converge, so the total running time of \mrash{} with Lasso
initialization was often not much greater than \mrash{} with null
initialization. 

Although MCMC methods have a reputation for being slow, here the
MCMC-based methods---the Bayesian Lasso and \bayesb---had similar
running times to other methods. The running time of these methods also
increased approximately linearly with $p$ because the defaults in the
software implementations set the number of iterations proportional to
$p$. (The per-iteration cost is independent of $p$ when the model
configurations are sparse.)  However, as $p$ increased, the relative
prediction performance of these methods got worse. For the Bayesian
Lasso, this was probably due to the fact that we fixed $s$ (the number
of non-zero coefficients) to simulate the data sets, so simulations
with larger $p$ were based on sparser models, and the Bayesian Lasso
tends to be less competitive in sparser settings. For \bayesb, the
reduction in prediction accuracy may instead reflect a failure of the
Markov chain to adequately explore the posterior distribution, and
perhaps better performance could have been achieved by running the
MCMC longer (at increased computational cost). Nonetheless, it is
interesting that \bayesb{} obtained better prediction accuracy than
the \lasso{} at roughly the same computational effort.

To summarize, \mrash{} consistently achieved the best
prediction accuracy in these simulations, with computational effort
comparable with the fastest methods such as L0Learn
and Lasso.

\subsubsection{Experiment 5---Varying Noise Distribution}
\label{subsec:varying-noise-dist}

\begin{figure}[!t]
\centering
\includegraphics[width=\textwidth]{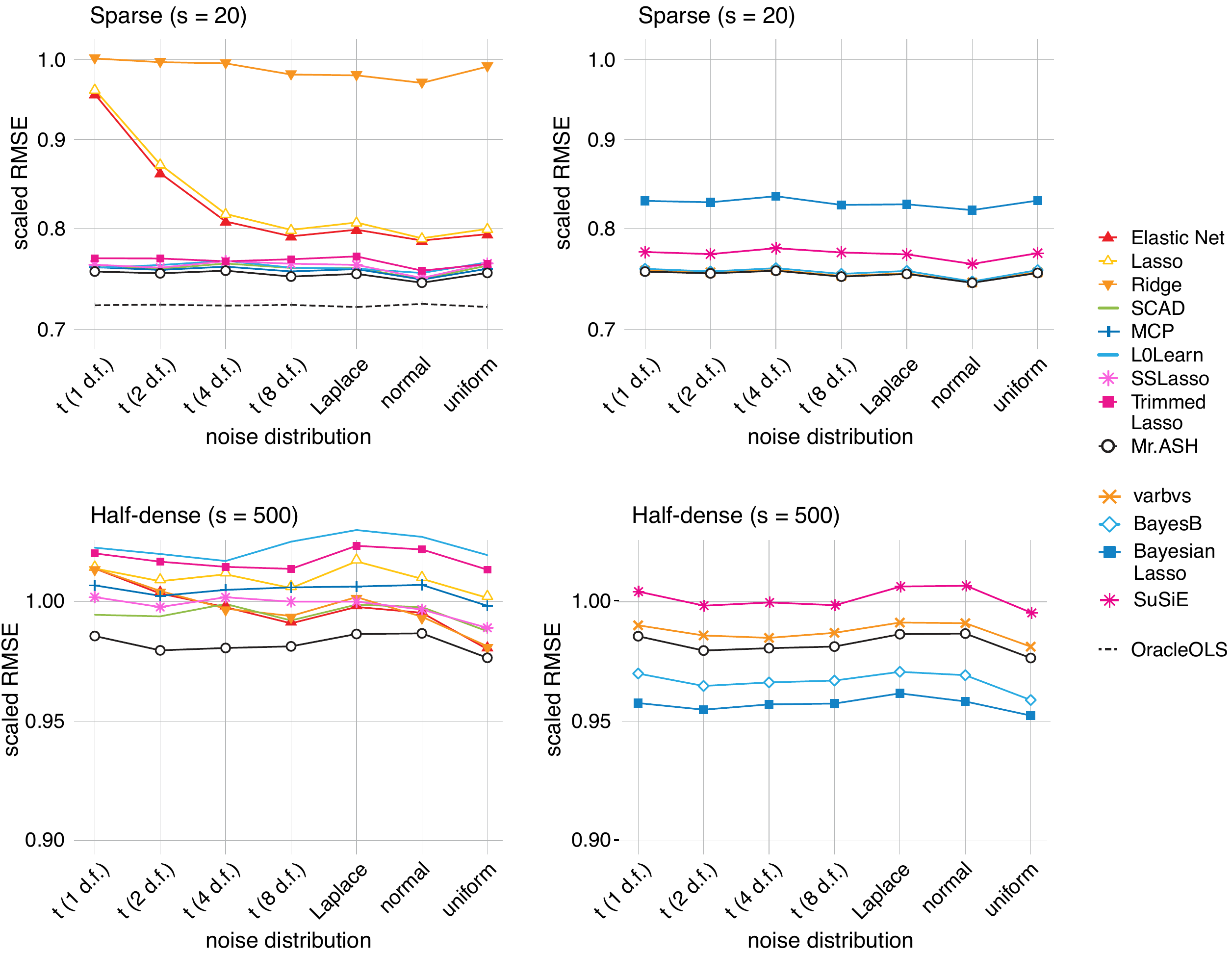}
\caption{Results from Experiment 5 in which the noise
    distribution was varied. Each point shows the prediction error
    (scaled RMSE) averaged over 20 simulations. Left panels show PLR
    methods; right panels show Bayes-based methods. The \mrash{}
    results are included in all plots to provide a common reference
    point.}
\label{fig:noisedist}
\end{figure}

In our final set of experiments, we simulated data sets with
different noise distributions. Reassuringly, most methods were
largely insensitive to the noise distribution
(Figure~\ref{fig:noisedist}). However, the Lasso and Elastic Net both
performed poorly in sparse settings when the noise was very
heavy-tailed ($t$ distribution with small degrees of freedom). We do
not have an explanation for this result, which we have not seen 
previously.

\subsubsection{Summary of the Results}
\label{subsec:relative_performance}

\begin{figure}[t]
\centering
\includegraphics[width=0.84\textwidth]{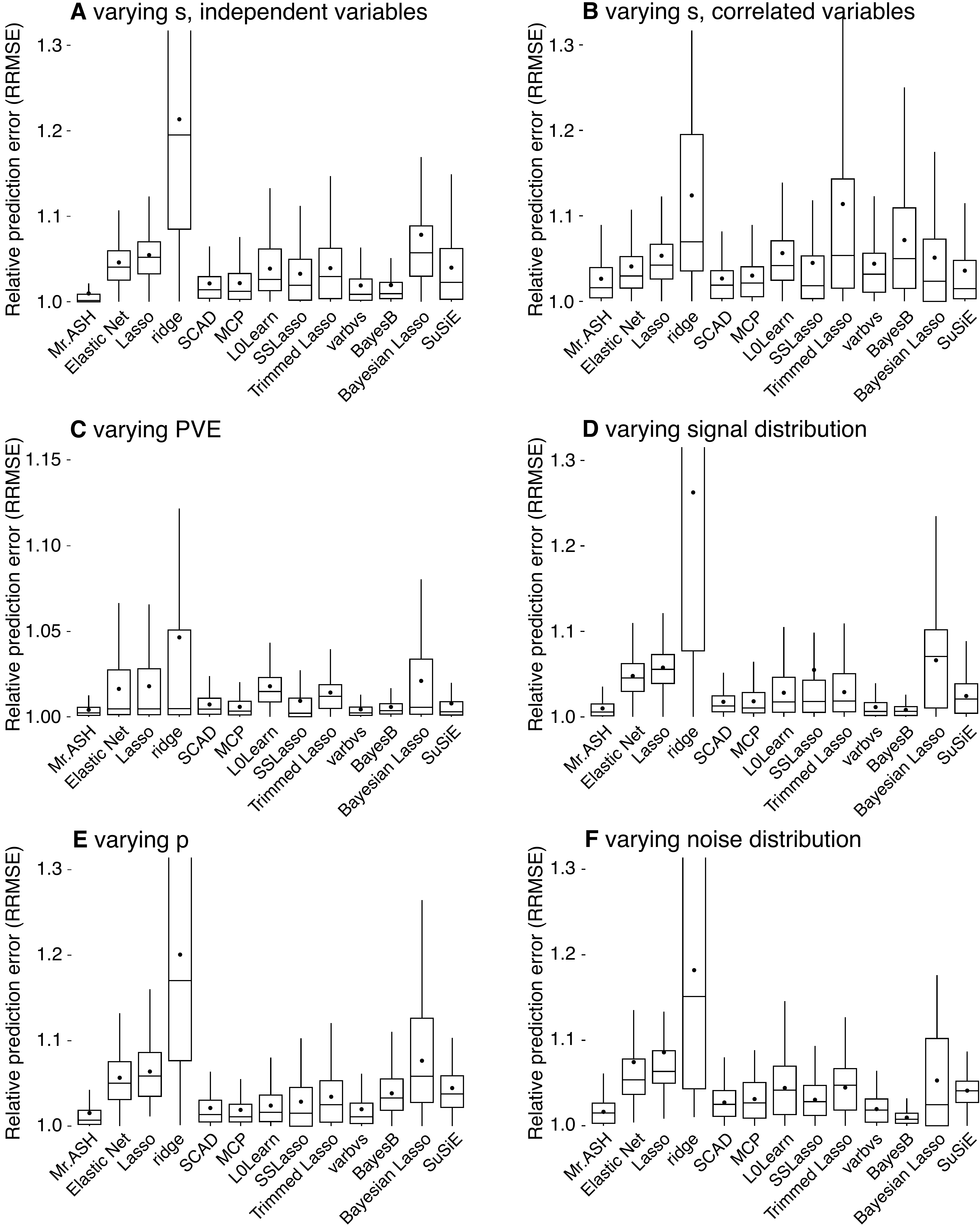}
\caption{Summary of results from Experiments 1--5. The
    simulation results are summarized slightly differently in this
    figure to emphasize common trends. The boxplots show the
    distribution of RMSEs relative to the best performing method in
    each simulation (see the text for details). The horizontal line
    inside each box depicts the median; the dot depicts the mean; and
    the upper and lower lines depict the interquartile range.}
\label{fig:rrmse}
\end{figure}

The five experiments highlighted some differences in performance and
behaviour among the multiple regression models. In
Figure~\ref{fig:rrmse}, we give a higher level summary of the results
across all five of the experiments. To produce this summary, for each
simulation $t$ we calculated the relative prediction accuracy as the
ratio of the RMSE to the best RMSE achieved in that simulation among
all the methods compared, $\mathrm{RRMSE}_{tm} \triangleq
\mathrm{RMSE}_{tm}/ \min_{m'} \mathrm{RMSE}_{tm'}$, where
$\mathrm{RMSE}_{tm}$ is the root mean squared error \eqref{eqn:rmse}
generated by model $m$ for the test set in simulation $t$. Defined in
this way, the RRMSE can never be smaller than 1 and the most accurate
method in a given simulation has an RRMSE of 1.

\begin{table}[t]
\centering
\begin{tabular}{l@{\;\;}r}
method & running time (s) \\
\midrule
L0Learn & 8.12 \\
\mrash (``null'') & 14.96 \\
SSLasso & 16.19 \\
Lasso & 17.98 \\
Bayesian Lasso & 18.24 \\
\mrash & 18.56 \\
MCP & 21.19 \\
SuSiE & 22.54 \\
BayesB & 23.77 \\
ridge regression & 32.88 \\
SCAD & 33.59 \\
varbvs & 52.26 \\
Elastic Net & 223.66 \\
Trimmed Lasso & 609.54 
\end{tabular}
\caption{Average running times in Experiments 1--3.}
\label{table:comptime}
\end{table}

Figure~\ref{fig:rrmse} highlights the consistently good
prediction accuracy of \mrash{} compared with the other methods across
a range of settings; in all the panels (A--F), \mrash's performance was
the best, or close to the best, among all methods compared, and was
rarely much worse than the best method. The benefits of \mrash{}
were more mixed in the simulations with correlated predictors (Panel
B). Yet, even in these simulations, \mrash{} remained competitive,
achieving an average prediction accuracy that was among the
best. This good accuracy was also obtained efficiently with a
computational effort that was not much higher than the fastest
methods such as \lzero{} and the \lasso{}
(Table~\ref{table:comptime}).

\section{Discussion}
\label{sec:conclusion}

We have presented a new VEB method for multiple linear regression with
a focus on fast and accurate prediction. This VEB method combines
flexible shrinkage priors with variational methods for efficient
posterior computations. Variational methods and EB methods are
sometimes criticized because of their tendency to understate
uncertainty compared with ``fully Bayesian'' methods; see
\cite{morris1983parametric}, \cite{wang2005inadequacy} and references
therein for discussion. However, for some applications uncertainty is
of secondary importance compared with speed and accuracy of point
estimates. For example, speed and accuracy is often important when
multiple regression is used simply to build an accurate predictor for
downstream use (see \citealt{gamazon2015gene} for one such
application). Our VEB approach seems particularly attractive for such
uses.

A natural next step would be to produce similar VEB methods for
non-Gaussian ({\em i.e.}, generalized) linear models
\citep{mccullagh1989generalized}. Extension of our methods to logistic
regression should be possible via additional approximations that allow
for efficient analytic computations \citep{carbonetto2012scalable,
  varbvs, jaakkola2000bayesian, marlin-2011,
  pattern-recognition-machine-learning,
  wang2013variational}. Extension to other types of outcome
distributions and link functions should also be possible but may
require more work.

Our work also illustrates the benefits of an EB approach in an
important and well studied statistical problem. While there is much
theoretical \citep{johnstone2004needles} and empirical
\citep{efron2008microarrays} support for the benefits of EB
approaches, EB approaches have not been widely adopted outside of
specific research topics such as wavelet shrinkage
\citep{johnstone2005empirical} and moderated estimation in gene
expression studies \citep{smyth2004linear, lu2016variance, zhu-2019}.
Recent work has highlighted the potential for EB methods in other
applications, including smoothing non-Gaussian data
\citep{xing2016flexible}, multiple testing \citep{stephens2016false,
  sun2018solving, urbut2019flexible, gerard2020empirical}, matrix
factorization \citep{wang2018empirical} and additive models
\citep{wang2018simple}. We hope that these examples, including our
work here, will inspire readers to apply EB approaches to new
problems.

\acks{We thank Gao Wang for help with processing the the GTEx data,
  and for helpful discussions. This work was supported by NIH grant
  R01HG002585 to Matthew Stephens. We also thank the anonymous
  reviewers for their constructive suggestions for improvement and the
  staff at the Research Computing Center for providing the
  high-performance computing resources used to implement the numerical
  experiments.}

\newpage

\appendix

\def\section{\@startsiction{section}{1}{\z@}{-0.24in}{0.10in}
             {\large\bf\raggedright Appendix }}

\section{Shrinkage Operators of Commonly Used Penalties} 
\label{appendix:penalties}

\begin{table}[h!]
\centering
\begin{tabular}{@{}l@{\;\;}l@{\;\;}l@{}}
\hspace*{1ex}method &
penalty function $\rho(t)$ &
shrinkage operator $S_{\rho}(t)$ \\ \midrule
$\begin{array}{l}
\mbox{normal shrinkage (ridge} \\
\mbox{regression, $L_2$ penalty)}
\end{array}$
& $\lambda t^2/2$
& $\frac{t^2}{1 + \lambda}$ \\[1em]
$\begin{array}{l}
\mbox{hard thresholding (best} \\
\mbox{subset, $L_0$ penalty)}
\end{array}$
& $\lambda \times \mathbb{I}\{|t| > 0\}$
& $\left\{\begin{array}{ll}
t & \textrm{if } t < -\lambda, \\
t & \textrm{if } t > \lambda, \\
0 & \mbox{otherwise}
\end{array}\right.$ \\[2.5em]
$\begin{array}{l}
\mbox{soft thresholding} \\
\mbox{(Lasso, $L_1$ penalty)}
\end{array}$ &
$\lambda|t|$ &
$S_{\mathrm{soft}, \lambda} \triangleq
\left\{\begin{array}{ll}
t + \lambda & \mbox{if } t < -\lambda, \\
t - \lambda & \mbox{if } t > \lambda, \\
0 & \mbox{otherwise} 
\end{array}\right.$ \\[2.5em]
\hspace*{1ex}Elastic Net
& $(1 - \eta)\lambda t^2/2 + \eta\lambda|t|$
& $\begin{array}{l}
S_{\mathrm{soft},\eta\lambda/a}(t/a), \\
a = 1 + (1 - \eta)\lambda
\end{array}$ \\[2em]
$\begin{array}{l}
\mbox{Minimax Concave} \\
\mbox{Penalty}
\end{array}$
& $\left\{\begin{array}{ll}
\lambda |t| - t^2/(2\eta) & \textrm{if } |t| \leq \eta \lambda, \\
\eta\lambda^2/2 & \mbox{otherwise}
\end{array}\right.$
& $\left\{\begin{array}{ll}
\frac{S_{{\rm soft}, \lambda}(t)}{1 - 1 / (\eta - 1)} &
\textrm{if } |t| \leq \eta \lambda, \\
t & \mbox{otherwise}
\end{array}\right.$ \\[2.5em]
$\begin{array}{l}
\mbox{Smoothly Clipped} \\
\mbox{Absolute Deviation}
\end{array}$
& $\left\{\begin{array}{ll}
\lambda|t| &\textrm{if } |t| \leq 2 \lambda, \\
\lambda^2(\eta + 1)/2 & \textrm{if } |t| > \eta \lambda, \\
\frac{\eta\lambda|t| - (t^2 + \lambda^2)/2}{\eta - 1}
& \textrm{otherwise}
\end{array}\right.$
& $\left\{\begin{array}{ll}
S_{\mathrm{soft}, \lambda}(t) & \textrm{if } |t| \leq 2 \lambda, \\
t &\textrm{if } |t| > \eta \lambda, \\
\frac{S_{{\rm soft}, \eta\lambda/(\eta - 1)}(t)}{1 - 1/(\eta - 1)}
& \textrm{otherwise}
\end{array}\right.$
\end{tabular}
\caption{Some commonly used penalty functions and their corresponding
  shrinkage operators.}
\label{table:penalty}
\end{table}

\section{Additional Notes on the Methods Compared}
\label{appendix:methods_compared_details}

Here we give additional details about how the methods were applied to
the simulated data sets.

\paragraph{Ridge regression.} 

We used function {\tt cv.glmnet} from the {\tt glmnet} R
package to choose the penalty strength by CV. Specifically, we
called {\tt cv.glmnet} with {\tt alpha = 0}, {\tt intercept = TRUE}
and {\tt standardize = FALSE}; all other settings were kept at their
defaults. The setting of the penalty strength parameter $\lambda$
minimizing the mean CV error ({\tt lambda.min}) was
used to make predictions.

\paragraph{Lasso.}

We fit Lasso models in the the same way that we fit ridge
regression models using {\tt glmnet}, except that we set the Elastic
Net mixing parameter to {\tt alpha = 1}.

\paragraph{Elastic Net.}

Elastic Net models were fit similarly to ridge regression and Lasso
models, again using the {\tt cv.glmnet} interface from the {\tt
  glmnet} package. The difference is that the Elastic Net involves two
tuning parameters: the penalty strength parameter $\lambda$ and the
``mixing'' parameter $\alpha$. {\tt glmnet} does not provide an
automated way to chose $\alpha$ by CV, so we ran {\tt cv.glmnet} for
11 settings of $\alpha$ ranging from 0 to 1 then we chose $(\alpha,
\lambda)$ minimizing the 10-fold CV error.

\paragraph{SCAD and MCP.}

We called the {\tt cv.ncvreg} function from the R package {\tt ncvreg}
which performs $k$-fold CV to select the regularization parameter. We
called {\tt cv.ncvreg} with {\tt nfolds = 10} and {\tt penalty =
  "SCAD"} or {\tt penalty = "MCP"}. We kept other settings at
their defaults. By default, {\tt cv.ncvreg} standardized ${\bf X}$ and
added an intercept to the model.

\paragraph{L0Learn.}

We called the {\tt L0Learn.cvfit} function from the R package {\tt
  L0Learn} which performs $k$-fold CV to select the penalty strength
parameter $\lambda_0$. We called {\tt L0Learn.cvfit} with {\tt penalty
  = "L0"} and {\tt nFolds = 10}. We chose the setting of $\lambda$
with the smallest (mean) CV error.  We kept other settings at their
defaults. By default, {\tt L0Learn.cvfit} included an intercept in the
model.

\paragraph{SSLasso.}

We called the {\tt SSLASSO} function from the R package {\tt SSLASSO}
which fits coefficients paths for spike-and-slab linear regression
models over a grid of values for the $\lambda_0$ regularization
parameter. We used the ``adaptive'' variant of the Spike-and-Slab
Lasso which was recommended over the ``separable'' variant. This
function automatically standardizes ${\bf X}$ and the model includes
an intercept. {\tt SSLASSO} did not provide an automated way to select
$\lambda_0$ so we performed 5-fold CV and chose the setting of
$\lambda_0$ minimizing the CV error. We did not perform CV to choose
the ``spike'' penalty parameter $\lambda_1$; \cite{rovckova2018spike}
showed the performance is less sensitive to the choice of $\lambda_1$.

\paragraph{Trimmed Lasso.}

We downloaded the MATLAB code for the Trimmed Lasso from
\url{https://github.com/tal-amir/sparse-approximation-gsm} and we
compiled the MEX file using gcc 10.2.0. We called function {\tt
sparse\_approx\_gsm\_v1\_22} with the following settings: {\tt
sparse\_approx\_gsm\_v1\_22(X,y,k,`profile',`fast')} in which the
target sparsity level {\tt k} was one of $\{1, 5, 20, 100, 500,
2000, 10000 \}$. The sparsity level was chosen by 5-fold
CV, taking the setting that minimized the average
CV error.

\paragraph{BayesB and Bayesian Lasso}

We called the {\tt BGLR} function from the {\tt BGLR} package, which
simulates the posterior using a Gibbs sampler. We called {\bf BGLR}
with the following settings: {\tt standardize = FALSE}, {\tt nIter =
  1500}, {\tt burnIn = 500} and {\tt model = "BayesB"} or {\tt model =
  "ML"}. We kept other settings at their defaults. By default, an
intercept was included in the model.

\paragraph{varbvs.}

We called the {\tt varbvs} function from the {\tt varbvs}
package which fits an approximate posterior distribution for a
Bayesian variable selection model using variational inference
methods. All settings were kept at their defaults. The model
included an intercept and ${\bf X}$ was not standardized.

\paragraph{SuSiE.}

We called the {\tt susie} function from the {\tt susie}
package which fits a SuSiE model using the iterative Bayesian
stepwise selection (IBSS) algorithm. We called {\tt susie} with {\tt
standardize = FALSE} and we set the upper bound on the number of
``single effects'' to 20.

\section{Additional Experiments}

\subsection{Simulations with $p < n$} 
\label{subsec:p<n}

\begin{figure}[t]
\centering
\includegraphics[width=0.4\textwidth]{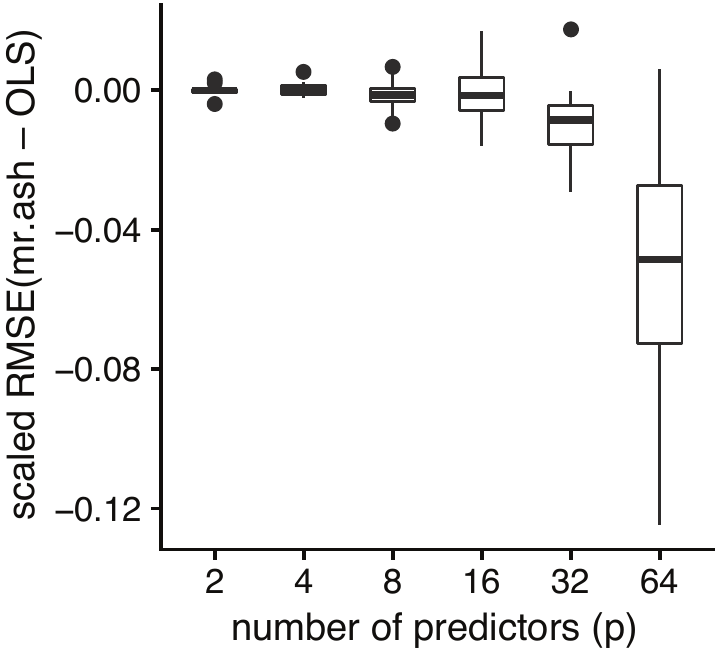}
\caption{Comparison of \mrash{} vs. OLS in data sets with $p \geq 64$,
  $n=200$. Each box in the boxplot summarizes the difference in the
  test set prediction error (scaled RMSE) between the \mrash{}
  predictions and the ordinary least squares (OLS) estimates across 20
  simulations. The horizontal line inside each box depicts the median;
  the dot depicts the mean; the upper and lower lines depict the
  interquartile range.}
\label{fig:mrash-vs-ols}
\end{figure}

Although the paper focusses on large-scale multiple linear regression
with many predictor variables, \mrash{} can also be applied in
settings with $p \ll n$ where one would expect ordinary least squares
(OLS) to work well. To illustrate this, we simulated data sets with
PVE = 0.5, $n = 200$, $p \leq 64$ and $s = p$ (all predictors had
non-zero coefficients). The results show that \mrash{} performs
similarly to OLS when $p$ is very small and outperforms the OLS estimate
as $p$ increases (Figure~\ref{fig:mrash-vs-ols}).

\subsection{Impact of Initialization and Update Order on 
  Prediction Accuracy}
\label{subsec:initorder}

\begin{figure}[t]
\centering
\includegraphics[width=\textwidth]{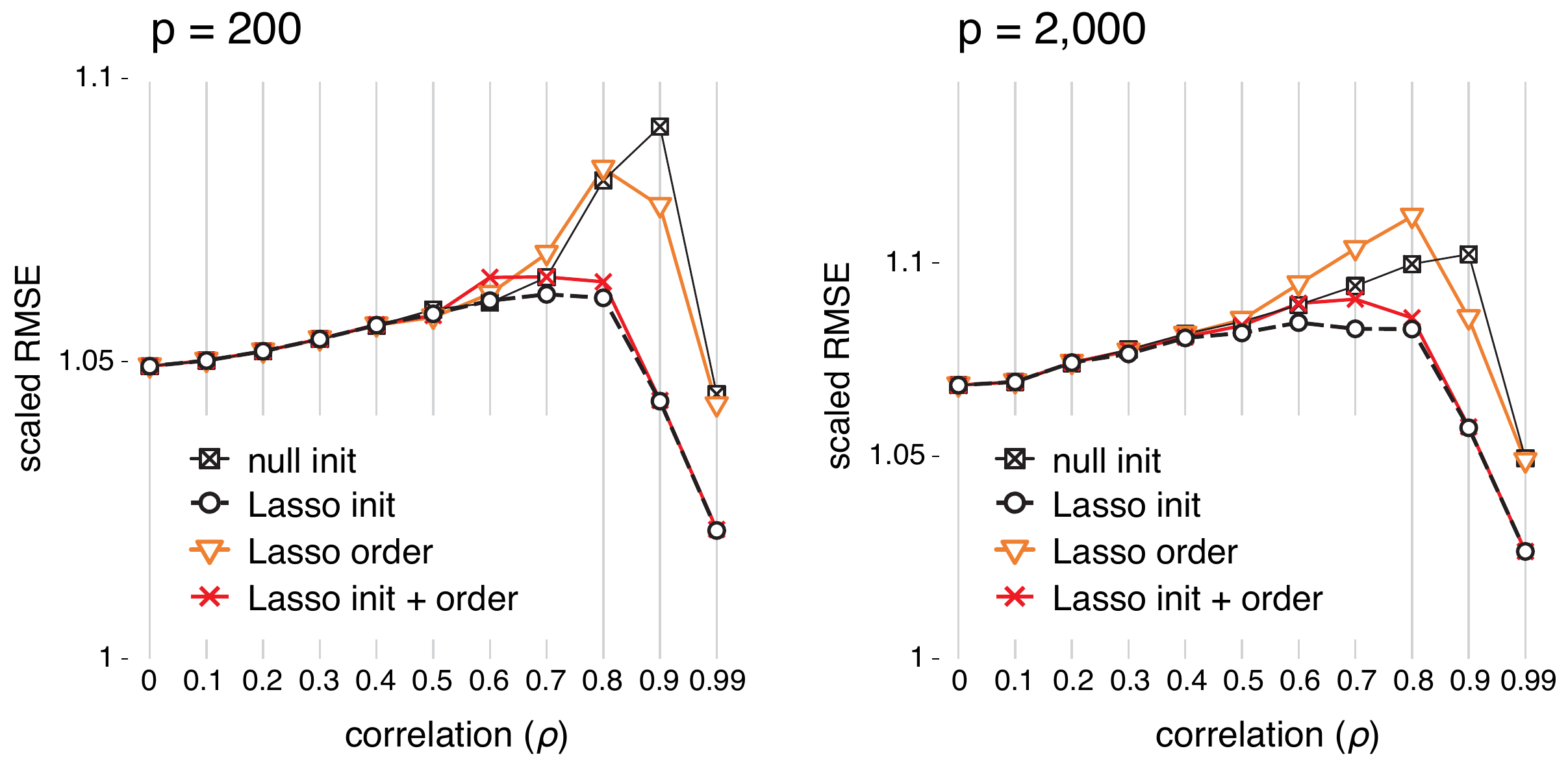}
\caption{Comparison of \mrash{} with different initializations and
  update orders. Each point shows the prediction error (scaled RMSE)
  averaged over the 20 simulations at that setting.}
\label{fig:initorder}
\end{figure}

Since \mrash{} is solving a nonconvex optimization problem, and
therefore is only guaranteed to converge to a local optimum (except in
special cases), the quality of the solution---and hence the accuracy
of the predictions---can be sensitive to initialization. This
situation is similar to other methods such as \scad{} that solve
nonconvex optimization problems, but different from methods such as
the \lasso{} that solve a convex optimization problem and are
therefore guaranteed to end up with the same final estimates
irrespective of initialization (provided of course that the algorithm
is given enough time to converge to the solution). Additionally, the
order in which the coordinatewise updates are performed can also
affect which local solution the \mrash{} algorithm converges to
\citep{ray2019variational}.  For these nonconvex optimization
problems, sometimes a ``smart'' initialization or update order can
lead to a better local solution.  From experience, we have found that
initializing \mrash{} to the cross-validated \lasso{} estimate of $\b$
seems to work well and does not greatly increase computational
effort. In this experiment, we investigated the benefits of a smart
initialization.

Specifically, we compared the following four \mrash{} variants:
\begin{itemize}

\item {\bf ``Null'' initialization.} The posterior mean coefficients
  $\bbar$ are initialized to zero and the coordinates $j = 1, \ldots,
  p$ are updated in a random order. By ``random order'', we mean a
  random permutation of the indices $1, 2, \ldots, p$. A new random
  permutation is generated for each iteration of the algorithm (that
  is, for each iteration of the repeat-until loop in
  Algorithm~\ref{alg:caisageneral}).

\item {\bf Lasso initialization.} The posterior mean coefficients are
  set to $\b = \hat{\b}^{\mathrm{lasso}}$ (see
  Section~\ref{subsec:methods-initialization}), and the coordinates $j
  = 1, \ldots, p$ are updated in a random order.

\item {\bf Lasso update order.} The coordinates $j = 1, \ldots, p$ are
  updated in the order that they are estimated to have non-zero
  coefficients as the strength of the Lasso penalty is decreased. We
  call this the ``Lasso update order,'' and it can be understood as
  the order in which the coefficients ``enter the Lasso path''
  \citep{su2017false}. (Note that determining the Lasso update order
  typically takes less effort than computing
  $\hat{\b}^{\mathrm{lasso}}$ because the cross-validation step is
  avoided.)

\item {\bf Lasso initialization and Lasso update order.}  Both the
  Lasso initialization $\bbar = \hat{\b}^{\mathrm{lasso}}$ and Lasso
  update order are used.

\end{itemize}
Initialization of $\sigma^2$ and $\pii$ is described in
Section~\ref{subsec:methods-initialization}.

To assess the benefits of these four initialization and update order
strategies, we simulated data sets with varying correlation strengths
among the predictors, then we compared the performance of the four
\mrash{} variants. The results of these simulations are summarized in
Figure~\ref{fig:initorder}. The smart initialization and update
ordering provided little benefit when the variables were not
correlated or only weakly correlated, but produced considerable gains
in prediction accuracy when the variables were strongly
correlated. Interestingly, once the coefficients were initialized to
the Lasso estimates, there was no additional benefit to updating the
coordinates using the Lasso update order.

In summary, initializing the coefficients to the cross-validated Lasso
estimates is a simple way to improve the performance of \mrash{} when
predictors are strongly correlated.

\section{More General Formulation of the Normal Means model}
\label{appendix:preliminaries}

In Section~\ref{subsec:ebnm} we defined the normal means (NM) model
for the special case when all observations $j$ have the same variance,
$\sigma^2$. This special case was sufficient to develop the VEB
methods with the assumption that $\xj^T\xj = 1$, $j = 1, \ldots,
p$. Here, we extend the NM model to allow for observation-specific
variances, $\sigma_j^2$, which is needed to generalize the VEB method
to cases in which the $\xj^T\xj = 1$ assumption no longer holds.

\subsection{The Normal Means Model}

Let $\text{NM}_p(f, {\bf s}^2)$ denote the normal means model
with prior $f$ and observation-specific variances ${\bf
  s}^2 = (s_1^2, \ldots, s_p^2) \in \mathbb{R}^p_{+}$:
\begin{equation}
\begin{aligned}
\label{eqn:general_NM}
y_j \mid b_j, s_j^2 &\sim N(b_j, s_j^2), \\
b_j &\iid f, \quad j=1,\ldots,p,
\end{aligned}
\end{equation}
such that $y_j,b_j \in \mathbb{R}$, $j = 1, \ldots, p$.
We assume priors that are mixtures of zero-mean normals, $f \in
\G(u_1^2, \dots, u_K^2)$, $u_k^2 \geq 0$, $k = 1, \ldots, K$, so that
any prior can be written as
\begin{equation*}
\bj \sim \sum_{k=1}^K \pi_k N(0, u_k^2),
\end{equation*}
such that $\pii = (\pi_1, \ldots, \pi_K) \in \mathbb{S}^K$.

As in
\eqref{eqn:latent}, it is helpful in the derivations to make
use of the latent variable representation:
\begin{equation} 
\begin{aligned}
\label{eqn:normalmixture}
p(\gamma_j = k \mid f) &= \pi_k \\
b_j \mid f, \gamma_j = k &\sim N(0, u_k^2),
\end{aligned}
\end{equation}
with $\gamma_j \in \{1, \ldots, K\}, j = 1, \ldots, p$. We write
the joint prior for $b_j, \gamma_j$ as
\begin{align} 
\label{def:joint_prior}
p_{\rm prior}(\bj,\gamma_j = k) &\triangleq 
p(\bj,\gamma_j = k \mid f) \nonumber \\ 
&= \pi_k N(b_j; 0, u_k^2).
\end{align}
In the expressions below we sometimes write the joint prior as
$p_{\mathrm{prior}}(f)$ to make its dependence on $f$ explicit.

Note that the definition of the NM model given in the main text
(Section~\ref{subsec:ebnm}), with prior \eqref{eqn:normalmixture2}, is
a special case of these definitions and can be obtained with the
substitutions $s_j^2 \leftarrow \sigma^2, j = 1, \ldots, p$ and $u_k
\leftarrow \sigma^2\sigma_k^2, k = 1, \ldots, K$.

\subsection{Posterior Distribution under Normal Means Model with One 
  Observation}

Let $q^{\rm NM}(b, \gamma \mid y, s^2, f)$ denote the posterior
distribution of $b, \gamma$ under the normal means model
$\text{NM}_1(f,s^2)$ with a single observation ($p = 1$):
\begin{equation}
\begin{aligned}
y \mid b, s^2 &\sim N(b, s^2) \\
b &\sim f.
\end{aligned}
\end{equation}
For a mixture of normals prior, $f \in \G(u_1^2, \dots, u_K^2)$, the
posterior distribution can be written as
\begin{equation} 
p^{\rm NM}(b \mid y,s^2,f) = \sum_{k = 1}^K 
\phi_{1k} \, N(b; \mu_{1k}, s_{1k}^2),
\label{eqn:optimal_form_of_q_why}
\end{equation}
in which the posterior component means $\mu_{1k}$, variances
$s_{1k}^2$, responsibilities $\phi_{1k}$, and component (marginal)
likelihoods $L_k$ are
\begin{align}
\mu_{1k} \triangleq \mu_{1k}(y;f,s^2) &= \frac{u_k^2}{s^2 + u_k^2} \times y
\label{eqn:mu_operator} \\ 
s_{1k}^2 \triangleq 
s_{1k}^2(y;f,s^2) &= \frac{s^2 u_k^2}{s^2 + u_k^2} \\
\phi_{1k} \triangleq \phi_{1k}(y;f,s^2) &=
\frac{\pi_k L_k}{\sum_{k' = 1}^K \pi_{k'} L_{k'}}
\label{eqn:phi_operator} \\
L_k \triangleq L_k(y; f, s^2) &= p(y \mid s^2, f, \gamma = k) = 
\textstyle \int p(y \mid b, s^2) \, p(b \mid f, \gamma = k) \, db 
\label{eqn:ell_operator}  
\end{align}
The posterior expressions for the NM model given in the main text
(\ref{def:posterior_mixture}--\ref{eqn:post_ebnm}) can be recovered
from these more general expressions with the substitutions $s_j^2
\leftarrow \sigma^2$, $j = 1, \ldots, p$ and $u_k \leftarrow
\sigma^2\sigma_k^2$, $k = 1, \ldots, K$.

\subsection{Evidence Lower Bound for Normal Means Model with One Observation}

Given some probability density on $b \in \mathbb{R}$, denoted by $q$,
the ELBO for the normal means model ${\rm NM}_1(f,s^2)$ with
observation $y$ is
\begin{align}
F_1^{\mathrm{NM}}(q,f,s^2; y) &= 
\log p(y \mid f,s^2) - \DKL(q \,\|\, p^{\mathrm{NM}}) \nonumber \\
&= \mathbb{E}_q[\log p(y \mid b,s^2)] - \DKL(q \,\|\, \pprior(f)) 
\nonumber \\
&= \textstyle -\frac{1}{2}\log(2{\pi}s^2) - 
\frac{1}{2s^2}\mathbb{E}_q[(y - b)^2] - \DKL(q \,\|\, \pprior(f)).
\label{eqn:nmelbo}
\end{align}
With this expression, we state the following result.

\begin{lemma}[Normal means posterior as maximum of ELBO] \label{lem:subproblem}
\rm The posterior distribution \eqref{eqn:optimal_form_of_q_why} under
the NM model $\mathrm{NM}_1(f, s^2)$ with observation $y$ maximizes
the ELBO \eqref{eqn:nmelbo}; that is,
\begin{equation*}
p^{\rm NM} = 
\argmax_q \textstyle -\frac{1}{2s^2}\mathbb{E}_q [(y - b)^2]
- \DKL(q \,\|\, p_{\rm prior}(f)).
\end{equation*}
\end{lemma}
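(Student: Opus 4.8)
The plan is to recognize the bracketed expression in the lemma as the normal means ELBO $F_1^{\mathrm{NM}}(q,f,s^2;y)$ of \eqref{eqn:nmelbo}, up to an additive constant independent of $q$, and then invoke the standard variational (Gibbs) identity that writes this ELBO as the log-evidence minus a Kullback--Leibler divergence to the exact posterior, so that the divergence term pins down the maximizer.

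First I would add back the $q$-independent constant $-\tfrac12\log(2\pi s^2)$ and use the Gaussian density to write $-\tfrac12\log(2\pi s^2) - \tfrac{1}{2s^2}\mathbb{E}_q[(y-b)^2] = \mathbb{E}_q[\log p(y\mid b,s^2)]$, so the quantity to be maximized in the lemma equals $F_1^{\mathrm{NM}}(q,f,s^2;y)$ plus a constant not depending on $q$; hence it has the same argmax over $q$ as $F_1^{\mathrm{NM}}$. Next I would establish the decomposition
\[
F_1^{\mathrm{NM}}(q,f,s^2;y) = \log p(y\mid f,s^2) - \DKL\!\big(q \,\|\, q^{\mathrm{NM}}\big),
\]
where $q^{\mathrm{NM}}(b,\gamma\mid y,s^2,f)$ is the exact posterior. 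This follows from the second line of \eqref{eqn:nmelbo}, $F_1^{\mathrm{NM}}(q) = \mathbb{E}_q[\log p(y\mid b,s^2)] - \DKL(q(b,\gamma)\,\|\, \pprior(f))$, together with the observations that $p(y\mid b,s^2)\,\pprior(b,\gamma)$ is the joint density $p(y,b,\gamma\mid f,s^2)$ (since $y$ is conditionally independent of $\gamma$ given $b$) and that this joint factors as $p(y\mid f,s^2)\,q^{\mathrm{NM}}(b,\gamma\mid y)$; substituting and collecting the $\log q$ and $\log q^{\mathrm{NM}}$ terms inside the expectation gives the displayed identity.

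Finally, since $\log p(y\mid f,s^2)$ does not depend on $q$ and $\DKL(q\,\|\,q^{\mathrm{NM}})\ge 0$ with equality if and only if $q = q^{\mathrm{NM}}$, the ELBO is maximized exactly at $q = q^{\mathrm{NM}}$; marginalizing out $\gamma$ recovers the posterior $p^{\mathrm{NM}}$ of \eqref{eqn:optimal_form_of_q_why}, which is the claim (and if one optimizes only over densities on $b$, the same argument applies with $\gamma$ already integrated out, using that the posterior conditional $p(\gamma\mid b,y)$ equals the prior conditional $p_{\mathrm{prior}}(\gamma\mid b)$ by the same conditional-independence fact). I do not anticipate any substantive obstacle here; the only care needed is the bookkeeping of the constant $-\tfrac12\log(2\pi s^2)$ and being explicit that the joint maximizer has $b$-marginal equal to $p^{\mathrm{NM}}$.
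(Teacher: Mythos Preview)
Your proposal is correct and follows the same approach as the paper: the paper embeds the key identity $F_1^{\mathrm{NM}}(q,f,s^2;y) = \log p(y\mid f,s^2) - \DKL(q\,\|\,p^{\mathrm{NM}})$ directly as the first line of \eqref{eqn:nmelbo} and then states the lemma without a separate proof, relying on exactly the nonnegativity-of-KL argument you spell out. Your write-up simply makes explicit the steps the paper leaves implicit (adding back the constant, verifying the evidence-minus-KL decomposition, and handling the $\gamma$ augmentation).
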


From this lemma it follows that any $q$ maximizing the ELBO
\eqref{eqn:nmelbo} must have the following form:
\begin{equation}
q(b) = \sum_{k=1}^K \phi_{1k} N(b; \mu_{1k}, s_{1k}^2),
\label{eqn:optimal_form_of_q_oneobs}
\end{equation}
with $\phi_{1k} \geq 0$, $\mu_{1k} \in \mathbb{R}$, $s_{1k}^2 >
0$, $k = 1, \ldots, K$.

For any $q$ of the form \eqref{eqn:optimal_form_of_q_oneobs}, the ELBO
\eqref{eqn:nmelbo} has an analytic expression, which we derive in part
by making use of the formula for the K-L divergence between two normal
distributions \citep{hastie2009elements}:
\begin{equation}
F_1^{\rm NM}(q, f, s^2; y) = 
\mathbb{E}_q[\log p(y \mid b, s^2)] 
- \DKL(q \,\|\, p_{\mathrm{prior}}(f)),
\label{eqn:optimal_elbo_oneobs}
\end{equation}
in which
\begin{equation*}
\mathbb{E}_q[\log p(y \mid b, s^2)] = 
-\frac{1}{2}\log(2 \pi s^2) 
- \frac{(y - \bar{b})^2}{2s^2}
- \frac{1}{2s^2} \sum_{k=1}^K [\phi_{1k} 
(\mu_{1k}^2 + s_{1k}^2) - \bar{b}^2]
\label{eqn:expected_log_lik_nm}
\end{equation*}
and
\begin{equation*}
\DKL(q \,\|\, p_{\mathrm{prior}}(f)) = 
\sum_{k=1}^K \phi_{1k} \log\left(\frac{\phi_{1k}}{\pi_k}\right) - 
\frac{1}{2} \sum_{k = 2}^K 
\phi_{1k}\left[1 + \log\left(\frac{s_{1k}^2}{u_k^2}\right) - 
\frac{\mu_{1k}^2 + s_{1k}^2}{u_k^2} \right], 
\label{eqn::neg_kl_div_nm}
\end{equation*}
and where $\bar{b}$ is the posterior mean of $b$ with respect to $q$,
$\bar{b} = \sum_{k=1}^K \phi_{1k} \mu_{1k}$. Here we have assumed that
the first component in the prior mixture is a point mass at zero,
$\sigma_1^2 = 0$.

\subsection{ELBO for Normal Means Model with Multiple Observations}

Now we extend the above results for the single-observation NM model to
the NM model with multiple observations, $\text{NM}_p(f, {\bf s}^2)$. Since
the $b_j$'s are independent under the posterior, the ELBO is simply
the sum of the ELBOs for the single-observation NM models:
\begin{equation}
\label{eqn:elbo_nm}
F^{\rm NM}(q,f,{\bf s}^2;y) = \sum_{j=1}^p F_1^{\rm NM}(q_j,f,s_j^2;y_j).
\end{equation}
From Lemma \ref{lem:subproblem}, the $q$ that maximizes the ELBO is
\begin{align*}
q(\b) &= \prod_{j=1}^p q_j(\bj) \nonumber \\
q_j(\bj) &= p^{\mathrm{NM}}(\bj \mid y_j,f,s_j^2),
\end{align*}
It also follows that any $q$ maximizing the ELBO \eqref{eqn:elbo_nm}
must have the following form:
\begin{equation}
\begin{aligned}
q(\b) &= \prod_{j=1}^p q_j(\bj) \\
q_j(b_j) &= \sum_{k=1}^K \phi_{1jk} N(b_j; \mu_{1jk}, s_{1jk}^2),
\end{aligned}
\label{eqn:optimal_form_of_q}
\end{equation}
in which $\phi_{1jk} \geq 0$, $\mu_{1jk} \in \mathbb{R}$, $s_{1jk}^2 >
0$, $j = 1, \ldots, p$, $k = 1, \ldots, K$. For any $q$ of the form
\eqref{eqn:optimal_form_of_q}, the analytic expression for the ELBO
\eqref{eqn:elbo_nm} is easily obtained by applying the analytic
expression for the single-observation NM model
(eq.~\ref{eqn:optimal_elbo_oneobs}).

\section{Derivation of Algorithm~\protect\ref{alg:caisageneral},
  Proof of Proposition~\ref{prop:caisaderivation}}
\label{appendix:derivation}

We prove Proposition~\ref{prop:caisaderivation} by proving a slightly
more general proposition that does not require that $\xj^T \xj = 1$,
$j = 1, \ldots, p$.

\begin{proposition} \label{prop:derivation2}
\rm Let $d_j = \xj^T \xj$, $j = 1, \ldots, p$, and let 
\begin{equation*}
\tilde{b}_j \triangleq \frac{\xj^T\rbar_j}{\xj^T \xj}
\end{equation*}
denote the ordinary least squares (OLS) estimate of the coefficient
$\bj$ when the residuals $\rbar_j$ are regressed against $\xj$.
See Proposition~\ref{prop:caisaderivation} for more definitions.
Then we have the following results:
\begin{enumerate}[(i)]

\item[(i)] The coordinate ascent update $q_j^{\ast} \triangleq
  \argmax_{q_j} F(q, g, \sigma^2)$ is obtained by
\begin{equation*}
q_j^{\ast}(b_j) = p^{\mathrm{NM}}(b_j; \tilde{b}_j, \sigma^2/d_j, g_{\sigma}),
\end{equation*}
in which $p^{\mathrm{NM}}$, defined in
\eqref{eqn:optimal_form_of_q_why}, is the posterior distribution
of $b$ under the following NM model:
\begin{equation}
\label{eqn:real_subprob_model}
\begin{aligned}
\tilde{b} \mid b, \sigma^2 &\sim N(b,\sigma^2/d_j) \\
b \mid g, \sigma^2 &\sim g_{\sigma}.
\end{aligned}
\end{equation}

\item[(ii)] The coordinate ascent update
\begin{equation*}
g^{\ast} \triangleq \argmax_{g \,\in\, \Gfm} F(q, g, \sigma^2)
\end{equation*}
is achieved by setting
\begin{equation*} 
\begin{aligned}
g^{\ast} &= \textstyle \sum_{k=1}^K \pi_k^{\ast} N(0,\sigma_k^2) \\
\pi_k^{\ast} &= \textstyle \frac{1}{p} \sum_{j=1}^n q_j(\gamma_j = k), 
\quad k = 1,\ldots,K.
\end{aligned}
\end{equation*}

\item[(iii)] Using the parameterization of $q$ in
  \eqref{eqn:optimal_form_of_q}, and assuming that $g$ is updated as
  in (ii) above, and $\sigma_1^2 = 0$, the coordinate ascent update
\begin{equation*}
(\sigma^2)^{\ast} \triangleq \argmax_{\sigma^2 \,\in\,
\mathbb{R}_{+}} F(q, g, \sigma^2)
\end{equation*}
is achieved by setting
\begin{align*}
(\sigma^2)^{\ast}
&= \frac{\|\rbar\|^2
+ \sum_{j=1}^p \sum_{k=2}^K \phi_{jk} 
(d_j + 1/\sigma_k^2) (\mu_{1jk}^2 + s_{1jk}^2) 
- \sum_{j=1}^p d_j \bar{b}_j^2}{n + p(1-\pi_1^{\ast})}.
\end{align*}
Additionally, if $q_1, \ldots, q_p$ are updated as in (i) above,
we obtain the simpler expression
\begin{equation}
\label{eqn:new_update_sigma2}
(\sigma^2)^{\ast} = \frac{\|\rbar\|^2
+ \bbar^T{\bf D}(\tilde{\b} - \bbar) + \sigma^2p(1 - \pi_1^{\ast})}
{n + p(1-\pi_1)},
\end{equation}
where ${\bf D}$ is the $p \times p$ diagonal matrix with diagonal
entries $d_1, \ldots, d_p$.
\end{enumerate}
Note that Proposition \ref{prop:caisaderivation} is a special case of
this proposition when $d_j = 1$, for $j = 1, \ldots, p$.
\end{proposition}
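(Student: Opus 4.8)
The plan is to work with the ELBO written as an expected complete-data log-likelihood minus the variational entropy,
\[
F(q,g,\sigma^2) = \mathbb{E}_q[\log p(\y\mid\X,\b,\sigma^2)] + \mathbb{E}_q[\log p_{\rm prior}(\b,{\bm\gamma}\mid g,\sigma^2)] - \mathbb{E}_q[\log q(\b,{\bm\gamma})],
\]
and, using the mean-field factorization $q=\prod_j q_j$, to isolate for each coordinate block the terms of $F$ that actually depend on it. Each of the three updates then comes from a one-block optimization, and the residual-variance update is the algebraically heavy part.

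\textbf{Part (i).} I would fix $g,\sigma^2$ and $q_{j'}$ for $j'\neq j$, expand $\y-\X\b = \rbar_j - \xj b_j - \X_{-j}(\b_{-j}-\bbar_{-j})$, and take $\mathbb{E}_q$; independence under $q$ together with $\mathbb{E}_q[\b_{-j}-\bbar_{-j}]=0$ kills the cross terms, so $\mathbb{E}_q\|\y-\X\b\|^2 = \mathbb{E}_{q_j}\|\rbar_j-\xj b_j\|^2 + {\rm const}$. Completing the square in $b_j$ and using $\xj^T\rbar_j = d_j\tilde b_j$ gives $\mathbb{E}_{q_j}\|\rbar_j-\xj b_j\|^2 = d_j\,\mathbb{E}_{q_j}[(\tilde b_j-b_j)^2] + {\rm const}$, so the $q_j$-dependent part of $F$ is exactly the single-observation NM ELBO $F_1^{\rm NM}(q_j, g_\sigma, \sigma^2/d_j; \tilde b_j)$ of \eqref{eqn:nmelbo}. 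Lemma~\ref{lem:subproblem} then identifies the maximizer as the NM posterior $p^{\mathrm{NM}}(b_j;\tilde b_j,\sigma^2/d_j,g_\sigma)$.

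\textbf{Part (ii).} With $q,\sigma^2$ fixed, only $\mathbb{E}_q[\log p_{\rm prior}]$ depends on $g$; since the grid $\sigma_1^2,\dots,\sigma_K^2$ is fixed, only the weights $\pii$ are free, and collecting terms reduces the problem to maximizing $\sum_k\big(\sum_j q_j(\gamma_j=k)\big)\log\pi_k$ over $\pii\in\mathbb{S}^K$. The Gibbs inequality (or a Lagrange multiplier) gives $\pi_k^{\ast}\propto\sum_j q_j(\gamma_j=k)$, and since $\sum_k q_j(\gamma_j=k)=1$ the normalizer is $p$, which is the stated update.

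\textbf{Part (iii).} With $q,g$ fixed and $\sigma_1^2=0$, I would collect the $\sigma^2$-dependent terms. The likelihood term contributes $-\frac n2\log\sigma^2 - \frac1{2\sigma^2}\big(\|\rbar\|^2 + \sum_j d_j\,{\rm Var}_{q_j}(b_j)\big)$, by the same cross-term cancellation as in (i) carried out across all $j$. In $\mathbb{E}_q[\log p_{\rm prior}]$ the $k=1$ point mass is constant in $\sigma^2$, while each Gaussian component $k\geq2$ has prior variance $\sigma^2\sigma_k^2$ and contributes a $-\frac12\log\sigma^2$ and a $1/(2\sigma^2)$ term; the count of the former is $\sum_j\sum_{k\geq2}q_j(\gamma_j=k) = \sum_j(1-\phi_{j1}) = p(1-\pi_1^{\ast})$. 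The objective is thus $-\frac A2\log\sigma^2 - \frac B{2\sigma^2}$ with $A = n+p(1-\pi_1^{\ast})$, uniquely maximized at $\sigma^2=B/A$; writing ${\rm Var}_{q_j}(b_j) = \sum_k\phi_{jk}(\mu_{1jk}^2+s_{1jk}^2) - \bar b_j^2$ and using $\mu_{1j1}=s_{1j1}^2=0$ puts this into the first displayed form. To get the simpler \eqref{eqn:new_update_sigma2}, I would specialize $q_j$ to the (i)-update and use the conjugate formulas \eqref{eqn:mu_operator} with likelihood variance $\sigma^2/d_j$ and prior variance $\sigma^2\sigma_k^2$: these give, for each $k\geq2$, the identities $d_j\mu_{1jk}^2 + \mu_{1jk}^2/\sigma_k^2 = d_j\tilde b_j\mu_{1jk}$ and $d_j s_{1jk}^2 + s_{1jk}^2/\sigma_k^2 = \sigma^2$, hence $(d_j+1/\sigma_k^2)(\mu_{1jk}^2+s_{1jk}^2) = d_j\tilde b_j\mu_{1jk} + \sigma^2$. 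Summing over $k\geq2$ (using $\sum_{k\geq2}\phi_{jk}\mu_{1jk}=\bar b_j$ and $\sum_j\sum_{k\geq2}\phi_{jk}=p(1-\pi_1^{\ast})$) and over $j$ telescopes the numerator to $\|\rbar\|^2 + \bbar^T\mathbf{D}(\tilde{\mathbf{b}}-\bbar) + \sigma^2 p(1-\pi_1^{\ast})$. Setting $d_j=1$ throughout then recovers Proposition~\ref{prop:caisaderivation}.

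The main obstacle is the bookkeeping in Part (iii): keeping the $\sigma^2$-versus-$\sigma_k^2$ scalings straight (the prior is on the \emph{scaled} coefficients, so the $k$th prior variance is $\sigma^2\sigma_k^2$), cleanly separating the $k=1$ point mass from the Gaussian components in both the $\log\sigma^2$ count and the quadratic term, and checking the two conjugate-posterior identities that make the numerator telescope. The cross-term cancellations in (i) and (iii), the Gibbs-inequality step in (ii), and the $d_j=1$ specialization are routine.
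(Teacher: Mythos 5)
Your proposal is correct and follows essentially the same route as the paper: part (i) reduces the $q_j$-update to the single-observation normal-means ELBO and invokes Lemma~\ref{lem:subproblem}, part (ii) is the standard Gibbs-inequality argument on the mixture weights, and part (iii) collects the $\sigma^2$-dependence into $-\tfrac{A}{2}\log\sigma^2-\tfrac{B}{2\sigma^2}$ (equivalent to the paper's derivative-and-solve step) and then uses the same two conjugate-posterior identities $(d_j+1/\sigma_k^2)\mu_{1jk}^2=d_j\tilde b_j\mu_{1jk}$ and $(d_j+1/\sigma_k^2)s_{1jk}^2=\sigma^2$ to obtain \eqref{eqn:new_update_sigma2}. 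The cross-term cancellation you spell out in (i) is exactly the rearrangement the paper leaves implicit, and all the bookkeeping (the $k=1$ point mass, the $p(1-\pi_1^{\ast})$ count, the scaled-prior variances $\sigma^2\sigma_k^2$) checks out.
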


In the next sections, we prove parts (i), (ii) and (iii) of
Proposition~\ref{prop:derivation2}. These proofs start from the ELBO
\eqref{eq:elbo}.
From Bayes' rule,
\begin{equation*}
p_{\rm post}(\b) = 
\frac{p(\y \mid \X, \b, \sigma^2) \, p(\b \mid g, \sigma^2)}
     {\int p(\y \mid \X, \b, \sigma^2) \, p(\b \mid g, \sigma^2) \, d{\bf b}},
\end{equation*}
we can write the ELBO as
\begin{equation}
F(q,g,\sigma^2) = 
\mathbb{E}_q[\log p(\y \mid \X, \b, \sigma^2)]
- \sum_{j=1}^p \DKL(q_j \,\|\, \pprior).
\label{eqn:elbo_bayes_rule}
\end{equation}
Next, using the property that $q$ factorizes over the individual
coordinates $j = 1, \ldots, p$, we have
\begin{equation*}
\label{eqn:elbo_appendixa}
F(q, g, \sigma^2) = 
-\frac{n}{2} \log(2\pi\sigma^2) 
- \frac{1}{2\sigma^2}\mathbb{E}_q[\norm{\y - \X\b}^2] 
- \sum_{j=1}^p \DKL(q_j \,\|\, \pprior).
\end{equation*}

\subsection{Update for $q_j$}
\label{appendix:updateofq}

The coordinate ascent update for $q_j$ involves solving the following
optimization problem:
\begin{equation*}
q_j^{\ast} = \argmax_{q_j} F(q, g, \sigma^2).
\end{equation*}
From \eqref{eqn:elbo_bayes_rule}, this is
equivalent to solving
\begin{equation*}
q_j^{\ast} = \argmax_{q_j} \mathbb{E}_q[\log p(\y \mid \X, \b, \sigma^2)] - 
\DKL(q_j \,\|\, p_{\mathrm{prior}})
\end{equation*}
By rearranging terms, it can be shown that this is equivalent to
solving
\begin{equation}
q_j^{\ast} = \argmax_{q_j} -\frac{d_j}{2\sigma^2} 
\mathbb{E}_{q_j}[(\tilde{b}_j - \bj)^2] 
- \DKL(q_j \,\|\, p_{\mathrm{prior}}).
\label{eqn:conjugacy_subprob}
\end{equation}
The right-hand side of \eqref{eqn:conjugacy_subprob} is the ELBO for
the NM model \eqref{eqn:real_subprob_model}; that is, if we ignore
constant terms, the ELBO in \eqref{eqn:nmelbo} recovers
\eqref{eqn:conjugacy_subprob} by making the substitutions $y
\leftarrow \tilde{b}_j$, $s^2 \leftarrow \sigma^2/d_j$, $f \leftarrow
g_{\sigma}$. And, therefore, from Lemma~\ref{lem:subproblem}---and
specifically from \eqref{eqn:optimal_form_of_q_oneobs}---we have
\begin{equation*}
q_j^{\ast}(b_j) =
p^{\mathrm{NM}}(b_j \mid \tilde{b}_j, \sigma^2/d_j, g_{\sigma}) 
= \sum_{k=1}^K \phi_{1jk} N(b_j; \mu_{1jk}, s_{1jk}^2),
\end{equation*}
in which
\begin{align*}
\phi_{1jk} &= \phi_{1k}(\tilde{b}_j, g_{\sigma}, \sigma^2/d_j) \\
\mu_{1jk} &= \mu_{1k}(\tilde{b}_j, g_{\sigma}, \sigma^2/d_j) \\
s_{1jk}^2 &= s_{1k}^2(\tilde{b}_j, g_{\sigma}, \sigma^2/d_j).
\end{align*}
This proves part (i) of Proposition~\ref{prop:derivation2}.

\subsection{Update for $g$}
\label{appendix:updateofg}

The coordinate ascent update for $g$ involves solving the following
optimization problem:
\begin{equation} 
\label{eqn:update_of_g_abstract}
g^{\ast} \leftarrow \argmax_{g \,\in\, \mathcal{G}} F(\q, g, \sigma^2).
\end{equation}
Recall, for the mixture prior with fixed mixture components, fitting
$g$ reduces to fitting the mixture weights, $\pii$. Since $\pii$ only
appears in the ELBO in the K-L divergence term with respect to the
prior, solving \eqref{eqn:update_of_g_abstract} is equivalent to
solving
\begin{equation*}
\pii^{\ast} = \argmin_{\pii \,\in\, \mathbb{S}^K}
\sum_{j=1}^p \DKL(q_j \,\|\, p_{\mathrm{prior}}),
\end{equation*}
which simplifies further:
\begin{equation*}
\pii^{\ast} = \argmax_{\pii \,\in\, \mathbb{S}^K}
\sum_{j=1}^p \sum_{k=1}^K \phi_{jk} \log \pi_k.
\end{equation*}
This has the following analytic solution:
\begin{equation*}
\pi_k^{\ast} = \frac{1}{p} \sum_{j=1}^p \phi_{jk}, \quad k = 1, \ldots, K.
\end{equation*}
This proves part (ii) of Proposition~\ref{prop:derivation2}.

This update can be thought of as an approximate M-step update for the
mixture weights in which the posterior probabilities (the
``responsibilities'') are computed approximately using $q$.

\subsection{Update for $\sigma^2$}
\label{appendix:updateofsigma}

The coordinate ascent update for the residual variance $\sigma^2$ is
the solution to
\begin{equation*}
(\sigma^2)^{\ast} = \argmax_{\sigma^2 \,\in\, \mathbb{R}_{+}} 
F(\q, g, \sigma^2).
\end{equation*}
From earlier results, the ELBO for any $q$ parameterized as
\eqref{eqn:optimal_form_of_q} works out to
\begin{equation}
F(q, g, \sigma^2) =
\mathbb{E}_q[\log p(\y \mid \X, \b, \sigma^2)]
- \sum_{j=1}^p \DKL(q_j \,\|\, p_{\mathrm{prior}})
\label{eqn:elbo_parametrized}
\end{equation}
in which
\begin{align*}
\mathbb{E}_q[\log p(\y \mid \X, \b, \sigma^2)] &=
-\frac{n}{2} \log (2\pi\sigma^2) 
- \frac{1}{2\sigma^2} \left\|\y - \X\bbar \right\|^2 
 - \frac{1}{2\sigma^2} \sum_{j=1}^p d_j 
\left[\sum_{k=1}^K \phi_{1jk} (\mu_{1jk}^2 + s_{1jk}^2) -
  \bar{b}_j^2 \right]
\\
\DKL(q_j \,\|\, p_{\mathrm{prior}}) &= 
\sum_{k=1}^K \phi_{1jk} \log \frac{\phi_{1jk}}{\pi_k} 
- \frac{1}{2} \sum_{k=2}^K \phi_{1jk}
\left[1 + \log\frac{s_{1jk}^2}{\sigma^2 \sigma_k^2} 
- \frac{s_{1jk}^2 + \mu_{1jk}^2}{\sigma^2 \sigma_k^2} \right], 
\end{align*}
and where $\bar{b}_j = \sum_{k=1}^K \phi_{1jk} \mu_{1jk}$. Taking the
partial derivative of $F$ with respect to $\sigma^2$ then solving for
$\sigma^2$ yields the following update:
\begin{equation*}
(\sigma^2)^{\ast} = 
\frac{\norm{\rbar}^2 
+ \sum_{j=1}^p \sum_{k=2}^K \phi_{1jk} (d_j + 1/\sigma_k^2) 
(\mu_{1jk}^2 + s_{1jk}^2) 
- \sum_{j=1}^p d_j \bar{b}_j^2}{n + p(1 - \pi_1)}.
\end{equation*}
When $\sigma^2$ is updated following updates to $q$, we can
simplify this expression by noting the specific form of the posterior
means and variances,
\begin{align*}
s_{1jk}^2 &= \frac{\sigma^2}{d_j + 1/\sigma_k^2} \\
\mu_{1jk} &= \frac{d_j}{d_j + 1/\sigma_k^2} \times \tilde{b}_j,
\end{align*}
for $k = 2, \ldots, K$, which gives
\begin{align*}
(\sigma^2)^{\ast}
&= \frac{\norm{\bar{\bf r}}^2 
+ \sum_{j=1}^p d_j \bar{b}_j (\tilde{b}_j - \bar{b}_j) 
+ \sigma^2 p(1 - \pi_1)}{n + p(1-\pi_1)}.
\end{align*}
This proves part (iii) of Proposition~\ref{prop:derivation2}.

\subsection{More detailed VEB algorithm}
\label{appendix:actualalg}

Further details about the implementation of the VEB algorithm are
given in Algorithm~\ref{alg:actualalgorithm}.
Algorithm~\ref{alg:actualalgorithm}
does not require $\xj^T\xj = 1$, $j = 1, \ldots, p$.

\begin{algorithm}[htbp]
\caption{Coordinate ascent for fitting VEB model (more detailed).}
\label{alg:actualalgorithm}
\begin{algorithmic}
\REQUIRE Data ${\bf X} \in \mathbb{R}^{n \times p}, {\bf y} \in
  \mathbb{R}^n$; number of mixture components, $K$; \\ prior variances,
  $\sigma_1^2 < \cdots < \sigma_K^2$, with $\sigma_1^2 = 0$; initial
  estimates $\bar{\bf b}, {\bm\pi}, \sigma^2$.
\STATE $\rbar \leftarrow \y - \X\bbar$ \hfill (compute mean residuals)%
\STATE $t \leftarrow 0$
\FOR{$j \leftarrow 1 \mbox{ to } p$}
\STATE $d_j = \xj^T\xj$
\ENDFOR
\REPEAT
\FOR{$j \leftarrow 1 \mbox{ to } p$}
\STATE $\bar{\mathbf{r}}_j = \bar{\mathbf{r}} + \xj\bar{b}_j$
\hfill (disregard $j$th effect in residuals)%
\STATE $\tilde{b}_j \leftarrow \xj^T\bar{\bf r}_j/d_j$
\hfill (compute OLS estimate)%
\FOR{$k \leftarrow 1 \mbox{ to } K$}
\STATE $\displaystyle \mu_{jk} \leftarrow \frac{d_j}{d_j + 1/\sigma_k^2} 
        \times \tilde{b}_j$
\hfill (update $q_j$)%
\STATE $\displaystyle \phi_{jk} \leftarrow \pi_k \times
  \frac{1}{1 + d_j \sigma_k^2} \times
  \exp\bigg(\frac{d_j \tilde{b}_j \mu_{jk}}{2\sigma^2}\bigg)$ 
\ENDFOR
\FOR{$k \leftarrow 1 \mbox{ to } K$}
\STATE $\phi_{jk} \leftarrow \phi_{jk}/\sum_{k'=1}^K \phi_{jk'}$
\ENDFOR
\STATE $\bar{b}_j \leftarrow \sum_{k=1}^K \phi_{jk} \mu_{jk}$
\hfill (update posterior mean of $b_j$)%
\STATE $\bar{\mathbf{r}} \leftarrow \bar{\mathbf{r}}_j - \xj\bar{b}_j$.
\hfill (update mean residuals)%
\ENDFOR
\FOR{$k \leftarrow 1 \mbox{ to } K$}
\STATE $\pi_k \leftarrow \sum_{j=1}^p \phi_{jk}/p$.
\hfill (update $g$; eq.~\ref{eqn:g_update})%
\ENDFOR
\STATE $\displaystyle
(\sigma^2)^{\ast} \leftarrow \frac{\|\rbar\|^2
+ \bbar^T{\bf D}(\tilde{\b} - \bbar) + \sigma^2p(1 - \pi_1^{\ast})}
{n + p(1-\pi_1)}$
\hfill (update $\sigma^2$; eq.~\ref{eqn:new_update_sigma2})%
\STATE $t \leftarrow t + 1$.
\UNTIL{convergence criterion is met}
\RETURN $\bbar, \pii, \sigma^2$
\end{algorithmic}
\end{algorithm}

\section{VEB as a PLR}
\label{appendix:plr}

\subsection{Normal Means as a Penalized Estimation Problem}

In this section, we formulate the normal means problem with one
observation, ${\mathrm{NM}}_1(f, s^2)$, as penalized estimation
problem. Specifically, we express the posterior mean of $b$ under the
NM model as a solution to a penalized least squares problem.  The
results for the single-observation NM model are used later to derive
results for the multiple linear regression model.

Recall, $\FNM_1(q, f, s^2; y)$ denotes the ELBO for the
single-observation NM model \eqref{eqn:nmelbo}. From this, we
define
\begin{equation}
\label{eqn:fnm_b}
h_1^{\mathrm{NM}}(\bar{b}, f, s^2; y) \triangleq
-\max_{q\,:\, \mathbb{E}_q(b) \,=\, \bar{b}} \; \FNM_1(q,f,s^2; y).
\end{equation}
As a reminder, $\FNM_1(q, f, s^2; y)$ attains its maximum over $q$ the
exact posterior, $q = p^{\mathrm{NM}}$; analogously,
$h_1^{\mathrm{NM}}(\bar{b}, f, s^2; y)$ attains its minimum over
$\bar{b}$ at $\bar{b} = S_{f,s}(y)$, the posterior mean of $b$ (see
Definition \ref{def:posterior_mean_shrinkage}). Further, at their
respective optima these two functions recover the marginal
likelihood:
\begin{align*}
\log p(y \mid f, s^2) &= F_1^{\mathrm{NM}}(p^{\mathrm{NM}}, f,s^2;y) \\
&= \textstyle \max_q F_1^{\mathrm{NM}}(q, f, s^2; y) \\
&= \textstyle \max_{\bar{b} \,\in\, \mathbb{R}} 
   \max_{q \,:\, \mathbb{E}_q(b)\,=\,\bar{b}} 
   F_1^{\mathrm{NM}}(q, f, s^2; y) \\
&= \textstyle \max_{\bar{b} \,\in\, \mathbb{R}} 
-h_1^{\mathrm{NM}}(\bar{b}, f, s^2; y) \\
&= -h_1^{\mathrm{NM}}(S_{f,s}(y), f, s^2; y).
\end{align*}
With these definitions, we can express the posterior mean for $b$ as
the solution to a real-valued optimization problem:
\begin{equation*}
S_{f,s}(y) = \textstyle \argmin_{\bar{b}} \,
h_1^{\mathrm{NM}}(\bar{b}, f, s^2; y).
\end{equation*}
The following lemma states that this can be understood as optimizing a
penalized loss function and gives an explicit form for the penalty.

\begin{lemma}
\label{lem:nm_as_plr}
\rm $h_1^{\mathrm{NM}}(\bar{b}, f, s^2; y)$ can be written as a
penalized loss function,
\begin{equation}
\label{eqn:fnm_penalized}
h_1^{\mathrm{NM}}(\bar{b}, f, s^2; y) =
\frac{1}{2s^2}(y - \bar{b})^2
+ \frac{1}{s^2}\rho_{f,s}(\bar{b}),
\end{equation}
in which the penalty is
\begin{equation}
\label{eqn:pendef}
\rho_{f, s}(\bar{b}) \triangleq
\min_{q\,:\,\mathbb{E}_q(b) \,=\, \bar{b}} \,
\frac{s^2}{2} \log(2\pi s^2) +
\frac{1}{2}\mathrm{Var}_q(b) +
s^2\DKL(q \,\|\, \pprior(f)).
\end{equation}
For any $y \in \mathbb{R}$, this penalty term satisfies
\begin{equation}
\label{eqn:pensat}
\rho_{f,s}(S_{f,s}(y)) = \textstyle
-s^2\ell_{\mathrm{NM}}(y; f, s^2) - \frac{1}{2}(y - S_{f,s}(y))^2,
\end{equation}
and
\begin{align}
\label{eqn:pensat-diff}
\rho_{f,s}'(S_{f,s}(y)) &= (y - S_{f,s}(y)) \\
\label{eqn:tweedie}
&= - s^2 \ell_{\mathrm{NM}}'(y; f,s^2),
\end{align}
in which $\ell_{\mathrm{NM}}(y; f, s^2)$ is the marginal
log-likelihood $\ell_{\mathrm{NM}}(y; f, s^2) \triangleq \log p(y \mid
f, s^2)$ for the single-observation normal means model,
$\mathrm{NM}_1(f, s^2)$.
\end{lemma}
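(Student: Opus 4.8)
The plan is to isolate the $\bar b$-dependence inside the constrained ELBO maximization by completing the square in the quadratic term, read off $\rho_{f,s}$ from what remains, and then get \eqref{eqn:pensat}--\eqref{eqn:tweedie} by evaluating at the posterior mean and differentiating in $y$, with Tweedie's formula doing the cancellation. Concretely, I would first rewrite the ELBO \eqref{eqn:nmelbo} under the constraint $\mathbb{E}_q(b) = \bar b$: since $\mathbb{E}_q[(y-b)^2] = (y-\bar b)^2 + \mathrm{Var}_q(b)$, the terms $-\tfrac12\log(2\pi s^2)$ and $-\tfrac{1}{2s^2}(y-\bar b)^2$ do not depend on $q$, so only $\tfrac{1}{2s^2}\mathrm{Var}_q(b)$ and $\DKL(q\,\|\,\pprior(f))$ are left inside the maximization. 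Negating and rescaling by $s^2$ then gives
\[
h_1^{\mathrm{NM}}(\bar b, f, s^2; y) = \frac{1}{2s^2}(y-\bar b)^2 + \frac{1}{s^2}\rho_{f,s}(\bar b),\qquad \rho_{f,s}(\bar b) = \tfrac{s^2}{2}\log(2\pi s^2) + \min_{q:\,\mathbb{E}_q(b)=\bar b}\bigl(\tfrac12\mathrm{Var}_q(b) + s^2\DKL(q\,\|\,\pprior(f))\bigr),
\]
which is precisely \eqref{eqn:fnm_penalized} with penalty \eqref{eqn:pendef}. For \eqref{eqn:pensat} I would invoke the chain of equalities established just above Lemma~\ref{lem:nm_as_plr}, namely that $h_1^{\mathrm{NM}}$ is minimized over $\bar b$ at $\bar b = S_{f,s}(y)$ with minimum value $-\ell_{\mathrm{NM}}(y;f,s^2)$; substituting $\bar b = S_{f,s}(y)$ into the display above and solving for $\rho_{f,s}(S_{f,s}(y))$ yields \eqref{eqn:pensat} immediately.

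For the derivative identities I would first record Tweedie's formula for $\mathrm{NM}_1(f,s^2)$: differentiating $p(y\mid f, s^2) = \int N(y;b,s^2)\,f(b)\,db$ under the integral sign and dividing through by $p(y\mid f, s^2)$ gives $\ell_{\mathrm{NM}}'(y;f,s^2) = (S_{f,s}(y) - y)/s^2$, i.e.\ $y - S_{f,s}(y) = -s^2\ell_{\mathrm{NM}}'(y;f,s^2)$, which is \eqref{eqn:tweedie}. To obtain \eqref{eqn:pensat-diff} I would differentiate \eqref{eqn:pensat} in $y$ by the chain rule: the left side becomes $\rho_{f,s}'(S_{f,s}(y))\,S_{f,s}'(y)$, while on the right the $-s^2\ell_{\mathrm{NM}}'$ term cancels the $-(y-S_{f,s}(y))$ piece of $\tfrac{d}{dy}[-\tfrac12(y-S_{f,s}(y))^2]$ by Tweedie, leaving $(y-S_{f,s}(y))\,S_{f,s}'(y)$. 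Cancelling the common positive factor $S_{f,s}'(y)$ gives $\rho_{f,s}'(S_{f,s}(y)) = y - S_{f,s}(y)$, and combining this with the Tweedie identity gives the second equality in \eqref{eqn:tweedie}.

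The main obstacle is justifying that differentiation step, since $\rho_{f,s}$ is defined only implicitly through a minimization. I would handle it by using that the shrinkage operator $S_{f,s}$ has the explicit analytic form \eqref{eqn:shrink-gmf} (a finite mixture of linear functions weighted by smooth responsibilities) and is smooth and strictly increasing, so its inverse $T = S_{f,s}^{-1}$ is smooth on a neighborhood of $S_{f,s}(y)$; feeding $y = T(\bar b)$ into \eqref{eqn:pensat} exhibits $\rho_{f,s}$ near $S_{f,s}(y)$ as a composition of smooth maps, so the chain rule applies, and in fact one recovers the global formula $\rho_{f,s}'(\bar b) = S_{f,s}^{-1}(\bar b) - \bar b$, which specializes to the $L_2$-penalty derivative when the prior is normal — a useful sanity check. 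As an alternative to the $y$-differentiation argument, one can instead use first-order stationarity of $\bar b \mapsto h_1^{\mathrm{NM}}(\bar b, f, s^2; y)$ at its interior minimizer $S_{f,s}(y)$, which from \eqref{eqn:fnm_penalized} reads $\tfrac{1}{s^2}(S_{f,s}(y)-y) + \tfrac{1}{s^2}\rho_{f,s}'(S_{f,s}(y)) = 0$; both routes reach \eqref{eqn:pensat-diff}, but the $y$-differentiation route also produces \eqref{eqn:tweedie} along the way.
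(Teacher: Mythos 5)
Your proposal is correct and follows essentially the same route as the paper: the same decomposition $\mathbb{E}_q[(y-b)^2]=(y-\bar b)^2+\mathrm{Var}_q(b)$ to extract the penalty, the same substitution of $\bar b = S_{f,s}(y)$ using the fact that $h_1^{\mathrm{NM}}$ attains $-\ell_{\mathrm{NM}}$ at its minimizer to get \eqref{eqn:pensat}, and Tweedie's formula for \eqref{eqn:tweedie}. The only (harmless) variation is that your primary derivation of \eqref{eqn:pensat-diff} differentiates the identity \eqref{eqn:pensat} in $y$ and cancels $S_{f,s}'(y)$, whereas the paper uses first-order stationarity of $\bar b \mapsto h_1^{\mathrm{NM}}(\bar b,f,s^2;y)$ at its interior minimizer --- the alternative you yourself note at the end --- and your extra care about differentiability of $\rho_{f,s}$ via the smooth, strictly increasing inverse of $S_{f,s}$ is if anything more rigorous than the paper's treatment.
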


\begin{proof}
From \eqref{eqn:nmelbo}, we have
\begin{align*}
\FNM_1(q,f,s^2;y) 
&=
-\frac{1}{2s^2}(y - \mathbb{E}_q(b))^2
- \bigg[ \frac{1}{2} \log(2\pi s^2) +
         \frac{1}{2s^2}\,\mathrm{Var}_q(b) +
         \DKL(q \,\|\, \pprior(f))
         \bigg].
\end{align*}

Expressions \eqref{eqn:fnm_penalized} and \eqref{eqn:pendef} follow
from \eqref{eqn:fnm_b}.

Expression \eqref{eqn:pensat} is obtained by substituting $\bar{b} =
S_{f,s}(y)$ into \eqref{eqn:fnm_penalized} and rearranging, noting
that $h(\bar{b}, f, s^2; y)$ attains its minimum at this $\bar{b}$ and
therefore recovers the marginal log-likelihood,
\begin{equation*}
h^{\mathrm{NM}}(S_{f,s}(y),f,s^2;y) 
= -\ell_{\mathrm{NM}}(y; f, s^2).
\end{equation*}

Expression \eqref{eqn:pensat-diff} is a consequence of the fact that
$h^{\mathrm{NM}}(\bar{b}, f, s^2; y)$ attains its minimum at $\bar{b} =
S_{f,s}(y)$ for any $y \in \mathbb{R}$, that is,
\begin{equation*}
S_{f,s}(y) = \argmin_{\bar{b} \,\in\, \mathbb{R}} \;
h^{\mathrm{NM}}(\bar{b}, f, s^2; y).
\end{equation*}
Therefore, the derivative of \eqref{eqn:fnm_penalized} with respect to
$\bar{b}$ at $\bar{b} = S_{f,s}(y)$ must be zero. Finally,
\eqref{eqn:tweedie} is obtained by applying Tweedie's formula
\citep{efrontweedie}.
\end{proof}

\subsection{VEB as a Penalized Regression Problem}

Here we consider the ELBO for the multiple linear regression model
\eqref{eq:elbo}.
We begin with the following lemma.
\begin{lemma} \label{lem:enorm}
\rm If the distribution $q(\b)$ factorizes as $q(\b) = \prod_{j=1}^p
q_j(b_j)$, then
\begin{equation*}
\mathbb{E}_q[\norm{{\bf r}}^2] = 
\norm{\rbar}^2 + \sum_{j=1}^p d_j \mathrm{Var}_{q_j}(b_j),
\end{equation*}
where $\bbar \triangleq \mathbb{E}_q[\b]$, ${\bf r} \triangleq \y -
\X\b$ and ${\bf r} \triangleq \mathbb{E}_q[{\bf r}] = \y - \X\bbar$.
\end{lemma}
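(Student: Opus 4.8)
The plan is to expand the squared residual norm around its mean and exploit the factorization of $q$ to kill the cross term and diagonalize the quadratic term. First I would write $\mathbf{r} = \y - \X\b = \rbar - \X(\b - \bbar)$, so that
\[
\norm{\mathbf{r}}^2 = \norm{\rbar}^2 - 2\,\rbar^T\X(\b - \bbar) + (\b - \bbar)^T\X^T\X(\b - \bbar).
\]
Taking $\mathbb{E}_q$ of both sides, the middle term vanishes because $\mathbb{E}_q[\b - \bbar] = \mathbf{0}$ by definition of $\bbar$, and $\rbar$, $\X$ are non-random. Hence $\mathbb{E}_q[\norm{\mathbf{r}}^2] = \norm{\rbar}^2 + \mathbb{E}_q[(\b - \bbar)^T\X^T\X(\b - \bbar)]$.

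Next I would rewrite the remaining expectation as a trace: $\mathbb{E}_q[(\b - \bbar)^T\X^T\X(\b - \bbar)] = \tr(\X^T\X\,\mathrm{Cov}_q(\b))$, using the standard identity $\mathbb{E}[\mathbf{u}^T A\mathbf{u}] = \tr(A\,\mathrm{Cov}(\mathbf{u}))$ for a zero-mean random vector $\mathbf{u}$. The key observation — and really the only substantive point — is that because $q(\b) = \prod_{j=1}^p q_j(b_j)$, the coordinates of $\b$ are independent under $q$, so $\mathrm{Cov}_q(\b)$ is the diagonal matrix with entries $\mathrm{Var}_{q_j}(b_j)$. Therefore
\[
\tr(\X^T\X\,\mathrm{Cov}_q(\b)) = \sum_{j=1}^p (\X^T\X)_{jj}\,\mathrm{Var}_{q_j}(b_j) = \sum_{j=1}^p d_j\,\mathrm{Var}_{q_j}(b_j),
\]
since $(\X^T\X)_{jj} = \xj^T\xj = d_j$. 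Combining gives the claimed identity.

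There is no real obstacle here: the argument is a routine bias–variance decomposition, and the mean-field factorization does all the work in reducing the quadratic form to a diagonal sum. If one prefers to avoid trace notation, the same computation can be done coordinatewise by writing $\X(\b-\bbar) = \sum_j \xj(b_j - \bar b_j)$, expanding $\norm{\cdot}^2$, and noting that $\mathbb{E}_q[(b_j - \bar b_j)(b_{j'} - \bar b_{j'})] = 0$ for $j \neq j'$ by independence, leaving only the $j = j'$ terms $\xj^T\xj\,\mathrm{Var}_{q_j}(b_j)$.
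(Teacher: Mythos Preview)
Your proof is correct and follows essentially the same route as the paper: write $\mathbf{r} = \rbar - \X(\b - \bbar)$, expand the square, drop the cross term by $\mathbb{E}_q[\b-\bbar]=\mathbf{0}$, rewrite the quadratic term as $\tr(\X^T\X\,\mathrm{Cov}_q(\b))$, and use the mean-field factorization to reduce this to $\sum_j d_j\,\mathrm{Var}_{q_j}(b_j)$. The only difference is cosmetic---the paper absorbs the vanishing cross term silently rather than writing it out.
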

\begin{proof}
\begin{align*}
\mathbb{E}_q[\norm{{\bf r}}^2]
&= \mathbb{E}_q[\norm{\rbar + \X(\bbar - \b)}^2] \\
&= \norm{\rbar}^2 + \mathbb{E}_q[\norm{\X(\bbar - \b)}^2] \\
&= \norm{\rbar}^2 + 
\mathbb{E}_q[ (\bbar - \b)^T \X^T\X(\bbar - \b)] \\
&= \norm{\rbar}^2 + \tr{\X^T\X\mathrm{Cov}_q(\b)} \\
& = \textstyle \norm{\rbar}^2 + \sum_{j=1}^p d_j {\rm Var}_{q_j}(b_j).
\end{align*}
\end{proof}
In the following proposition, we express the ELBO for the multiple
linear regression model as a penalized loss function.
\begin{proposition} 
\label{prop:elbo_as_plr_appendix}
\rm The objective function $h$ \eqref{eqn:define_elbo_as_plr} can be
written as a penalized loss function,
\begin{equation}
\label{eqn:elbo_as_plr_appendix}
h(\bar{\b}, g, \sigma^2) = 
\frac{1}{2\sigma^2} \norm{\y - \X\bbar}^2 
+ \sum_{j=1}^p \rho_{{g_\sigma},s_j}(\bar{b}_j)/s_j^2
+ \frac{1}{2} \sum_{j=1}^p \log(d_j)
+ \frac{n-p}{2} \log(2\pi\sigma^2),
\end{equation}
using the penalty function $\rho_{f,s}$ defined in \eqref{eqn:pendef},
and defining $s_j^2 \triangleq \sigma^2/d_j$, $j = 1, \ldots, p$. Note
that when $d_j = 1$, $k = 1, \ldots, p$,
\eqref{eqn:elbo_as_plr_appendix} simplifies to
\eqref{eqn:elbo_as_plr_full_expression}.
\end{proposition}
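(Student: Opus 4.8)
The plan is to start from the Bayes-rule form of the ELBO, \eqref{eqn:elbo_bayes_rule}, use Lemma~\ref{lem:enorm} to split the expected residual sum of squares into a mean-residual term plus a sum of per-coordinate variances, and then carry out the inner maximization defining $h$ in \eqref{eqn:define_elbo_as_plr} coordinate by coordinate, recognizing each per-coordinate optimization as exactly the definition of the penalty $\rho_{g_\sigma,s_j}$ in \eqref{eqn:pendef}.

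First I would write, using the Gaussian likelihood and \eqref{eqn:elbo_bayes_rule},
$F(q,g,\sigma^2) = -\tfrac{n}{2}\log(2\pi\sigma^2) - \tfrac{1}{2\sigma^2}\mathbb{E}_q[\norm{\y-\X\b}^2] - \sum_{j=1}^p \DKL(q_j\,\|\,\pprior)$.
Applying Lemma~\ref{lem:enorm} replaces $\mathbb{E}_q[\norm{\y-\X\b}^2]$ by $\norm{\rbar}^2 + \sum_j d_j\mathrm{Var}_{q_j}(b_j)$, so $F$ separates into a piece depending only on $\bbar$, namely $-\tfrac{n}{2}\log(2\pi\sigma^2) - \tfrac{1}{2\sigma^2}\norm{\rbar}^2$, plus a sum over $j$ of functionals $-\tfrac{d_j}{2\sigma^2}\mathrm{Var}_{q_j}(b_j) - \DKL(q_j\,\|\,\pprior)$ that each involve only $q_j$.

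Next, because the variational family is fully factorized and the constraint $\mathbb{E}_q[\b]=\bbar$ is the product of the per-coordinate constraints $\mathbb{E}_{q_j}(b_j)=\bar{b}_j$, the maximization in \eqref{eqn:define_elbo_as_plr} decouples across coordinates. For each $j$ I would observe that $-\max_{q_j:\,\mathbb{E}_{q_j}(b_j)=\bar{b}_j}\{-\tfrac{d_j}{2\sigma^2}\mathrm{Var}_{q_j}(b_j) - \DKL(q_j\,\|\,\pprior)\} = \min_{q_j}\{\tfrac{d_j}{2\sigma^2}\mathrm{Var}_{q_j}(b_j) + \DKL(q_j\,\|\,\pprior)\}$; setting $s_j^2 \triangleq \sigma^2/d_j$ and pulling the $q_j$-independent constant $\tfrac{s_j^2}{2}\log(2\pi s_j^2)$ out of \eqref{eqn:pendef} shows this minimum equals $\rho_{g_\sigma,s_j}(\bar{b}_j)/s_j^2 - \tfrac12\log(2\pi s_j^2) = \rho_{g_\sigma,s_j}(\bar{b}_j)/s_j^2 - \tfrac12\log(2\pi\sigma^2) + \tfrac12\log d_j$. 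Collecting constants then finishes the proof: the $p$ copies of $\tfrac12\log(2\pi\sigma^2)$ combine with $\tfrac{n}{2}\log(2\pi\sigma^2)$ to give $\tfrac{n-p}{2}\log(2\pi\sigma^2)$, and the $\tfrac12\log d_j$ terms give $\tfrac12\sum_j\log d_j$, yielding \eqref{eqn:elbo_as_plr_appendix}; when $d_j=1$ for all $j$ the log-$d_j$ term vanishes and $s_j^2=\sigma^2$, recovering \eqref{eqn:elbo_as_plr_full_expression}.

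I do not anticipate a real obstacle here: the one step meriting care is the decoupling of the constrained maximization across coordinates, but this is immediate once one notes that, after the Lemma~\ref{lem:enorm} reduction, the $q$-dependence of $F$ is a sum of functionals each depending on a single $q_j$ and the mean constraint factorizes accordingly, so no coupling between the $q_j$ remains. The rest is bookkeeping with the definition of $\rho_{f,s}$ in \eqref{eqn:pendef} and the marginal-likelihood identities recorded around Lemma~\ref{lem:nm_as_plr}.
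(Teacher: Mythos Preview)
Your proposal is correct and follows essentially the same route as the paper's own proof: expand the ELBO via \eqref{eqn:elbo_bayes_rule}, apply Lemma~\ref{lem:enorm} to separate the mean-residual term from the per-coordinate variance terms, decouple the constrained optimization over $q$ into independent per-coordinate problems, and identify each with the definition of $\rho_{g_\sigma,s_j}$ in \eqref{eqn:pendef} up to the additive constant $\tfrac{1}{2}\log(2\pi s_j^2)$. Your treatment of the constant bookkeeping and of why the maximization decouples is, if anything, more explicit than the paper's.
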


\begin{proof}
From Lemma~\ref{lem:enorm}, we have
\begin{align*}
h(\bar{\b}, g, \sigma^2) &= 
\frac{n}{2} \log(2\pi\sigma^2) 
+ \frac{1}{2\sigma^2}\mathbb{E}_q[\norm{\y - \X\b}^2]  
+ \sum_{j=1}^p \DKL(q_j \,\|\, \pprior(g_\sigma)) \\
&= \frac{n}{2} \log(2\pi\sigma^2) 
+ \frac{1}{2\sigma^2} \norm{\rbar}^2 
+ \frac{1}{2} \sum_{j=1}^p \mathrm{Var}_{q_j}(b_j)/s_j^2
+ \sum_{j=1}^p \DKL(q_j \,\|\, \pprior(g_{\sigma})).
\end{align*}
Therefore,
\begin{align*}
h(\bar{\b}, g, \sigma^2) &= \max_{q\,:\, \mathbb{E}_q(\b) \,=\, \bbar} 
\; F(q, g, \sigma^2) \\
&= \frac{n}{2} \log (2\pi\sigma^2) 
+ \frac{1}{2\sigma^2} \norm{\rbar}^2 \\
& \qquad + \sum_{j=1}^p  
\frac{1}{s_j^2} \times \bigg\{
\min_{q_j\,:\, \mathbb{E}_{q_j}(b_j) \,=\, \bar{b}_j} 
\textstyle \frac{1}{2}\mathrm{Var}_{q_j}(b_j)
+ s_j^2\DKL(q_j \,\|\, \pprior(g_{\sigma})) \bigg\} \\
&= \frac{n}{2} \log (2\pi\sigma^2) +
\frac{1}{2\sigma^2} \norm{\rbar}^2 
+ \sum_{j=1}^p \frac{1}{s_j^2}  
\bigg[\rho_{{g_\sigma},s_j}(\bar{b}_j) 
- \frac{s_j^2}{2}\log(2\pi s_j^2) \bigg] \\
&= \frac{1}{2\sigma^2} \norm{\rbar}^2 
+ \sum_{j=1}^p \rho_{{g_{\sigma}}, s_j}(\bar{b}_j)/s_j^2 
+ \frac{1}{2} \sum_{j=1}^p \log(d_j)
+ \frac{n-p}{2} \log (2\pi\sigma^2).
\end{align*}
\end{proof}

\section{Additional Results and Proofs}
\label{appendix:proof}

\begin{proposition}[Convergence of cyclic coordinate ascent for VEB] 
\label{prop:conv}
\rm The sequence of iterates
\begin{equation*}
\{q^{(t)}, g^{(t)}, (\sigma^2)^{(t)}\}, \quad t = 0, 1, 2, \ldots,
\end{equation*}
generated by Algorithm~\ref{alg:caisageneral} converge monotonically
to a stationary point of the ELBO, $F$ \eqref{eq:elbo}.
\end{proposition}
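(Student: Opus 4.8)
The plan is to recognise Algorithm~\ref{alg:caisageneral} as cyclic block coordinate ascent of the ELBO $F$ and then appeal to standard convergence theory for such schemes, after first reducing to finitely many variables. For the reduction, note that by Proposition~\ref{prop:caisaderivation}(i) (equivalently Lemma~\ref{lem:subproblem}) the coordinate update of each $q_j$ always returns a factor of the parametric form \eqref{eqn:optimal_form_of_q_oneobs}, so that from iteration $t \ge 1$ onward the running state of the algorithm is described by a finite-dimensional parameter $\theta^{(t)} = (\bbar^{(t)}, \pii^{(t)}, (\sigma^2)^{(t)})$ (the factors $q_j$ being recovered through \eqref{eqn:optimal_form_of_q}), ranging in $\Theta \triangleq \mathbb{R}^p \times \mathbb{S}^K \times \mathbb{R}_{+}$, and $F$ restricted to the associated family is a function $\tilde F(\theta)$ that is real-analytic --- in particular $C^1$ --- on the relative interior of $\Theta$. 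Under this identification Algorithm~\ref{alg:caisageneral} is exactly cyclic block coordinate ascent of $\tilde F$ over the $p+2$ blocks $q_1, \dots, q_p, \pii, \sigma^2$, with each block updated by exact maximisation: the $q_j$-update maximises over $q_j$ (Lemma~\ref{lem:subproblem}), the $g$-update maximises over $\pii$ on $\mathbb{S}^K$, and the $\sigma^2$-update maximises over $\sigma^2$ by the closed form \eqref{eqn:sigma_update_simpler} (Proposition~\ref{prop:caisaderivation}).

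Monotonicity is then immediate: each of the $p+2$ updates maximises $\tilde F$ over one block while holding the others fixed, so $\tilde F(\theta^{(1)}) \le \tilde F(\theta^{(2)}) \le \cdots$, and --- being bounded above, which is part of the boundedness argument below --- the sequence $\tilde F(\theta^{(t)})$ converges. For convergence to a stationary point I would invoke the classical result for cyclic block coordinate ascent with three or more blocks (in the vein of Tseng's and Bertsekas' theorems; see, e.g., \citealt{wright2015coordinate} and references therein), whose hypotheses are (a) $\tilde F$ is $C^1$ on an open set containing the iterates; (b) the sublevel set $\mathcal{L} \triangleq \{\theta \in \Theta : \tilde F(\theta) \ge \tilde F(\theta^{(1)})\}$ is compact and contained in that set; and (c) each block subproblem has a maximiser that is unique along that coordinate. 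Condition (c) holds because the $q_j$-subproblem is uniquely maximised by the normal means posterior (Lemma~\ref{lem:subproblem}), the $\pii$-subproblem maximises the strictly concave cross-entropy $\sum_{j,k}\phi_{jk}\log\pi_k$ on the simplex, and the $\sigma^2$-subproblem has the closed form \eqref{eqn:sigma_update_simpler}. The theorem then yields that every limit point of $\{\theta^{(t)}\}$ is a stationary point of $\tilde F$, hence --- via \eqref{eqn:optimal_form_of_q} --- a stationary point of $F$. To upgrade this to convergence of the whole sequence (as the statement asserts), I would use that $\tilde F$ is real-analytic on the interior of $\Theta$ and invoke the Kurdyka--\L{}ojasiewicz property, which together with $\|\theta^{(t+1)} - \theta^{(t)}\| \to 0$ forces convergence to a single stationary point; alternatively one simply adds that the set of stationary points is discrete.

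The main obstacle --- and the source of the ``mild conditions'' qualifier --- is verifying (b), i.e.\ that the iterates remain in a compact subset of the region where $\tilde F$ is $C^1$. Most of this is in fact automatic. Using the explicit form \eqref{eqn:elbo_as_plr_full_expression} of $-\tilde F$, together with the $\sigma$-dependence of the penalty $\rho_{g_\sigma,\sigma}$ (in particular $\rho_{g_\sigma,\sigma}(\bar b) \sim \bar b^2/(2(1+\sigma_K^2))$ as $|\bar b| \to \infty$, by \eqref{eqn:penderiv}), one shows that for any $\y \ne 0$ one has $\tilde F(\bbar,\pii,\sigma^2) \to -\infty$ as $\sigma^2 \to 0$, as $\sigma^2 \to \infty$, or as $\|\bbar\| \to \infty$; by monotonicity of $\tilde F$ this pins $(\sigma^2)^{(t)}$ into a fixed compact subinterval of $(0,\infty)$ and keeps $\bbar^{(t)}$ bounded, while $\pii^{(t)} \in \mathbb{S}^K$ is bounded trivially. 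The one genuinely delicate point is the boundary of the simplex, where $\tilde F$ fails to be $C^1$ (its entropy terms have infinite gradient as some $\pi_k \to 0$). This is where a mild condition enters: initialising with all $\pi_k > 0$ keeps every iterate in the interior, since $\pi_k^{(t)} = \frac{1}{p} \sum_j \phi_{jk}^{(t-1)}$ with $\phi_{jk}^{(t-1)} > 0$; one then either argues the iterates stay uniformly bounded away from the boundary, or handles the possibility $\pi_k^{(t)} \to 0$ separately by checking directly that any such limit point still satisfies the (boundary) stationarity conditions for $F$ with the constraint $\pi_k \ge 0$ active. I expect the write-up to spend most of its effort precisely on this last boundary point and on the blow-down estimate for $\tilde F$; the block-coordinate-ascent theorem itself is essentially off the shelf.
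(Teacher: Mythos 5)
Your approach is the same as the paper's: reduce the algorithm to cyclic block coordinate ascent over the finitely many blocks $q_1,\dots,q_p,\pii,\sigma^2$ and invoke the classical convergence theorem (the paper cites Proposition~2.7.1 of \citet{Bertsekas9}, with \citealt{tseng2001convergence} as an alternative), whose hypotheses are that $F$ be continuously differentiable and that each block subproblem have a unique maximizer. Where you differ is in how much of the hypothesis-checking you actually carry out. The paper's proof is a single paragraph: it states the theorem's conditions and then simply records a sufficient condition for them to hold ($0<\sigma^2<\infty$, $\pi_k>0$ for all $k$, and a strictly increasing finite grid $0\le\sigma_1^2<\cdots<\sigma_K^2<\infty$), without verifying that the iterates remain in this region, without establishing compactness of sublevel sets or existence of limit points, and without addressing the gap between ``every limit point is stationary'' (which is what the Bertsekas result delivers) and ``the sequence converges to a stationary point'' (which is what the proposition asserts). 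All three of the issues you flag --- the blow-down estimate keeping $(\sigma^2)^{(t)}$ and $\bbar^{(t)}$ in a compact set, the loss of differentiability at the boundary of the simplex as some $\pi_k\to 0$, and the Kurdyka--\L{}ojasiewicz (or discreteness-of-stationary-points) argument needed to upgrade subsequential to full-sequence convergence --- are real and are left implicit in the paper's ``mild conditions.'' So your write-up is not wrong, and in fact it is more complete than the published proof; your prediction that the paper would spend most of its effort on the boundary and boundedness points turns out to be the one inaccuracy, since the paper spends no effort on them at all.
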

\begin{proof}
By Proposition~2.7.1 of \citet{Bertsekas9}, the sequence of iterates
$\{q^{(t)}, g^{(t)}, (\sigma^2)^{(t)}\}$, $t = 0, 1, 2, \ldots$,
generated by Algorithm \ref{alg:caisageneral} converges monotonically
to a stationary point of $F$ provided that $F$ is continuously
differentiable and each coordinate update,
\begin{align*}
q_j^{(t+1)} &= \mathrm{argmax}_{q_j} \, 
F(q_1^{(t+1)},\cdots,q_{j-1}^{(t+1)},q_j,q_{j+1}^{(t)},\cdots,q_p^{(t)}, g
^{(t)},(\sigma^2)^{(t)}), \quad j = 1, \ldots, p \\
g^{(t+1)} &= \mathrm{argmax}_{g \,\in\, \mathcal{G}} \,
F(q^{(t+1)}, g, (\sigma^2)^{(t)}) \\
(\sigma^2)^{(t+1)} &= \mathrm{argmax}_{\sigma^2 \,\in\, \mathbb{R}_{+}} \,
F(q^{(t+1)}, g^{(t+1)}, \sigma^2),
\end{align*}
is finite and uniquely determined. (See also
\citealt{luo1992convergence, tseng2001convergence} for another
treatment of convergence of coordinate ascent under general
conditions.) A sufficient condition for $F$ to be continuously
differentiable and for the coordinate ascent updates (Proposition
\ref{prop:caisaderivation} or Proposition \ref{prop:derivation2}) to
have a unique solution is that $0 < \sigma^2 < \infty$, $\pi_k > 0$
for all $k = 1, \ldots, K$, and $0 \leq \sigma_1^2 < \cdots <
\sigma_K^2 < \infty$.
\end{proof}

\subsection{Proof of Proposition~\ref{prop:orthogonal}}

The ELBO
\begin{equation*}
F(q,g,\sigma^2) = 
\iint q(\b, {\bm\gamma})
\log \bigg\{
\frac{p(\y \mid \X, \b, \sigma^2) \, p(\b, {\bm\gamma} \mid g, \sigma^2)}
     {q(\b, {\bm\gamma})} \bigg\} \, d\b \, d{\bm\gamma}
\end{equation*}
is maximized with respect to $q$ when $q(\b, {\bm\gamma}) \propto p(\y
\mid \X, \b, \sigma^2) \, p(\b, {\bm\gamma} \mid g, \sigma^2)$. This
follows from the equality condition of Jensen's inequality
\citep{jordan1999introduction}. When the columns of $\X$ are
orthogonal, the posterior factorizes over the individual coordinates $j$,
\begin{align*}
p(\b, {\bm\gamma} \mid \X, \y, g, \sigma^2) &\propto
p(\y \mid \X, \b, \sigma^2) \,
p(\b, {\bm\gamma} \mid g, \sigma^2) \\
& \propto \prod_{j=1}^p
\exp\bigg\{-\frac{(b_j - \xj^T \y)^2}{2\sigma^2}\bigg\}
\times p(b_j,\gamma_j \mid g, \sigma^2).
\end{align*}
Therefore, when $\X$ has orthogonal columns, the best $q$, even with
the restriction of being fully factorized \eqref{def:mean-field}, is
able to recover the exact posterior since the exact posterior also
factorizes over the coordinates $j = 1, \ldots, p$.

\subsection{Proof of Proposition~\ref{prop:vpm_solves_an_opt}}

First, we note that
\begin{equation*}
F(\qhat, \ghat, \shat) =
\max_{q \,\in\, \mathcal{Q}} F(q, \ghat, \shat) =
-h(\hat{\b}, \ghat, \shat).
\end{equation*}
Hence, for any $\bbar \in \mathbb{R}^p$, we have
\begin{align*}
-h(\bbar, g, \sigma^2) &= \textstyle
\max_{q \,\in\, \Q, \,\mathbb{E}_q[\b] \,=\, \bbar} \; F(q,g,\sigma^2) \\
&\leq \textstyle \max_{q \,\in\, \mathcal{Q}} \; F(q,g,\sigma^2) \\
&\leq \textstyle \max_{q \,\in\, \mathcal{Q},\,g \,\in\, \mathcal{G},
            \,\sigma^2 \,\in\, \mathcal{T}} \; F(q, g, \sigma^2) \\
&= F(\qhat, \ghat, \shat) \\
&= -h(\hat{\b}, \ghat, \shat).
\end{align*}
This proves that 
\begin{equation*}
\hat{\b}, \ghat, \shat =
\argmin_{\bbar \,\in\, \mathbb{R}^p,\, g \,\in\, \mathcal{G},\,
  \sigma^2 \,\in\, \mathcal{T}} \; h(\bbar, g, \sigma^2).
\end{equation*}

\subsection{Proof of Theorem~\ref{thm:veb_penloglik}}

The proof of the first part of Theorem~\ref{thm:veb_penloglik} follows
immediately from Proposition~\ref{prop:elbo_as_plr_appendix} and Lemma
\ref{lem:nm_as_plr} by letting $d_j = 1$ for $j = 1, \ldots, p$.

The missing piece of the proof is to show that $S_{\rho_{f,\sigma}}(y)
= S_{f,\sigma}(y)$. We start with the definition of $S_{\rho}$ in
\eqref{def:shrinkage_thresholding_operator},
\begin{equation*} 
S_{\rho_{f, \sigma}}(y) = 
\argmin_{b \,\in\, \mathbb{R}}\frac{1}{2}(y - b)^2 + \rho_{f,\sigma}(b).
\end{equation*}
To solve for the argmin on the right-hand side, we differentiate with
respect to $b$ and set the derivative to zero,
\begin{equation*}
\rho_{f,\sigma}'(b) = y - b.
\end{equation*}
From \eqref{eqn:penderiv}, this derivative vanishes when $b =
S_{f,\sigma}(y)$. Therefore, $S_{\rho_{f,\sigma}}(y) =
S_{f,\sigma}(y)$.

\subsection{Mathematical Properties of the Posterior Mean Shrinkage Operator}

\begin{lemma}
\label{lem:property_pm_shrinkage}
\rm Let $f$ be a symmetric unimodal distribution on $\mathbb{R}$ with
a mode at zero, and assume $\sigma^2 > 0$. Then the NM posterior mean
operator $S_{f, \sigma}(y)$ defined in \eqref{eqn:shrink} is
symmetric, non-negative, and non-decreasing on $y \in \mathbb{R}$, and
$S_{f, \sigma}(y) \leq y$ on $y \in (0, \infty)$. That is, $S_{f,
  \sigma}$ is a ``shrinkage operator'' that shrinks towards zero.
\end{lemma}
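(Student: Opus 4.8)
The plan is to work directly from Bayes' rule rather than from the explicit mixture formula \eqref{eqn:shrink-gmf}. Writing $\pi_y(b)$ for the posterior density of $b$ given $y$ under the model \eqref{eqn:nmf}, we have $\pi_y(b) \propto N(y;b,\sigma^2)\,f(b)$ with normalizing constant the marginal $p(y) = \int N(y;b,\sigma^2)\,f(b)\,db$, and $S_{f,\sigma}(y) = \int b\,\pi_y(b)\,db$ by \eqref{eqn:shrink}. (This posterior mean is always finite: $b \mapsto b\,N(y;b,\sigma^2)$ is a bounded continuous function vanishing at $\pm\infty$, and $f$ is a density.) Each of the four assertions then follows from comparing $\pi_y$ with a transformed copy of itself. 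For \emph{symmetry}, substituting $b \mapsto -b$ and using $f(-b)=f(b)$ together with $N(y;b,\sigma^2)=N(-y;-b,\sigma^2)$ gives $p(-y)=p(y)$ and $\pi_{-y}(b)=\pi_y(-b)$, hence $S_{f,\sigma}(-y)=-S_{f,\sigma}(y)$; in particular $S_{f,\sigma}(0)=0$, so it suffices to treat $y>0$ for the rest.

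For \emph{non-negativity} on $(0,\infty)$: when $y>0$ and $b>0$, $\pi_y(b)/\pi_y(-b) = N(y;b,\sigma^2)/N(y;-b,\sigma^2) = e^{2by/\sigma^2}\ge 1$. Splitting the defining integral at $0$ and substituting $b \mapsto -b$ in the negative part gives $S_{f,\sigma}(y) = \int_0^\infty b\,[\pi_y(b)-\pi_y(-b)]\,db \ge 0$. For the \emph{shrinkage bound} $S_{f,\sigma}(y)\le y$ on $(0,\infty)$, I instead compare $\pi_y$ with its reflection about the point $y$, namely $b \mapsto 2y-b$. Since $N(y;b,\sigma^2) = N(y;2y-b,\sigma^2)$, the ratio $\pi_y(b)/\pi_y(2y-b)$ reduces to $f(b)/f(2y-b)$. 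For $b<y$ (with $y>0$) one checks the elementary inequality $|b|\le|2y-b|$ by splitting into the cases $0\le b<y$ and $b<0$; and because a symmetric distribution that is unimodal with mode at $0$ has density a non-increasing function of $|\cdot|$, this yields $f(b)\ge f(2y-b)$, i.e. $\pi_y(b)\ge\pi_y(2y-b)$ for all $b<y$. Substituting $b \mapsto 2y-b$ in the part of the mean-integral over $(y,\infty)$ and using $\int_{-\infty}^y[\pi_y(b)+\pi_y(2y-b)]\,db = 1$ produces $y - S_{f,\sigma}(y) = \int_{-\infty}^y (y-b)\,[\pi_y(b)-\pi_y(2y-b)]\,db \ge 0$.

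For \emph{monotonicity}, I would use a likelihood-ratio argument: for $y_1<y_2$, $\pi_{y_2}(b)/\pi_{y_1}(b) \propto e^{(y_2-y_1)b/\sigma^2}$ is non-decreasing in $b$, so the family $\{\pi_y\}$ has monotone likelihood ratio in $(b,y)$; hence $\pi_{y_2}$ stochastically dominates $\pi_{y_1}$ and therefore $S_{f,\sigma}(y_1) = \mathbb{E}[\,b\mid y_1\,] \le \mathbb{E}[\,b\mid y_2\,] = S_{f,\sigma}(y_2)$. (An alternative is to differentiate under the integral sign, obtaining $\tfrac{d}{dy}S_{f,\sigma}(y) = \sigma^{-2}\mathrm{Var}(b\mid y)\ge 0$, or to invoke Tweedie's formula $S_{f,\sigma}(y) = y + \sigma^2\,(\log p)'(y)$ combined with the classical fact that $p = f * N(0,\sigma^2)$ inherits symmetric unimodality about $0$ from $f$; the likelihood-ratio route has the advantage of needing no regularity on $f$.)

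The only step with real content is the third one: setting up the reflection about $y$, verifying $|b|\le|2y-b|$ on $\{b<y\}$ for $y>0$, and identifying precisely where the hypotheses are used — symmetry alone suffices for the first, second, and fourth assertions, and the unimodality assumption is needed \emph{only} to obtain $f(b)\ge f(2y-b)$ in the shrinkage bound. The remaining care is routine: justifying the change of variables inside each integral, which is immediate from absolute convergence of the posterior mean established at the outset.
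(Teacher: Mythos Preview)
Your proof is correct and takes a genuinely different route from the paper's. The paper represents $f$ as a scale mixture of uniforms via Khintchine's theorem, deduces that the marginal $p(y\mid f,\sigma^2)$ is symmetric unimodal, and then obtains the shrinkage bound $S_{f,\sigma}(y)\le y$ from Tweedie's formula $S_{f,\sigma}(y)=y+\sigma^2\ell_{\mathrm{NM}}'(y)$; monotonicity is handled by conditioning on the Khintchine mixing variable, giving a truncated-normal posterior whose mean is shown to be non-decreasing by differentiating. By contrast, you work entirely at the level of the posterior density $\pi_y$: the shrinkage bound comes from a reflection about the point $y$ together with the elementary inequality $|b|\le|2y-b|$ on $\{b<y\}$, and monotonicity from the monotone likelihood ratio of $\{\pi_y\}$. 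Your argument is more elementary---no Khintchine representation, no Tweedie formula, no differentiation under the integral---and it isolates cleanly where each hypothesis is used (symmetry alone suffices for three of the four claims; unimodality is needed only for $S_{f,\sigma}(y)\le y$), which the paper's proof does not make explicit. The paper's route, on the other hand, yields the unimodality of the marginal as a byproduct and ties into the Tweedie identity already used elsewhere in the paper.
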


\begin{proof}
The marginal likelihood for the NM model \eqref{eqn:nmf} is
\begin{align*}
\ell_{\mathrm{NM}} (y; f, \sigma^2) 
&\triangleq \log p(y \mid f, \sigma^2) \\
&= \textstyle \int p(y \mid b, \sigma^2) \, p(b) \, db \\
&= \textstyle \int N(y; b, \sigma^2) \, f(b) \, db.
\end{align*}
By Khintchine’s representation theorem
\citep{dharmadhikari1988unimodality}, $f$ can be represented as
mixture of uniform distributions,
\begin{equation*}
f(b) = \int_0^\infty \frac{\mathbb{I}\{|b| < t\}}{2t} \, p(t) \, dt
\end{equation*}
for some (possibly improper) univariate mixing density $p(t)$. Let
$p(b \mid t)$ be the density function of the uniform distribution
on $[-t, t]$. Then we have
\begin{align}
p(y \mid f, \sigma^2) 
&= \textstyle \int p(y \mid b,\sigma^2) \, p(b \mid f, \sigma^2) \, db 
\nonumber \\
&= \textstyle \int_0^{\infty} \big[ \int p(y \mid b,\sigma^2) \, 
p(b \mid t) \, db \big] \times p(t) \, dt \nonumber \\
&= \textstyle \int_0^{\infty} p(y \mid \sigma^2,t) \, p(t) \, dt, 
\label{eqn:nm-mixing-density} 
\end{align}
where
\begin{align}
p(y \mid \sigma^2,t) &= 
\textstyle \int p(y \mid b,\sigma^2) \, p(b \mid t) \, db \nonumber \\
&= \frac{1}{2t}
\left[\Phi\left(\frac{t-y}{\sigma}\right) + 
\Phi\left(\frac{t+y}{\sigma}\right) - 1 \right],
\label{eqn:marg-prob-y}
\end{align}
and where $\Phi(x)$ denotes the normal cumulative distribution
function. Note that, from \eqref{eqn:nmf}, $p(y \mid b, \sigma^2) =
N(y; b, \sigma^2)$. Since $\Phi(t + y) + \Phi(t - y)$ is
non-increasing in $y \in (0,\infty)$ for any $t \geq 0$, $p(y \mid
\sigma^2, t)$ is also non-increasing in $y \in (0, \infty)$ for any $t
\geq 0$. This implies that $p(y \mid \sigma^2, t)$ is unimodal with a
mode at zero, and therefore $p(y \mid f, \sigma^2)$ must also be
unimodal with a mode at zero since it is a mixture of unimodal
distributions that all have modes at zero. 

From \eqref{eqn:tweedie}, which was obtained by applying Tweedie's
formula, we have that
\begin{equation*}
S_{f, \sigma}(y) = y + \sigma^2 \ell_{\mathrm{NM}}'(y; f, \sigma) \leq y,
\end{equation*} 
in which the inequality is obtained by noting that
$\ell_{\mathrm{NM}}(y; f, \sigma)$ is non-increasing in $y \in (0,
\infty)$.  And since $\ell_{\mathrm{NM}}(y; f, \sigma^2)$ is
symmetric about zero, the shrinkage operator must be an odd
function; {\em i.e.}, $S_{f, \sigma}(y) = -S_{f, \sigma}(-y)$.

It remains to show that $S_{f, \sigma}(y)$ is non-decreasing on $y \in
\mathbb{R}_{+}$. Since $p(y \mid b, \sigma^2) = N(y; b, \sigma^2)$ and
$p(b \mid t)$ is the uniform distribution on interval $[-t, t]$, the
posterior density is truncated normal:
\begin{align*}
p(b \mid y, \sigma^2, t) &\propto 
p(y \mid b, \sigma^2) \, p(b \mid t) \nonumber \\
&= N_{[-t,t]}(b; y, s^2)
\end{align*}
where $N_{[-t,t]}(x; \mu, s^2) = N(x; \mu, s^2) \, \mathbb{I}\{|x| <
t\}$ denotes the probability density of the normal distribution with
mean $\mu$ and variance $s^2$ truncated to the interval $x \in [-t,
  t]$. The expected value of the truncated normal,
\begin{equation*}
\mathbb{E}[X] = \mu + s \times 
\frac{N(-t; \mu, s^2) - N(t; \mu, s^2)}
     {\Phi((t - \mu)/s) - \Phi(-(t + \mu)/s)},
\end{equation*}
is non-decreasing with respect to $\mu$, for all $\mu \in \mathbb{R}$,
$t > 0$. To show this, the derivative of the
expected value with respect to $\mu$ is always positive:
$\frac{\partial}{\partial \mu} \mathbb{E}[X] = \mathrm{Var}(X)/s^2 >
0$. Therefore, $S_{f, \sigma}$ is a mixture of non-decreasing
functions on $\mathbb{R}_{+}$.
\end{proof}

\def\section{\@startsiction{section}{1}{\z@}{-0.24in}{0.10in}
             {\large\bf\raggedright}}

\bibliography{mr_ash}

\end{document}